\tikzset{elegant/.style={smooth,thick,samples=50,cyan}}
\tikzset{eaxis/.style={->,>=stealth}}
\newif\ifblog
\newif\iftex
\newcommand{\thmref}[1]{Theorem~{\rm \ref{#1}}}
\newcommand{\lemref}[1]{Lemma~{\rm \ref{#1}}}
\def\wv{\widehat{v}}
\def\WV{\widehat{V}}
\def\WH{\widehat{H}}
\def\WI{\widehat{I}}
\def\wu{\widehat{u}}
\def\d{{\rm d}}
\def\P{{\mathbb P}}
\def\E{{\mathbb E}}
\def\Q{{\cal Q}}
\def\p{{\partial}}
\newcommand{\ep}{\varepsilon}
\newcommand{\al}{\alpha}
\newcommand{\nd}{\noindent}
\newcommand{\la}{\lambda}
\newcommand{\La}{\Lambda}
\newcommand{\ga}{\gamma}
\newcommand{\de}{\delta}
\newcommand{\Dt}{\Delta t}
\newcommand{\be}{\begin{eqnarray}}
\newcommand{\ee}{\end{eqnarray}}
\newcommand{\bee}{\begin{eqnarray*}}
\newcommand{\eee}{\end{eqnarray*}}
\newcommand{\rt}{\rightarrow}
\newcommand{\argmax}{\mathop{\rm argmax}\limits}
\newtheorem{theorem}{Theorem}[section]
\newtheorem{lemma}[theorem]{Lemma}
\newtheorem{remark}[theorem]{Remark}
\newenvironment{proof}{\noindent {\sc Proof:}}{\strut\hfill $\Box$} 
\newcommand{\LL}{{\cal L}}
\renewcommand{\geq}{\geqslant}
\renewcommand{\leq}{\leqslant}
\newcommand{\calN}{\mathbb{N}}
\newcommand{\cI}{\mathbb{I}}
\newcommand{\cL}{\mathbb{L}}
\newcommand{\cH}{\mathbb{H}}
\newcommand{\aH}{{\cal H}}
\newcommand{\QLT}{{\cal Q}_{L,T}}
\newcommand{\QT}{{\cal Q}_{T}}
\newcommand{\TQ}{{\cal Q}^{T}}
\newcommand{\ND}{{\cal ND}}
\newcommand{\cD}{{\cal D}}
\newcommand{\PR}{{\cal R}}
\newcommand{\NR}{{\cal NR}}
\newcommand{\NA}{{\cal NA}}
\newcommand{\hatz}{\hat{z}}
\newcommand{\esssup}{\mathrm{ess\:sup\:}}
\newcommand{\vt}{\tau}
\newcommand{\vs}{T-s}
\newcommand{\deathtime}{\theta}
\title{Dynamic optimal reinsurance and dividend-payout\\ in a finite time horizon}
\author{Chonghu Guan\thanks{School of Mathematics, Jiaying University, Meizhou 514015, Guangdong,
China. This author is partially supported by NSF of Guangdong Province of China (No. 2021A1515012031), NNSF of China (No. 11901244), and Guangdong Basic and Applied Basic Research Foundation (No. 2022A1515010263).
 Email: \url{316346917@qq.com}.} \and Zuo Quan Xu\thanks{Department of Applied Mathematics, The Hong Kong Polytechnic University, Kowloon, Hong Kong, China. Partially supported by NSFC (No.11971409), The Hong Kong RGC (GRF No.15202421 and No.15202817), The PolyU-SDU Joint Research Center on Financial Mathematics, The CAS AMSS-PolyU Joint Laboratory of Applied Mathematics, The Hong Kong Polytechnic University. Email: \url{maxu@polyu.edu.hk}.} \and Rui Zhou\thanks{Department of Applied Mathematics, The Hong Kong Polytechnic University, Hong Kong, China. Partially supported by NSFC (No.11971409), The Hong Kong RGC (GRF No.15202421 and No.15202817), The PolyU-SDU Joint Research Center on Financial Mathematics, The CAS AMSS-PolyU Joint Laboratory of Applied Mathematics, The Hong Kong Polytechnic University. Email: \url{rui.zhou@connect.polyu.hk}.
}}
\date{}
\begin{document}
\maketitle

\begin{abstract}
This paper studies a dynamic optimal reinsurance and dividend-payout problem for an insurance company in a finite time horizon. The goal of the company is to maximize the expected cumulative discounted dividend payouts until bankruptcy or maturity which comes earlier. The company is allowed to buy reinsurance contracts dynamically over the whole time horizon to cede its risk exposure with other reinsurance companies. This is a mixed singular-classical control problem and the corresponding Hamilton-Jacobi-Bellman equation is a variational inequality with a fully nonlinear operator and subject to a gradient constraint. We obtain the $C^{2,1}$ smoothness of the value function and a comparison principle for its gradient function by the penalty approximation method so that one can establish an efficient numerical scheme to compute the value function. We find that the surplus-time space can be divided into three non-overlapping regions by a risk-magnitude and time-dependent reinsurance barrier and a time-dependent dividend-payout barrier. The insurance company should be exposed to a higher risk as its surplus increases; be exposed to the entire risk once its surplus upward crosses the reinsurance barrier; and pay out all its reserves exceeding the dividend-payout barrier. The estimated localities of these regions are also provided.
\bigskip

\nd {\bf Keywords.} Optimal reinsurance; optimal dividend; free boundary problem; stochastic optimal control; mixed stochastic control.

\bigskip
\nd {\bf Mathematics Subject Classification.}
91G05; 
35R35; 
91B70; 

\end{abstract}

\setlength{\baselineskip}{0.25in}

\section{Introduction}
A fundamental goal of an insurance company (also called insurer or ceding company) is to improve its solvency and stability. This goal can be reached by choosing a good dividend-payout and reinsurance strategy, that is, determining how many of the insurance company's reserves to be paid out as dividends to its shareholders and the proportion of its risk exposure to be shared with other reinsurance companies (also called reinsurers or ceded companies). The first strategy requires a trade-off between the insurance company and its shareholders. Shareholders would react positively when the dividend payouts increase, however, paying out dividends would also reduce the future reserves that are essential for the survival of the company. Due to the solvency capital requirements for large losses in emergency incidents, such as the ``Black Swan" event, the outbreak of COVID-19, the insurance company tends to cede part of its risk exposure to other reinsurance companies at the expense of reinsurance premiums. So the second strategy comes in and it requires a trade-off between the insurance and reinsurance companies. The reinsurance companies promise to cover certain part of the insurance company's risk exposure, increasing the insurance company's survival chance; meanwhile, the insurance company has to pay reinsurance premiums to the reinsurance companies, decreasing the survival chance. This paper considers a single model that takes both the dividend-payout and reinsurance strategies into consideration. Our objective is to find the optimal strategies for the insurance company so as to maximize its expected cumulative discounted dividend payouts until its bankruptcy or a given maturity time which comes earlier.

The optimal dividend-payout problem and reinsurance is well studied in the mathematical insurance literature; see, e.g., \cite{TZ98, Ta00, WY10}. As the dynamic surplus process must be stochastic and depend on the insurance company's strategies, the problem usually boils down to a stochastic control problem. Two types of objectives for the insurance company are widely considered: one is to minimize the ruin probability, e.g., \cite{DW96, Ce97, Sc04}, and the other is to maximize dividend payouts until bankruptcy, e.g., \cite{AT09, CGY06, JP12, LLWX20}. Many existing studies on this topic show that it is optimal to pay out dividends according to a band strategy; see Azcue and Muler \cite{PN14} for a complete exposition on the topic. For instance, Taksar \cite{Ta00} considered a model in which the surplus of an insurance company follows a controlled diffusion process and the insurance company pays out dividends in two different cases. In the first case, the dividend-payout rate is constrained to a bounded interval $[0,l]$. The optimal dividend-payout strategy turns out to be an ``all or nothing" policy with respect to (w.r.t., for short) the dividend rate. That is, paying out dividends at the minimum rate 0 if the surplus is lower than a threshold, and paying out at the maximum rate $\ell$ otherwise. In the second case, the dividend-payout rate is unbounded. The optimal strategy is to keep the surplus under some barrier, that is, paying out all reserves exceeding the barrier as dividends, and doing nothing under the barrier. The optimal reinsurance strategy, which clearly depends on the surplus level, suggests that the insurance company should be exposed to a higher risk as its surplus increases. Along this line of research, other factors including, but not limited to, liability, regime switching, have also been taken into account in model formulation; see \cite{TZ98,WY10}.

When study an optimal dividend problem, people often transform it into an optimal stopping problem where the value function is the derivative of the original problem's value function.
For instance, De Angelis and Ekstrom \cite{AE17} linked a finite time optimal dividend problem to an optimal stopping problem for a diffusion reflected at 0. The spatial and time regularities of the value function of the optimal stopping problem are established, so that they can show the value function of the original optimal dividend problem is the unique classical solution of a suitable Hamilton-Jacobi-Bellman equation. This work was extended to the partial information case in \cite{A15}. Following the ideas developed in \cite{A15,AE17}, Bandini, et al., \cite{BAFG20} introduced the stochastic discounting into the model, leading to a two-dimensional spatial value function. Ferrari and Schuhmann \cite{FS19} linked an optimal dividend problem with capital injections to an optimal stopping problem for a drifted Brownian motion that is absorbed at the origin. They showed that whenever the optimal stopping rule is triggered by a time-dependent boundary, the value function of the optimal stopping problem gives the gradient function (i.e., the derivative function of the value function) of the optimal dividend problem. The aforementioned paper mainly used probabilistic arguments.

Insurance market has grown furious in recent years. Insurance companies, especially the global insurance companies, such as AIA Group Ltd., AIG, China Life Insurance Company, operate diverse businesses and offer insurance products on various term basis, such as property and casualty insurance, life insurance, health insurance. In this regard, it is natural for insurance companies to consider their total risk exposure as a combination of different types of risks. Mathematically speaking, the total risk exposure may take any, discrete or continuous, probability distributions. Similarly, but more importantly, the reinsurance strategies should also be based on different types of risks, resulting in non-classical complicated reinsurance policies. Although the optimal reinsurance problem has been extensively investigated, most of the existing literature analyzes only for typical reinsurance policies such as proportional and excess of loss reinsurance; see, e.g., \cite{Ta00, SH01,TM03,HV03, HHL08, HT10, LLWX20}. This paper investigates an optimal reinsurance and dividend-payout problem, which brings more practical features. We do not restrict ourself to these typical reinsurance policies. We consider a controlled diffusion surplus process, which is a good approximation of the classical Cram\'{e}r-Lundberg process as well-justified by Grandell \cite{Gr12}. A closest model to this paper is considered by Guan, Yi and Chen \cite{GY19(2)}, where the risk control model is relatively simple and the type of reinsurance policy is constrained to the proportional one. It turns out that the reinsurance scheme is restrictive. Tan, et al., \cite{TWWZ20} considered a similar infinite time horizon problem, so the corresponding Hamilton-Jacobi-Bellman (HJB) equation is an ordinary differential equation (ODE), which is much easier to handle than the partial differential equation (PDE) as in our case.

In our model, both the drift and volatility of the controlled surplus process depend on the reinsurance policy. The target of the insurance company is to maximize the expected cumulative discounted dividend payouts until bankruptcy or a given maturity which comes earlier. The problem is a mixed singular-classical stochastic control problem.
The model has the following features: First, we do not confine the reinsurance contracts to be some particular ones such as proportional reinsurance or excess of loss reinsurance. The insurance company can freely choose its reinsurance policy of any type subject to the expected value premium principle. Second, the reinsurance contracts are chosen dynamically depending on the surplus level. It turns out that the optimal reinsurance policy is a feedback one that depends on the surplus level and time. Third, the insurance claims can admit any probability distribution subject to a tail constraint. Especially for bounded claims, we have a better understanding of the optimal reinsurance strategy. Finally, we consider a finite time horizon problem leading to an extremely challenging fully nonlinear HJB equation subject to a gradient constraint.
\par 
The PDE/ODE method (such as the viscosity solution method) is also often adopted to study singular and classical control problems, especially for finite time horizon problems. Compared to the probabilistic method, one of the main advantages of this method is that one can freely transform the value function to another PDE that is easier to study. The new PDE does not necessarily have a stochastic control background. For instance, \cite{G13} made a standard exponential transformation to get a backward heat equation in a symmetrical region. But his method cannot guarantee the smoothness of the function, so the author considered approximated problems. The value functions of the approximated problems are guaranteed to be smooth so that they can be computed by the principle of smooth fit and Green’s function methodology. Eventually $\ep$-optimal strategies for the original problem are constructed. The author further studied the behavior of the free boundary in \cite{G14} and extended the model subject to a ruin probability constraint in \cite{G15}. 
\par 
We now highlight some mathematical contributions of this paper. Same as many existing optimal dividend models, the HJB equation of our problem turns out be a variational inequality problem subject to a gradient constraint. However, since the reinsurance policies can be time-dependently chosen, there is a functional optimization problem appearing in the operator of the HJB equation, which makes the operator a fully nonlinear one. By contrast, De Angelis and Ekstrom \cite{AE17} considered a model without such reinsurance policies, so the operator is linear, and consequently, the gradient function can be naturally linked to an optimal stopping problem for a diffusion process. In our case, the operator is fully nonlinear so that we cannot find such an underlying process. This prevents us to link the gradient function to some optimal stopping problem, so the ideas developed in \cite{A15} and \cite{AE17} cannot be applied to our problem. To overcome this difficulty, we adopt a pure PDE approach to study the HJB equation. Inspired by \cite{DY09a, DY09b, DXZ10}, we find out that the gradient function satisfies an obstacle PDE. The latter is then studied by the penalty approximation method. Different from \cite{G13, G14, G15}, we derive a comparison principle and establish necessary properties (such as $C^{2,1}$ smoothness and uniqueness of the value function) for the original full nonlinear HJB equation. As a byproduct, one can compute the value function as well as the optimal strategies for our problem by establishing an efficient numerical scheme to solve the approximation PDE. Our approach requires a lot of deep results in PDE and functional analysis, such as the Leray-Schauder fixed point theorem, the Sobolev embedding theorem, the $C^{\al,\frac{\al}{2}}$ estimation, the Schauder estimation, and the comparison principle for nonlinear PDEs.

\par
Our model provides a lot of economic insights as well. We show that there is a smooth, time-dependent, dividend-payout barrier that divides the surplus-time space into a no-dividend-payout region and a dividend-payout region. The insurance company should pay out all its reserves exceeding the dividend-payout barrier (that is, all the reserves in the dividend-payout region). Furthermore, we find a risk-magnitude and time-dependent smooth reinsurance barrier that divides the no-dividend-payout region into a reinsurance-covered region and a reinsurance-uncovered region, in an increasing order of the surplus. Therefore the reinsurance is an excess of loss reinsurance. Economically speaking, when faced with the same magnitude of risk, the insurance companies with higher surpluses tend not to cede their risks to other reinsurance companies. In other words, as the magnitude of the risk is getting smaller, the reinsurance-covered region is shrinking (i.e., less insurance companies with different surpluses tend to cover the risks of this magnitude); whereas, the reinsurance-uncovered region is expanding. The former region disappears when the magnitude of the risk is smaller than an explicitly given constant (namely, all the insurance companies choose to cover their entire risks by themselves); by contrast, the latter never vanishes. The insurance company should be exposed to a higher risk as its surplus increases in the reinsurance-covered region; be exposed to the entire risk once its surplus falls into the reinsurance-uncovered region. We also provide accurate estimations for the localities of these regions. Particularly, when the claims have a bounded distribution, we show that there is a uniform non-action region, in which the insurance company should be exposed to the entire risk and not pay out dividends.

The reminder of the paper is organized as follows. In Section 2, we formulate an optimal reinsurance and dividend-payout problem. Section 3 poses the corresponding HJB equation and gives a verification theorem. The properties such as the existence and uniqueness of a classical solution to the HJB problem are also provided. Section 4 is devoted to the study of the reinsurance, reinsurance-uncovered region and dividend-payout region. The proofs for our main results are given in the appendices.


\section{Model Formulation}\label{sec:model}
\setcounter{equation}{0}

This paper investigates an optimal reinsurance and dividend-payout problem for an insurance company. We first need to model the cash reserves of the company, let us start from the classical Cram\'er-Lundberg model.

In the classical Cram\'er-Lundberg model, there are two components that affect the cash reserve (also called surplus) dynamics. The first one is the receiving payments of premiums from the policyholders at a constant rate $p$ continuously. The other is the outgoing payments for insurance claims. If we denote the total number of claims received until time $t$ by $\calN_t$ and the size of the $i$th claim by $Z_i$, then $R_t$, the company's surplus at time $t$, is given by
\begin{equation}\label{Rt1}
R_t=R_0+p t-\sum_{i=1}^{\calN_t}Z_i,
\end{equation}
where $\{\calN_t\}_{t\geq 0}$ is a Poisson process with intensity $1$, all the claims $Z_i$, $i=1,2,\cdots,$ are independent and identically distributed random variables, and they are independent of the Poisson process.

Sometimes, the insurance company needs to protect itself by sharing its risk exposure with other reinsurance companies. The insurance company buys reinsurance contracts from reinsurance companies dynamically. Given a reinsurance contract $I(\cdot)$, the insurance company can get a compensate $I(z)$ when the real loss magnitude is $z$. This function $I(\cdot)$ is known as the ceded loss function, and $H(z)\equiv z-I(z)$ is known as the retained loss function. The insurance company's reinsurance policy (or strategy) consists of purchasing a series of reinsurance contracts $\{I_{t}\}_{t\geq 0}$ over time, where $I_t$ denotes the reinsurance contract purchased at time $t$. Note that the reinsurance contracts are dynamically purchased by the insurance company, so they are usually time and surplus dependent.

The presence of reinsurance contracts modifies the risk exposure of the insurance company. It distorts the incoming and outgoing cash flow of the insurance company's surplus process \eqref{Rt1}. As well-justified by Grandell \cite{Gr12}, the surplus process $R_t$ can be approximated by the following diffusion process
\begin{equation}\label{Rt3}
\d R_t=\big(p-p(I_t)-\E_{R_{t-}}[Z_1-I_t(Z_1)]\big)\d t+\sqrt{\E_{R_{t-}}[(Z_1-I_t(Z_1))^2]}\;\d W_t,
\end{equation}
where $\{W_t\}_{t\geq0}$ is a standard Brownian motion independent of the random claims $Z_{i}$, $\E_{R_{t-}}[\;\cdot\;]=\E[\;\cdot\;|\;R_{t-}]$, and $p(I_{t})$ denotes the reinsurance premium corresponding to the reinsurance contract $I_{t}$. Remark that the contract $I_{t}$ may depend on the surplus level up to $t$, so it is stochastic.

In this paper, we consider the expected value premium principles for both insurance and reinsurance contracts which are given by
\[
p=(1+\de)\E [Z_1],\quad
p(I_t)=(1+\rho)\E_{R_{t-}} [I_t(Z_1)],\]
where $\de$, $\rho>0$ are called safety loadings of the insurance premium and reinsurance premium. This is fundamental to the insurance pricing as it stipulates that the insurer/reinsurer has a positive safety loading on the underwritten risk. In this case, we can rewrite \eqref{Rt3} as
\begin{equation*}
\d R_t=\big(-\ga+\rho\E_{R_{t-}}[Z_1-I_t(Z_1)]\big)\d t+\sqrt{\E_{R_{t-}}[(Z_1-I_t(Z_1))^2]}\;\d W_t,
\end{equation*}
with $\ga:=(\rho-\de)\E [Z_1]$. We impose that $\rho>\de$, i.e., $\ga>0$, to ensure that the reinsurance is non-cheap. If reinsurance is too cheap, that is $\rho<\de$, then the insurance company can simply eliminate its risk exposure by ceding all the incoming claims to reinsurance companies, reaping the profit of $\de-\rho>0$ with a zero ruin probability. Thanks to time rescaling, we will assume $\rho=1$ throughout this paper.

In this paper, we assume that the insurance company will pay out part of its surplus as dividends to its shareholders. Let $L_t$ be the cumulative dividend extracted from the surplus process until $t$, which is a non-decreasing c\`{a}dl\`{a}g (i.e., right continuous with left limits) process. It is chosen by the insurance company according to its surplus level. Then the new surplus process $\{R_s\}_{s\geq t-}$ beginning at time $t-$ with an initial value $x$ satisfies the following dynamics 
\begin{equation}\label{Rt}
\begin{cases}
\d R_s=\big(\!\!-\ga+\E_{R_{s-}} [H_s(Z_1)]\big)\d s+\sqrt{\E_{R_{s-}} [H_s^2(Z_1)]}\;\d W_s-\d L_s,\quad s\geq t,\bigskip\\
R_{t-}=x> 0,
\end{cases}
\end{equation}
thanks to $H(z)\equiv z-I(z)$.
This is also a c\`{a}dl\`{a}g process, which jumps at the same time as $L$ does with the same jump size in the opposite direction, namely $R_{s}-R_{s-}=-(L_{s}-L_{s-})$ for any $s\geq t$. Define the ruin time of the insurance company as
\begin{equation}\label{tau}
\deathtime:=\inf\big\{s\geq t\;|\;R_s\leq0\big\}.
\end{equation}
The insurance company is not allowed to pay out dividends more than the existing surplus, so $L_{s}-L_{s-}\leq R_{s-}$ at any time $s$. As a consequence, $R_{s}=R_{s-}-(L_{s}-L_{s-})\geq 0 $. In particular, it implies that the surplus of the insurance company is zero at the ruin time, namely $R_{\deathtime}=0$.

The objective of our optimal reinsurance and dividend-payout model is to find a retained loss policy $\cH^t=\{H_s\}_{s\geq t-}$ (or equivalently, a reinsurance policy $\cI^t=\{I_s\}_{s\geq t-}$) and a dividend-payout policy $\cL^t=\{L_s\}_{s\geq t-}$ to maximize the expectation of discounted cumulative dividend payouts until bankruptcy or a given maturity $T>0$ which arrives earlier for the insurance company. The value function of our problem is defined as
\begin{equation}\label{valuefunction}
V(x,t)=\sup\limits_{\cH^t,\cL^t}\E\Bigg[\int_{t-}^{T\wedge\deathtime} e^{-c(s-t)}\d L_s\;\bigg|\;R_{t-}=x\Bigg],\quad x>0, \quad 0\leq t\leq T,
\end{equation}
where $c$ is a positive discount factor and the retained loss function (or equivalently, the ceded function) is subject to the constraint
\[
0\leq H_s(Z)\leq Z,\quad s\in[t-,T].
\]
This is a mixed singular-classical control problem.

In the rest of this paper, we will investigate the value function and provide the optimal reinsurance and dividend-payout strategies for problem \eqref{valuefunction}.


\section{The HJB equation and Verification Theorem}\label{sec:HJB}
\setcounter{equation}{0}

We now study problem \eqref{valuefunction} by the dynamic programming principle. To this end, we introduce the following variational inequality
\begin{align}\label{v_pb00}
\begin{cases}
\displaystyle\min\big\{v_t-{\cal L}v,\;v_x-1 \big\}=0,\quad {\rm in}\quad \Q_T:=(0,+\infty)\times(0,T],\medskip\\
v(0,t)=0,\quad 0<t\leq T,\medskip\\
v(x,0)=x,\quad x>0,
\end{cases}
\end{align}
where
\begin{align}\label{calL}
{\cal L}v&:=\sup\limits_{H\in \aH}\left(\frac{ v_{xx}}{2}\int_0^\infty H(z)^2 \d F(z)+v_x\int_0^\infty H(z) \d F(z) \right)-\ga v_x-cv,\\
\aH &:=\big\{H:[0,\infty)\to[0,\infty)\;|\;0\leq H(z)\leq z\big\},\nonumber
\end{align}
and $F(\cdot)$ denotes the common cumulative distribution function of the claims with $F(0-)=0$ (due to the non-negativity of the claims).
The variational inequality \eqref{v_pb00} is indeed the (time-reversed) HJB equation for our problem \eqref{valuefunction}.

The classical approach to linking an HJB equation to a control problem like \eqref{valuefunction} is using the theory of viscosity solution (see \cite{YZ99}). But this theory requires one to show the uniqueness of the viscosity solution, which is an extremely challenging task for us. Moreover, this approach usually cannot provide a classical solution to the HJB equation. In this paper, we adopt a different approach. We first show that \eqref{v_pb00} has a classical solution by pure PDE methods, and then show that the solution is indeed the value function of problem \eqref{valuefunction} by a verification argument. The first result is shown by the penalty approximation method in PDE, so one can compute the value function as well as the optimal strategies for problem \eqref{valuefunction} by solving the approximation PDE numerically. This is an advantage of our method compared to other probabilistic arguments or the viscosity solution approach. 

Throughout this paper, we put the following technical assumption:
\begin{align}\label{distributionF}
\textrm{ $z^3(1-F(z))$ is bounded on $[0,\infty)$.}
\end{align}
Although this assumption may be relaxed slightly, we will not pursuit this direction in this paper. 
Please note that this assumption can be satisfied even if $F(\cdot)$ is discontinuous. 

\begin{theorem}\label{thm:v}
The variational inequality \eqref{v_pb00} has a unique solution $v\in C^{2,1}\big(\overline{\Q_T}\setminus \{(0,0)\}\big)\bigcap C\big(\overline{\Q_T}\big)$ that satisfies
\begin{align}\label{vx}
&v_x\geq1,\bigskip\\\label{vt}
&v_t\geq0,\bigskip\\\label{vxx}
&v_{xx}\leq 0,\bigskip\\\label{vxxx}
&v_{xxx}\geq 0\hbox{ in the weak sense},\bigskip\\\label{vxt}
&v_{xt}\geq 0,\bigskip\\\label{vga}
&\la v_x+v_{xx}\geq 0,
\end{align}
where $\la$ is the unique positive root of the function $f$ defined by \eqref{f}.
\end{theorem}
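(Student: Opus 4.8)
The plan is to construct the solution by the penalty approximation method and extract the stated inequalities from uniform a priori estimates on the approximating family. First I would regularize problem \eqref{v_pb00}: replace the gradient constraint $v_x - 1 \geq 0$ by a penalty term $\beta_\e(v_x - 1)$, where $\beta_\e$ is a smooth, nonpositive, nondecreasing penalty function with $\beta_\e(s)\to -\infty$ for $s<0$ and $\beta_\e(s)\to 0$ for $s>0$ as $\e\to 0$, and also smooth/truncate the initial datum $x$ and the domain to a bounded strip $(0,N)\times(0,T]$ with suitable boundary conditions at $x=N$. One also mollifies the (possibly discontinuous) distribution $F$. Because the fully nonlinear operator ${\cal L}$ involves the functional supremum over $\aH$, I would first need to understand the maximizer: for fixed $(v_x, v_{xx})$ with $v_{xx}\leq 0$ and $v_x\geq 0$, the integrand $\tfrac12 v_{xx} H(z)^2 + v_x H(z)$ is a downward parabola in $H(z)$ maximized at $H(z) = -v_x/v_{xx}$, so the pointwise optimal retained loss is $H^*(z) = z\wedge(-v_x/v_{xx})$, a genuine excess-of-loss form. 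Substituting this gives an explicit, reasonably regular nonlinearity in $(v_x, v_{xx})$, which makes ${\cal L}$ amenable to Schauder/$C^{\al,\al/2}$ theory. Solvability of the penalized quasilinear parabolic equation is then obtained via the Leray–Schauder fixed point theorem, yielding a smooth solution $v^\e$ for each $\e$.

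Next I would derive the six inequalities \eqref{vx}–\eqref{vga} at the level of $v^\e$, uniformly in $\e$, by the maximum principle applied to $v^\e$ and to its derivatives (which solve the linearized equations obtained by differentiating the penalized PDE in $x$ and $t$). Concretely: $v^\e_x \geq 1$ follows because at an interior minimum of $v^\e_x - 1$ the penalty term forces $\beta_\e$ to be very negative, contradicting the equation unless $v^\e_x - 1\geq 0$; this is the standard mechanism of penalization and gives \eqref{vx}. Differentiating in $t$ and using $v^\e(x,0)=$ (smoothed) $x$ together with the sign of $v^\e_t$ on the parabolic boundary gives \eqref{vt}; concavity \eqref{vxx} comes from differentiating twice in $x$ and checking boundary signs, using that the initial datum is affine (hence $v^\e_{xx}(\cdot,0)\leq 0$) and that the nonlinearity, evaluated along $H^*$, is structurally concave-preserving. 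The weak convexity of $v^\e_x$, i.e. \eqref{vxxx}, and the monotonicity \eqref{vxt} follow similarly by differentiating the equation for $v^\e_x$ and invoking the maximum principle; the key is that the linearized operator acting on $v^\e_{xx}$, $v^\e_{xt}$, $v^\e_{xxx}$ has no bad zeroth-order term after accounting for $\beta_\e'\geq 0$. Finally \eqref{vga} — the bound $\la v^\e_x + v^\e_{xx}\geq 0$ with $\la$ the positive root of $f$ in \eqref{f} — I would prove by showing that the quantity $w:=\la v^\e_x + v^\e_{xx}$ satisfies a linear parabolic inequality with the right sign on the parabolic boundary (here $\la$ is chosen precisely so that the constant/zeroth-order coefficient produced by the nonlinearity and the discount $c$ lands on the correct side), again closing via the maximum principle.

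With the inequalities in hand uniformly in $\e$, I would pass to the limit: the bounds $1\leq v^\e_x$, $v^\e_{xx}\leq 0$, $v^\e_t\geq 0$ give uniform $W^{2,1}_{p,loc}$ (or $C^{\al,\al/2}$) estimates on $v^\e$ via the $C^{\al,\al/2}$ and interior Schauder estimates, so along a subsequence $v^\e\to v$ in $C^{2,1}_{loc}$ of $\overline{\Q_T}\setminus\{(0,0)\}$ and in $C(\overline{\Q_T})$; the limit $v$ solves \eqref{v_pb00} (the penalty term, bounded between $v_x-1\geq 0$ forcing, converges to the complementarity $\min\{v_t-{\cal L}v,\, v_x-1\}=0$), inherits all of \eqref{vx}–\eqref{vga}, and matches the boundary/initial data. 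Uniqueness of the $C^{2,1}$ solution I would obtain from a comparison principle for the fully nonlinear operator $\min\{v_t-{\cal L}v, v_x-1\}$: given two solutions, standard doubling-of-variables or a direct maximum-principle argument on their difference (using concavity to control the second-order nonlinearity and $c>0$ for the zeroth-order sign) forces them to coincide.

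The main obstacle I anticipate is the regularity and limiting analysis near the corner $(0,0)$ and, more seriously, handling the fully nonlinear operator ${\cal L}$ arising from the functional optimization over $\aH$: one must verify that the substitution $H^*(z)=z\wedge(-v_x/v_{xx})$ produces a nonlinearity that is uniformly parabolic and sufficiently smooth on the relevant range of $(v_x,v_{xx})$ — in particular the degeneracy as $v_{xx}\to 0^-$ (where $H^*\to z$, the no-reinsurance regime) needs care, and this is exactly where assumption \eqref{distributionF} on $z^3(1-F(z))$ enters, to control $\int_0^\infty (z\wedge k)^2\,\d F(z)$ and its derivatives in $k$ uniformly. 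Establishing \eqref{vga} with the sharp constant $\la$ from \eqref{f} is the other delicate point, since it is this inequality that will later separate the reinsurance-covered and reinsurance-uncovered regions, so the choice of $\la$ must be dictated by the structure of the nonlinearity along $H^*$ rather than guessed.
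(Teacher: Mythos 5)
Your proposal correctly identifies the penalization strategy, the excess-of-loss form of the optimizer $H^*(z)=z\wedge(-v_x/v_{xx})$, the role of assumption \eqref{distributionF}, and the general plan (uniform a priori bounds, pass to the limit, comparison principle for uniqueness). But it diverges from the paper on the single most load-bearing structural step, and the divergence creates a real gap.

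The paper does \emph{not} penalize the gradient constraint in the $v$-equation. It first observes, using \eqref{Ap}--\eqref{AB}, that $\p_x(\mathcal{L}v)=\mathcal{T}v_x$, i.e.\ the $x$-derivative of the fully nonlinear operator is the quasilinear operator $\mathcal{T}$ acting on $u=v_x$. It also derives a Robin condition $\bigl(\la u+u_x\bigr)(0,t)=0$ from requiring the equation $v_t-\mathcal{L}v=0$ to hold at $x=0$ with $v(0,t)=0$ (see \eqref{Lv0}--\eqref{LC}). It then penalizes the resulting \emph{obstacle} problem for $u$: $u_t-\mathcal{T}u+\beta_\ep(u-1)=0$ with the Robin boundary condition, where $\beta_\ep$ acts on the \emph{value} $u-1$, not on a derivative of $v^\e$. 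Finally $v$ is recovered by $v(x,t)=\int_0^x u(y,t)\,\d y$.

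Your version penalizes $v_t-\mathcal{L}v+\beta_\e(v_x-1)=0$ directly. This is a genuinely harder object: after differentiating in $x$ (which you must do to say anything about $v^\e_x$), the penalty becomes the \emph{first-order} term $\beta_\e'(v^\e_x-1)v^\e_{xx}$, not the zeroth-order term $\beta_\ep(u-1)$ of the paper. Your claimed mechanism for \eqref{vx} --- ``at an interior minimum of $v^\e_x-1$ the penalty term forces $\beta_\e$ to be very negative, contradicting the equation'' --- does not close: $\beta_\e(v^\e_x-1)$ sits in the equation for $v^\e$, not for $v^\e_x$, so it carries no sign information at an interior minimum of $v^\e_x$; and in the differentiated equation the penalty contribution $\beta_\e'(v^\e_x-1)v^\e_{xx}$ vanishes at such a point (where $v^\e_{xx}=0$). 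In the paper's formulation, $v^\e_x\geq 1$ becomes the statement $u^\ep\geq 1$, which is proved cleanly by comparing with the constant subsolution $\phi\equiv 1$ for the penalized $u$-equation (\lemref{thm:ue}); that comparison is unavailable to you because your penalty is not a function of $u^\ep$ alone.

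A second gap: you never derive the Robin boundary condition $\la u+u_x=0$ at $x=0$, yet it is indispensable. The estimate \eqref{vga} is exactly the assertion $\nu:=-u/u_x\geq 1/\la$, and the paper proves it (\lemref{prop:ue_mix}) by showing $\nu^\ep(0,t)=1/\la$, $\nu^\ep_x(0,t)>0$, and $\nu^\ep_x\geq 0$ whenever $\nu^\ep\geq 1/\la$, propagating the boundary value inward. Your plan to apply the parabolic maximum principle to $w=\la v^\e_x+v^\e_{xx}$ directly has no source of the sign of $w$ on $\{x=0\}$ without this Robin condition. Relatedly, \eqref{vxt} in the paper is just $u^\ep_t\geq 0$, obtained by comparing $u^\ep$ with its time-shift via the comparison principle for the penalized $u$-equation (\lemref{prop:utux}); this again uses the Robin condition and the monotonicity of the auxiliary compatibility function $f_\ep$ in $t$, neither of which appears in your setup. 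In short, the transformation to the obstacle problem for $u$ with the Robin boundary condition is not a convenience but the engine of all six inequalities; the direct-penalty route you sketch does not produce them by the arguments you give.
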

The proof is based on pure PDE methods, very delicate and long, so we leave it in Appendix \ref{sec:proof1}.

\begin{theorem}[Verification Theorem]
\label{thm:averi}
Suppose $v\in C^{2,1}\big(\overline{\Q_T}\setminus\{(0,0)\}\big)\bigcap C\big(\overline{\Q_T}\big)$ is increasing and concave w.r.t. $x$ and satisfies \eqref{v_pb00}. Then the value function of the optimal reinsurance and dividend-payout problem \eqref{valuefunction} is given by
\be\label{V1}
V(x,t)=v(x,T-t).
\ee
Moreover, the optimal ceded loss policy $I^{*}$ is given by a feedback control of the loss and surplus:
\[I_{s}^{*}(z, R^{*}_{s-})=\max\left\{0,\;z+\frac{v_{x}(R^{*}_{s-},T-s)}{v_{xx}(R^{*}_{s-},T-s)}\right\},\]
and the optimal dividend-payout strategy $L^*$ for problem \eqref{valuefunction} is such that 
\[\begin{cases}
L^*_s-L^*_{s-}=R^{*}_{s-}-d^{*}(T-s), & \hbox{if}\quad R^{*}_{s-}>d^{*}(T-s);\\
L^*_s-L^*_{s-}=0, & \hbox{if}\quad R^{*}_{s-}\leq d^{*}(T-s),
\end{cases}\]
where the payout free boundary $d^*$ is given by
\[d^{*}(s)=\inf\{x\geq0\;|\;v_x(x,s)=1\},\quad s\in[0,T],\]
with the convention $\inf\emptyset = \infty$.
\end{theorem}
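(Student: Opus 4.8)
The plan is to run a standard verification argument for the mixed singular–classical control problem, using the candidate function $v$ and the reversed-time identification $V(x,t)=v(x,T-t)$.

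First, I would fix an arbitrary admissible pair $(\cH^t,\cL^t)$ with associated surplus process $\{R_s\}_{s\geq t-}$ solving \eqref{Rt}, and set $\phi(x,s):=v(x,T-s)$. Since $v\in C^{2,1}$ away from the corner $(0,0)$, I would apply the Itô–Meyer formula for càdlàg semimartingales to the process $e^{-c(s-t)}\phi(R_s,s)$ on the stochastic interval $[t-,T\wedge\deathtime)$, carefully splitting the dividend process $L$ into its continuous part $L^c$ and its jump part. The continuous-part drift terms combine into $e^{-c(s-t)}\big(\phi_s + \mathcal L^{\cH}\phi - c\,\phi\big)\,\d s$ where $\mathcal L^{\cH}$ denotes the generator with the specific retained-loss function $H_s$ plugged in (no supremum); by the definition of $\mathcal L$ in \eqref{calL} and the first inequality in \eqref{v_pb00}, this is $\leq 0$. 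The $\d L^c_s$ term contributes $-e^{-c(s-t)}\phi_x(R_s,s)\,\d L^c_s \leq -e^{-c(s-t)}\,\d L^c_s$ because $\phi_x = v_x \geq 1$ by \eqref{vx}. Each jump of $L$ contributes $\phi(R_{s},s)-\phi(R_{s-},s) = -\int_{R_s}^{R_{s-}} v_x\,\d x \leq -(R_{s-}-R_s) = -(L_s-L_{s-})$, again by $v_x\geq1$. Taking expectations (the stochastic integral against $\d W$ is a true martingale after a localization argument, using the polynomial growth of $v$ and its derivatives that follows from the PDE bounds), rearranging, and letting the localizing sequence go to infinity gives
\[
v(x,T-t) \;=\; \phi(x,t) \;\geq\; \E\!\left[\int_{t-}^{T\wedge\deathtime} e^{-c(s-t)}\,\d L_s\right] + \E\!\left[e^{-c(T\wedge\deathtime - t)}\phi(R_{T\wedge\deathtime},T\wedge\deathtime)\right].
\]
The last term is nonnegative: on $\{\deathtime\leq T\}$ we have $R_{\deathtime}=0$ so $\phi=v(0,\cdot)=0$ by the boundary condition, and on $\{\deathtime>T\}$ we have $\phi(R_T,T)=v(R_T,0)=R_T\geq0$ by the terminal condition. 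Hence $v(x,T-t)\geq V(x,t)$.

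For the reverse inequality I would exhibit the proposed feedback controls and show they attain equality. Define $I^*_s$ and $L^*$ as in the statement, driving the closed-loop SDE
\[
\d R^*_s=\big(-\ga+\E_{R^*_{s-}}[H^*_s(Z_1)]\big)\d s+\sqrt{\E_{R^*_{s-}}[(H^*_s)^2(Z_1)]}\;\d W_s-\d L^*_s,
\]
where $H^*_s(z)=z-I^*_s(z)=\min\{z,\,-v_x/v_{xx}\}$ evaluated at $(R^*_{s-},T-s)$. I must first argue this SDE with reflection-type dividend control is well-posed: the dividend part keeps $R^*$ below the smooth barrier $d^*(T-s)$ (a Skorokhod-type reflection, with a possible initial jump if $x>d^*(T-t)$), and on the no-dividend region the coefficients are locally Lipschitz in $x$ because $v\in C^{2,1}$ and $v_{xx}<0$ there by \eqref{vxx} (so $-v_x/v_{xx}$ is well-defined and smooth; the $\max$ with $0$ and $\min$ with $z$ preserve Lipschitzness in the state). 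Then I redo the Itô computation: the pointwise supremum in $\mathcal L v$ is attained at exactly $H=H^*$ — this is the key algebraic fact, obtained by maximizing the concave-in-$H(z)$ quadratic $\tfrac{v_{xx}}{2}H(z)^2+v_xH(z)$ pointwise in $z$ subject to $0\leq H(z)\leq z$, whose unconstrained maximizer is $-v_x/v_{xx}$ (here $v_{xx}<0$ is essential; the boundary clamping gives the $\max$/$\min$ form, equivalently the stated $I^*=\max\{0,z+v_x/v_{xx}\}$). In the region $\{R^*_{s-}<d^*(T-s)\}$ we have $v_x>1$, hence $v_t-\mathcal Lv=0$ holds and $\d L^{*,c}=0$; in the region $\{R^*_{s-}\geq d^*(T-s)\}$ dividends are paid and $v_x=1$ there, so the inequality $-v_x\,\d L^{*,c}\leq-\d L^{*,c}$ becomes an equality, and similarly for jumps. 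Therefore every inequality above becomes an equality and $v(x,T-t)=\E\big[\int e^{-c(s-t)}\d L^*_s\big]\leq V(x,t)$.

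The main obstacle is the well-posedness and admissibility of the closed-loop system together with the justification of the corner behavior at $(0,0)$: one needs that the constructed $(\cH^{t*},\cL^{t*})$ is genuinely admissible (the retained loss satisfies $0\leq H^*_s(Z)\leq Z$ — immediate from the $\min$ form — and $L^*$ is non-decreasing càdlàg with $L^*_s-L^*_{s-}\leq R^*_{s-}$, which holds since $d^*\geq0$), and that the Itô expansion is valid despite $v$ failing to be $C^{2,1}$ at the single point $(0,0)$. I would handle the latter by stopping before $\deathtime$ and before time $T$ via a localizing sequence $\sigma_n\uparrow T\wedge\deathtime$, applying Itô on $[t-,\sigma_n]$ where $v$ is smooth, and passing to the limit using dominated convergence (the polynomial bounds on $v,v_x,v_{xx}$ from \thmref{thm:v}, plus the moment bound $\E[Z_1^3]<\infty$ implied by \eqref{distributionF}, control the quadratic variation and the running cost). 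A secondary technical point is the martingale property of $\int_{t-}^{\cdot} e^{-c(s-t)}v_x(R_s,s)\sqrt{\E_{R_{s-}}[H_s^2(Z_1)]}\,\d W_s$, which again follows from these growth bounds after localization.
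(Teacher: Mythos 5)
Your proposal is correct and follows essentially the same verification argument as the paper: apply Itô to $e^{-c(s-t)}v(R_s,T-s)$, use the variational inequality together with $v_x\geq 1$ to bound the drift, continuous-dividend, and jump terms for an arbitrary admissible pair, and then show the proposed feedback controls turn all inequalities into equalities. You in fact spell out the localization/martingale and terminal-term details slightly more carefully than the paper does, but the route is the same.
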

The proof is fairly standard and given in Appendix \ref{sec:verify}.

From now on, we fix $v$ as in \thmref{thm:v}. By Verification \thmref{thm:averi}, it completely characterizes the value function of the optimal reinsurance and dividend-payout problem \eqref{valuefunction}. In the next section, we study the properties of the optimal reinsurance and dividend-payout strategies.

\section{Optimal Strategies}
In the previous section, we have resolved the existence and uniqueness issues for the time-reversed HJB equation \eqref{v_pb00}. In the following part, we investigate the optimal strategies for problem \eqref{valuefunction}. We will use the notation in Appendix \ref{sec:proof1} including the constants $\mu_1$ and $\mu_2$, functions $A(\cdot)$, $B(\cdot)$, and the operator ${\cal T}$ that are defined by \eqref{defmu}, \eqref{defA}, \eqref{defB}, and \eqref{calT}, respectively.

If $\ga\geq\mu_1$, then it is easy to check that $v\equiv x$ is the solution to problem \eqref{v_pb00}. In this case one can see that the drift of the surplus process in \eqref{Rt} is either negative if $\E[I_{t}(Z_{1})]>0$ or 0 otherwise, which means the insurance company has no incentive to survive. Consequently, the optimal policy is to pay out all reserves as dividends and let the company be bankrupt immediately. This gives the optimal value $x$ for problem \eqref{valuefunction}. The problem is trivial in this case. Hence, in the rest part of this section, we assume $0<\ga<\mu_1$.

We first study the optimal dividend-payout strategy and then the optimal reinsurance strategy in the subsequent two sections.

\subsection{Optimal dividend-payout strategy}\label{sec:D}
\setcounter{equation}{0}

To investigate the optimal dividend-payout strategy, we divide the whole domain $\TQ:=(0,+\infty)\times[0,T)$ into a {\bf dividend-payout region}
\[
\cD= \left\{(x,t)\in \TQ \;\Big|\; v_{x}(x,\vt)=1\right\}
\]
and a {\bf no-dividend-payout region}
\[
\ND= \left\{(x,t)\in \TQ \;\Big|\; v_{x}(x,\vt)>1\right\}.
\]
Here and hereafter we use the notation $\vt:=T-t$.

Since $v_{x}\geq 1$ and $v_{xx}\leq0$, we can express them as
\[
\cD=\left\{(x,t)\in \TQ \;\Big|\;x\geq d(\vt)\right\},\quad \ND=\left\{(x,t)\in \TQ \;\Big|\;x<d(\vt)\right\}, 
\]
where $d(\cdot)$ is the {\bf dividend-payout boundary}, defined by
\[
d(\vt)=\inf\{x\geq0\;|\;v_x(x,\vt)=1\},\quad \vt>0.
\]

In the following part, we come to show the boundary $d(\cdot)$ is uniformly upper bounded by an explicit given constant. To this end, we will construct a function $\wu(x)$ such that $\wu(x)\geq v_x(x,t)$, then clearly $\inf\{x\geq0\;|\;\wu(x)=1\}$ provides a uniformly upper bound for $d(\cdot)$.

First, we show that $v_x(0,t)$ is uniformly upper bounded. For this, we construct a function
\[
\wv(x):=
\begin{cases}
C_1(1-e^{-\frac{x}{\ga}}), & 0<x\leq x_1, \\
C_2+x-x_1, & x> x_1
\end{cases}\]
where
\begin{align*}
C_{1}=\frac{\mu_{1}}{c}+\ga>0,\quad
C_{2}=\frac{\mu_{1}}{c}>0,\quad
x_1=\ga\ln\frac{C_1}{\ga}>0.
\end{align*}
It is easy to check
$\wv(x_1-)=\wv(x_1+)$ and $\wv_x(x_1-)=\wv_x(x_1+)$, so $\wv\in C^{1}(0,\infty)$. It is easily seen that $\wv_x$ is continuous and decreasing, so $\wv$ is a concave function. When $0<x\leq x_1$,
\begin{align*}
\wv_t-{\cal L}\wv
&=-\int_0^\infty\sup\limits_{0\leq h\leq z}\Big(\frac{1}{2}h^2 \wv_{xx}+h \wv_x\Big)\d F(z)+\ga \wv_x+c\wv\bigskip\\
&\geq-\sup\limits_{0\leq h< \infty}\Big(\frac{1}{2}h^2\wv_{xx}+h \wv_x \Big)+\ga \wv_x=\frac{\wv_x^2}{2\wv_{xx}}+\ga \wv_x=\frac{C_1}{2} e^{-\frac{x}{\ga}}>0,
\end{align*}
and when $x> x_1$,
\[
\wv_t-{\cal L}\wv=-\mu_{1}+\ga+c(C_2+x-x_1)=\ga+c(x-x_1)> 0.
\]
Therefore, $\wv\in W^{2,1}_{p}(\Q_T)$ is a super solution to problem \eqref{v_pb00}. Since $\wv(0)=v(0,t)=0$, we obtain that $v_x(0,t)\leq \wv_x(0)=C_1/\ga$ by the comparison principle.

\begin{remark}
We would like to provide a probabilistic representation for $\wv$. Consider
\begin{equation} \label{newcost}
\sup\limits_{\{h_s\}_{s\geq t-},\{L_s\}_{s\geq t-}}\E\Bigg[\int_{t-}^{T\wedge\deathtime} \d L_s\;\bigg|\;R_{t-}=x\Bigg],
\end{equation}
for the following new surplus
\begin{equation}\label{newsur}
\begin{cases}
\d R_s=\big( -{\frac{\ga}{2}}+h_s \big)\d s+h_s\;\d W_s-\d L_s,\quad s\geq t,\bigskip\\
R_{t-}=x> 0.
\end{cases}
\end{equation}
By establishing an analog of Verification Theorem \ref{thm:averi}, one can show that $\wv$ is the value function for this problem.

For any reinsurance and dividend-payout strategies $\{H_s\}_{s\geq t-}$ and $\{L_s\}_{s\geq t-}$ for the surplus \eqref{Rt}, we take the same dividend-payout strategy $\{L_s\}_{s\geq t-}$ and choose $\{h_s\}_{s\geq t-}=\{\sqrt{\E_{R_{s-}} [H_s^2(Z_1)]}\}_{s\geq t-}$ for the new surplus \eqref{newsur}, then one can easily show the new surplus is always higher than the surplus \eqref{Rt} by virtue of
\begin{align*}
-{\frac{\ga}{2}}+h_s =-{\frac{\ga}{2}}+\sqrt{\E_{R_{s-}} [H_s^2(Z_1)]}
> -\ga+\E_{R_{s-}} [H_s(Z_1)].
\end{align*}
Meanwhile, the new cost functional \eqref{newcost} is clearly no less than the one in \eqref{valuefunction}, so we conclude $\wv\geq V$.
\end{remark}

Now, we are ready to construct an upper bound function for $v_x(x,t)$. To this end, let
\[
\wu(x):=
\begin{cases}
C_3(x_2-x)^2+1, & 0<x\leq x_2, \\
1, & x> x_2,
\end{cases}\]
where
\[C_3=\frac{c^{2}}{c\mu_2+\ga^2}>0,\quad { x_2:=\sqrt{\frac{C_2}{C_{3}\ga}}=\sqrt{\frac{1}{\ga c^{3}}\left(\mu_1+c\ga\right)\left(c\mu_2+\ga^2 \right)}}>0.\]
Clearly, $\wu$ is convex and $\wu \in W^{2,1}_{p}(\Q_T)$. If $0<x<x_2$, then $\wu_x\leq 0$. By \eqref{A_b} and using the elementary inequality $x^{2}-2xy\geq-y^{2}$, we obtain
\begin{align*}
\wu_t-{\cal T}\wu
&\geq-\frac{1}{2}\mu_2 \wu_{xx}+\ga \wu_x+c\wu\bigskip\\
&=C_3\left(-\mu_2-2\ga (x_2-x)+c(x_2-x)^2\right)+c\bigskip\\
&\geq C_3\left(-\mu_2-\ga^2/c \right)+c=0.
\end{align*}
If $x>x_{2}$, then $\wu_t-{\cal T}\wu=c>0$. As $\wu(0)=C_1/\ga\geq v_x(0,t)=u(0,t)$, by the comparison principle, we conclude $\wu\geq u$. Therefore, $x_2=\inf\{x\geq0\;|\;\wu(x)=1\}$ is a constant upper bound for $d(\cdot)$.

Summarizing the above results, the dividend-payout boundary is completely characterized in the following theorem.
\begin{theorem}\label{thm:h}
The dividend-payout boundary $d(\vt)$ is continuous and increasing in $\vt$, and satisfies
\be\label{x2}
d(0+)=0<d(\vt)\leq d(\infty)\leq x_{2}=\sqrt{\frac{1}{\ga c^{3}}\left(\mu_1+c\ga\right)\left(c\mu_2+\ga^2 \right)},
\ee
where $d(0+)=\lim\limits_{\vt\rt 0+}d(\vt)$ and $d(\infty):=\lim\limits_{\vt\rt+\infty}d(\vt)$.
Furthermore, if $Z_{1}$ is a bounded random variable, then $d(\vt)\in C^\infty (0,T)$.
\end{theorem}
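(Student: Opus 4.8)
The plan is to establish the claims of \thmref{thm:h} — monotonicity of $d$, the a~priori bound $d(\infty)\le x_2$, strict positivity $d(\vt)>0$, continuity, the boundary value $d(0+)=0$, and $C^\infty$-regularity for bounded claims — in that order, using only the structural properties \eqref{vx}--\eqref{vga} of $v$, the super solutions $\wv,\wu$ built above, and interior parabolic regularity for the equation $v_t={\cal L}v$ in the no-dividend-payout region $\ND$. \emph{Monotonicity, the bound, positivity.} Since $v_{xx}\le0$ and $v_x\ge1$, the level set $\{x:v_x(x,\vt)=1\}$ equals $[d(\vt),\infty)$; since $v_{xt}\ge0$, $\vt\mapsto v_x(x,\vt)$ is non-decreasing, so $v_x(d(\vt_2),\vt_1)\le v_x(d(\vt_2),\vt_2)=1$ for $\vt_1<\vt_2$, forcing $v_x(d(\vt_2),\vt_1)=1$ and hence $d(\vt_1)\le d(\vt_2)$. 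The inequality $d(\vt)\le x_2$ is exactly what the super solution $\wu$ gives above, so (taking $T=\infty$, which is legitimate since the estimates are $T$-free) $d(\infty):=\lim_{\vt\to\infty}d(\vt)$ exists and $d(\vt)\le d(\infty)\le x_2$. If instead $d(\vt_0)=0$ then $v_x(\cdot,\vt_0)\equiv1$, so $v(x,\vt_0)=x$ (using $v(0,\vt_0)=0$); since $v\ge x$ and $v$ is non-decreasing in $\vt$, this forces $v(x,\vt)=x$ for all $\vt\in[0,\vt_0]$ and $v_t(x,\vt_0)=0$, whence $v_t-{\cal L}v=cx-\mu_1+\ga$ at $(x,\vt_0)$, negative for small $x>0$ because $0<\ga<\mu_1$ — contradicting \eqref{v_pb00}. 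Hence $d(\vt)>0$ for $\vt>0$.

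\emph{Continuity.} A left jump $d(\vt_0-)<d(\vt_0)$ is impossible: a point $x^*$ strictly between would lie in $\cD$ for all $\vt<\vt_0$, and continuity of $v_x$ would give $v_x(x^*,\vt_0)=1$, i.e. $x^*\ge d(\vt_0)$. So assume a right jump, $a:=d(\vt_0)<b:=d(\vt_0+)$ for some $\vt_0\in(0,T)$. On $[a,b]$ at time $\vt_0$ one has $v_x=1$, hence $v_{xx}=0$ and $v(x,\vt_0)=v(a,\vt_0)+(x-a)$; on the rectangle $(a,b)\times(0,\vt_0)$ also $v_x\equiv1$, so $v=x+\psi(\vt)$ there and, by continuity, $v_t(\cdot,\vt_0)$ equals a constant $\phi$ on $[a,b]$. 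Now $w:=v_t-{\cal L}v$ is continuous off the origin — the nonlinear term $\sup_{H\in\aH}\big(\tfrac q2\int H^2\,\d F+p\int H\,\d F\big)=\int_0^\infty\max_{0\le h\le z}(\tfrac q2 h^2+ph)\,\d F(z)$ is continuous in $(p,q)$, with \eqref{distributionF} supplying the integrability — and $w\equiv0$ on $\ND$. Since $(a,\vt_0)$ is a limit of points of $\ND$ (take $(a,\vt)$ with $\vt\downarrow\vt_0$, using $d(\vt)\ge b>a$), $w(a,\vt_0)=0$, which with $v_x(a,\vt_0)=1$, $v_{xx}(a,\vt_0)=0$ reads $\phi=\mu_1-\ga-cv(a,\vt_0)$. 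Feeding this back, $w(x,\vt_0)=c\big(v(x,\vt_0)-v(a,\vt_0)\big)=c(x-a)>0$ for $x\in(a,b)$. But for $\vt\in(\vt_0,\vt_0+\delta)$ and $x\in(a,b)$ we have $x<b\le d(\vt)$, so $(x,\vt)\in\ND$ and $w(x,\vt)=0$; letting $\vt\downarrow\vt_0$ contradicts continuity of $w$. Hence $d$ is continuous on $(0,T)$.

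\emph{The value $d(0+)=0$ — the main obstacle.} Suppose $d(0+)=a>0$; then $(0,a)\times(0,T)\subset\ND$, where $v_t={\cal L}v$. On this strip the equation is \emph{uniformly parabolic}: its $v_{xx}$-coefficient is $\tfrac12\int H^{*2}\,\d F$ with $H^*(z)=\min\{z,-v_x/v_{xx}\}$, and \eqref{vga} forces $-v_x/v_{xx}\ge1/\la$, so this coefficient lies between $\tfrac12\int\min\{z,1/\la\}^2\,\d F>0$ and $\tfrac12\mu_2$. Interior regularity then lets one differentiate the equation in $x$ (in the weak sense at worst — here the delicate $W^{2,1}_p$ and Schauder analysis of the appendix enters) to obtain $v_{xt}=\big(\int H^*\,\d F-\ga\big)v_{xx}+\big(\tfrac12\int H^{*2}\,\d F\big)v_{xxx}-cv_x$ a.e. on $(0,a)\times(0,T)$. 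Fix the interval $[a/4,a/2]$: as $\vt\to0$, uniformly there, $v_{xx}\to v_{xx}(\cdot,0)=0$ and $v_x\to1$, hence $-v_x/v_{xx}\to\infty$ and $\int H^*\,\d F\to\mu_1>\ga$. Therefore, for all small $\vt$, the two facts $v_{xt}\ge0$ and $(\int H^*\,\d F-\ga)v_{xx}\le0$ force $v_{xxx}\ge cv_x/(\tfrac12\mu_2)\ge2c/\mu_2$ on $[a/4,a/2]$; integrating in $x$ yields $v_{xx}(a/4,\vt)\le v_{xx}(a/2,\vt)-\tfrac{ca}{2\mu_2}\le-\tfrac{ca}{2\mu_2}<0$ for all small $\vt$, contradicting $v_{xx}(a/4,\vt)\to0$. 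So $d(0+)=0$, which together with the earlier steps proves \eqref{x2}. This is the crux: everything rests on having enough interior regularity in $\ND$ to legitimize the identity for $v_{xt}$, and that is why the uniform parabolicity coming from \eqref{vga} and the tail control \eqref{distributionF} are indispensable.

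\emph{Smoothness for bounded claims.} Suppose $Z_1\le\bar z$ a.s. Near the free boundary $v_{xx}$ is close to $v_{xx}(d(\vt),\vt)=0$ and $v_x$ close to $1$, so $-v_x/v_{xx}>\bar z\ge z$ on $\mathrm{supp}\,F$; hence $H^*(z)=z$ there, and in a one-sided neighborhood of $\{x=d(\vt)\}$ the operator ${\cal L}v$ collapses to the \emph{linear constant-coefficient} uniformly parabolic operator $\tfrac{\mu_2}2v_{xx}+(\mu_1-\ga)v_x-cv$. Thus $u:=v_x$ solves there a parabolic obstacle problem with a constant-coefficient operator and flat obstacle $1$; the detachment is non-degenerate — differentiating $v_t={\cal L}v$ and matching across the boundary via $w=c(x-d(\vt))^+$ gives $v_{xxx}(d(\vt)-,\vt)=2c/\mu_2>0$ — and $v_x$ is strictly decreasing in $x$ on $\ND$ (since $v_{xx}<0$ there). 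Knowing already that $d$ is monotone and continuous, a partial hodograph--Legendre transform converts the free boundary problem into a non-degenerate uniformly parabolic equation with smooth coefficients on a fixed domain, and Schauder bootstrapping yields $d\in C^\infty(0,T)$. Boundedness of $Z_1$ is used precisely to make those coefficients smooth; for unbounded claims $\int H^{*2}\,\d F$ may fail to be differentiable where $-v_x/v_{xx}$ crosses an atom of $F$.
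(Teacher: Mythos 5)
Your proposal proves the same set of facts as the paper and, where it matches the paper's route (monotonicity from $v_{xt}\ge0$, the super solution $\wu$ giving $d\le x_2$), is essentially identical. But in three places you go a genuinely different way, with mixed efficiency:

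\emph{Positivity $d(\vt)>0$.} You derive $v\equiv x$ on $[0,\vt_0]$ directly from $v_t\ge0$ and $v(\cdot,0)=x$, then evaluate $v_t-{\cal L}v=cx-(\mu_1-\ga)<0$ for small $x$. This is cleaner than the paper's route, which invokes uniqueness of the solution together with a time-shift/induction argument; your shortcut is legitimate and arguably preferable.

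\emph{Continuity and $d(0+)=0$.} The paper handles both with one short argument: on the putative jump interval, $v_x=1$, $v_{xx}=0$, and taking the limit of $v_t={\cal L}v$ from $\ND$ gives $v_t(x,\vt_0)=\mu_1-\ga-cv(x,\vt_0)$, which after differentiating in $x$ yields $v_{xt}=-cv_x=-c<0$, contradicting \eqref{vxt}; the case $\vt_0=0$ is ``similar.'' Your continuity argument (via continuity of $w:=v_t-{\cal L}v$ and the identity $w(x,\vt_0)=c(x-a)$) is correct but longer. Your $d(0+)=0$ argument is substantially heavier: it differentiates the equation, isolates a uniform lower bound $v_{xxx}\ge2c/\mu_2$ a.e.\ for small $\vt$ on a fixed interval, and integrates. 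That is valid — the $W^{2,1}_{p,\mathrm{loc}}$ regularity proved in the appendix makes $v_{xx}$ absolutely continuous in $x$, and the uniform convergences $v_{xx}\to0$, $\int H^*\,\d F\to\mu_1>\ga$ on the compact set $[a/4,a/2]$ hold — but it is overkill. The paper's one-liner (apply the continuity argument at $\vt_0=0$: $v_x(x,0)=1$, $v_{xx}(x,0)=0$, limit of the equation gives $v_t(x,0)=\mu_1-\ga-cx$ for $x\in(0,a)$, differentiate in $x$ to get $v_{xt}(x,0)=-c<0$) avoids touching $v_{xxx}$ entirely.

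\emph{Smoothness for bounded $Z_1$.} Both you and the paper first localize to a neighborhood of the free boundary where $-v_x/v_{xx}>\hatz$, so $H^*(z)=z$ and the operator becomes the constant-coefficient $\tfrac{\mu_2}{2}\p_{xx}+(\mu_1-\ga)\p_x-c$. From there the paper cites Friedman's one-dimensional parabolic free boundary regularity \cite{Fr75}; you invoke a partial hodograph–Legendre transform together with the non-degeneracy $v_{xxx}(d(\vt)-,\vt)=2c/\mu_2>0$ (which you verify correctly by matching $v_{xt}$ across the boundary). These are both standard and interchangeable; the hodograph route needs $v_{xx}<0$ on $\ND$ (Lemma~\ref{prop:vR}) for the change of variables to be non-degenerate, which you implicitly use. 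No gaps, but note that your $w=c(x-d(\vt))^+$ side remark presupposes differentiability of $d$, which is what is being proved; the $v_{xt}$-matching version you also give is the one that stands on its own.

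In sum: correct, and the positivity step improves on the paper, but the $d(0+)=0$ step imports heavy machinery where the paper's existing continuity lemma applies verbatim at $\vt_0=0$.
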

See Figure \ref{fig1} for an illustration of the dividend-payout barrier $d(\vt)$ as well as dividend-payout region $\cD$ and no-dividend-payout region $\ND$.
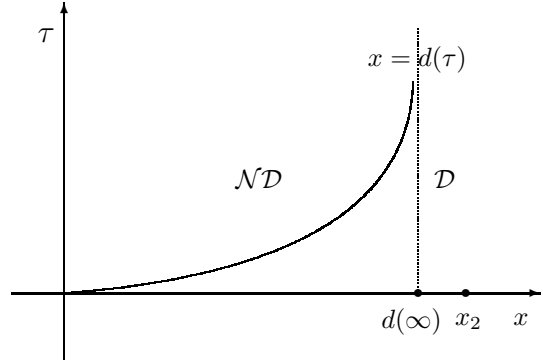
\begin{figure}[H]
\begin{center}
\begin{picture}(200,125)(0,0)
\put(0,10){\vector(1,0){200}} \put(20,-15){\vector(0,1){135}}
\qbezier[80](154, 10)(154, 60)(154, 110)
\put(10,105){$\vt$}
\put(190,-2){\footnotesize $x$}
\put(140,-3){\footnotesize $d(\infty)$}
\put(168,-2){\footnotesize $x_2$}
\put(85,50){\footnotesize $\ND$}
\put(160,50){\footnotesize $\cD$}
\put(154, 10){\circle*{3}}
\put(172, 10){\circle*{3}}
\qbezier(20,10)(150,20)(152,90)\put(135,96){\footnotesize $x=d(\vt)$}
\end{picture} \bigskip
\caption{The dividend-payout barrier $x=d(\vt)$ divides the surplus-time space into a dividend-payout region $\cD$ and a no-dividend-payout region $\ND$.} \label{fig1}\end{center}
\end{figure}
\begin{proof}
The monotone property of $d(\cdot)$ is due to \eqref{vxt}. We now prove the continuity. Suppose there exists a $\vt_0$ such that $d(\vt_0-)<d(\vt_0+)$. Then for any $x\in (d(\vt_0-),d(\vt_0+))$, by the continuity of $v_x$ and the monotonicity of $v_x$ and $d(\cdot)$,
\begin{align*}
v_x( x,\vt_0)=\lim_{\ep\to 0+}v_x(x,\vt_0-\ep)
\leq \lim_{\ep\to 0+}v_x(d(\vt_0-\ep),\vt_0-\ep)=1.
\end{align*}
Thanks to $v_x\geq 1$, we have $v_x(x,\vt_0)=1$. Hence, for any $x\in (d(\vt_0-), d(\vt_0+))$, $v_{xx}(x,\vt_0)=0$. Consequently, we infer from \eqref{v_pb00} that $v_t(x, \vt_0)=-\mu_{1}+\ga-cv(x, \vt_0)$ and thus $v_{tx}(x,\vt_0)=-cv_x(x, \vt_0)=-c<0$, which contradicts \eqref{vxt}. So we conclude $d(\cdot)$ is continuous. Similarly, we can prove $\lim\limits_{\vt\rt 0+}d(\vt)=0$.

Now we prove $d(\cdot)>0$. Suppose not, then, by monotonicity, there exists a $\vt_0>0$ such that $d(\vt)=0$ for all $0<\vt<\vt_0$. This implies $v_x\equiv 1$ or $v\equiv x$ for $0<\vt<\vt_0$. Denote $\wv(x,\vt)=v(x,\vt-\vt_0)$, then both $\wv$ and $v$ satisfy \eqref{v_pb00}. By the uniqueness, we get $\wv\equiv v$, namely $v\equiv x$ for $0<\vt<2\vt_0$. By the mathematical induction, we have $v\equiv x$ for all $\vt>0$, which leads to
\[v_t-{\cal L}v=-\mu_1+\ga+cx.\]
This contradicts \eqref{v_pb00} for any sufficiently small $x$ as $\ga<\mu_1$. 

It is only left to show the smoothness of $d(\cdot)$ when $Z_{1}$ is bounded. Suppose $F(\hatz)=1$ for some $\hatz>0$. Then by \eqref{defA} and \eqref{defB},
\be\label{cd1}
A(y)=\frac{1}{2}\mu_2,\quad B(y)=\mu_1,\quad y^{-1}\geq \hatz.
\ee
For any $\vt_0>0$, as $v_x$ and $v_{xx}$ are continuous, we have $v_x=1$ and $v_{xx}=0$ at $(d(\vt_0),\vt_0)$. This implies $v_x+\hatz v_{xx}>0$ in some neighborhood ${\cal B}$ of $(d(\vt_0),\vt_0)$. By \eqref{u_pb} and \eqref{cd1}, we see $u=v_x$ satisfies
\[
\min\Big\{u_t-\frac{\mu_{2}}{2}u_{xx}-(\mu_{1}-\ga) u_x+cu,\;u-1\Big\}=0,\quad (x,\vt)\in {\cal B}.
\]
Because the coefficients are constants in the above equation, using the method in \cite{Fr75}, we can prove $d(\vt)\in C^\infty$ at some neighborhood of $\vt_0$. Since $\vt_0$ is arbitrarily chosen, we conclude that $d(t)\in C^\infty(0,T)$.
\end{proof}

In view of the original optimal reinsurance and dividend-payout problem \eqref{valuefunction}, by \thmref{thm:averi}, the optimal dividend-payout policy $L^*_s$ is the local time of the corresponding reserve process $R_s^*$ at the level $d(\vs)$, namely
\be\label{L*}
\begin{cases}
L^*_s-L^*_{s-}=R^*_{s-}-{d(\vs)}, & \hbox{if}\quad R^*_{s-}>d(\vs);\\
\d L^*_s=0, & \hbox{if}\quad R^*_{s-}\leq d(\vs).
\end{cases}
\ee 
Under this policy, the reserve process $R_s^*$ is always continuous and no more than $d(\vs)$, except for the initial time $t$. When the insurance surplus $R^*_{t-}$ is above the threshold $d(T-t)$, the insurance company should pay out the reserves of an amount $R^*_{t-}-d(T-t)$ as dividends to its shareholders at the initial time $t$; otherwise, pay out nothing. Indeed the accumulated dividends increase with the local time at the boundary.

\subsection{Optimal reinsurance strategy}\label{sec:R}
In this section we study the optimal reinsurance strategy. Recall that we have confirmed $V(x,t)=v(x,\vt)$ by \thmref{thm:averi}.

For the insurance company, if its current state is $(x,t)$, by \eqref{vx}, \eqref{vxx} and \eqref{h*}, the corresponding optimal retained function is
\be\label{H^}
z\mapsto \WH(z,x,\vt):= 
\begin{cases}
-\frac{v_x}{v_{xx}}(x,\vt), & {\rm if}\;-\frac{v_{xx}}{v_{x}}(x,\vt)> \frac{1}{z};\bigskip\\
z, & {\rm otherwise},
\end{cases}
\ee
and the optimal reinsurance function is
\[z\mapsto \WI(z,x,\vt):=z-\WH(z,x,\vt).\]
The reinsurance contract depends on both the current insurance surplus $x$ and time $t$. In this section, we will discuss the behavior of them.

For each $z>0$, depending on whether $\WI$ is zero, we divide the surplus-time space into a {\bf reinsurance-covered region}
\[
\PR_{z}=\left\{(x,t)\in \TQ \;\Big|\; \WI(z,x,\vt)>0\right\}
\]
and a {\bf reinsurance-uncovered region}
\[
\NR_{z}=\left\{(x,t)\in \TQ \;\Big|\;\WI(z,x,\vt)=0\right\}.
\]
By \eqref{H^}, they can also expressed as
\[
\PR_{z}=\left\{(x,t)\in \TQ \;\Big|\;-\frac{v_{xx}}{v_{x}}(x,\vt)> \frac{1}{z}\right\}, \quad
\NR_{z}=\left\{(x,t)\in \TQ \;\Big|\;-\frac{v_{xx}}{v_{x}}(x,\vt)\leq \frac{1}{z}\right\}.
\]
Our above discussion shows that $\cD\subseteq \NR_{z}$ and $\PR_{z}\subseteq \ND$.

\begin{lemma}\label{prop:vR}
We have
\begin{align}\label{vR}
v_x,\;v_t\in C^{2,1}\big(\ND\big).
\end{align}
Furthermore, $v_{xx}<0$ if $(x,t)\in\ND$. Consequently, $\WI(z,x,\vt)=\max\left\{z+v_{x}/v_{xx}(x,\vt),\; 0\right\}$ for $z>0$ and $(x,t)\in\ND$.
\end{lemma}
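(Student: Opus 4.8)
The plan is to bootstrap the regularity of $v$ on the open set $\ND$ from the interior parabolic regularity theory, using the fact that on $\ND$ the obstacle $v_x=1$ is not active, so $v$ solves the fully nonlinear equation $v_t-\mathcal{L}v=0$ there. The first step would be to record that, by \thmref{thm:v}, we already have $v\in C^{2,1}$ away from the corner, $v_{xx}\leq 0$, $v_{xxx}\geq 0$ in the weak sense, and $\la v_x+v_{xx}\geq 0$; and that on $\ND$ the equation reduces to $v_t=\mathcal{L}v$, where $\mathcal{L}v$ involves the pointwise optimization $\sup_{H\in\aH}(\tfrac12 v_{xx}\int H^2\,\d F+v_x\int H\,\d F)$. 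Using $v_x\geq 1>0$ and $v_{xx}\leq 0$, the inner optimizer is the truncation $H^*(z)=\min\{z,-v_x/v_{xx}\}$ (this is \eqref{h*}/\eqref{H^}), so that $\mathcal{L}v=A\big(-v_{xx}/v_x\big)v_{xx}+B\big(-v_{xx}/v_x\big)v_x-\ga v_x-cv$ with $A,B$ the functions from \eqref{defA}, \eqref{defB}; hence $v$ solves a quasilinear-in-appearance but genuinely fully nonlinear uniformly parabolic equation on $\ND$ with coefficients that are smooth functions of $(v_x,v_{xx})$.

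The second step is the key differentiation argument, in the spirit of \cite{DY09a, DY09b, DXZ10}: differentiate the equation $v_t=\mathcal{L}v$ formally in $x$ to see that $u:=v_x$ solves a linear (in $u$) uniformly parabolic equation whose coefficients depend on $v_x,v_{xx}$, i.e.\ on $u$ and $u_x$; since $v\in C^{2,1}$ these coefficients are already $C^{0}$, so by the interior $C^{\al,\al/2}$ (De Giorgi--Nash--Moser) estimate $u=v_x$ gains $C^{\al,\al/2}$ regularity, then by the interior Schauder estimate $u\in C^{2+\al,1+\al/2}$ locally in $\ND$; feeding this back (the coefficients are now $C^{\al,\al/2}$) and iterating the Schauder estimates bootstraps $v_x\in C^{2,1}(\ND)$, and differentiating the equation in $t$ similarly gives $v_t\in C^{2,1}(\ND)$. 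Care is needed to justify the differentiation rigorously (difference quotients of $v$ in $x$ and in $t$, using the a priori $W^{2,1}_p$/$C^{2,1}$ bounds from \thmref{thm:v} to pass to the limit), and to check uniform parabolicity of the equation for $u$: this amounts to verifying that the linearization of $H\mapsto \tfrac12 v_{xx}\int H^2+v_x\int H$ at the optimizer, or equivalently the derivative of $A(\cdot)v_{xx}+B(\cdot)v_x$ in the second-order slot, is bounded below by a positive constant — which is where the bound $\la v_x+v_{xx}\geq 0$ from \eqref{vga} enters, keeping the argument $-v_{xx}/v_x$ in a compact set away from the degeneracy and keeping the effective diffusion coefficient between two positive constants.

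The third step is to upgrade $v_{xx}\leq 0$ to the strict inequality $v_{xx}<0$ on $\ND$. Since $w:=v_{xx}$ (now known to be in $C^{2,1}(\ND)$ by Step 2, as $v_x\in C^{2,1}$) satisfies a linear homogeneous uniformly parabolic equation on $\ND$ — obtained by differentiating $v_t=\mathcal{L}v$ twice in $x$ — the strong maximum principle applies: if $w$ attained the value $0$ at some interior point of $\ND$, then $w\equiv 0$ on the connected component, hence $v_x$ would be constant in $x$ there, hence $v_x\equiv 1$ (matching the boundary of $\ND$ where $v_x=1$), contradicting that we are in $\ND$ where $v_x>1$; one must be slightly careful about connectedness of $\ND$ and about the parabolic (one-sided in time) form of the strong maximum principle, comparing with the argument already used in the proof of \thmref{thm:h} where $v_{xx}=0$ on an interval forced $v_{tx}=-c<0$, contradicting \eqref{vxt}. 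Finally, with $v_{xx}<0$ on $\ND$ the formula $\WI(z,x,\vt)=z-\WH(z,x,\vt)=\max\{z+v_x/v_{xx},0\}$ follows directly from \eqref{H^} by rewriting the condition $-v_{xx}/v_x>1/z$ as $z+v_x/v_{xx}>0$ (legitimate division since $v_{xx}\neq 0$). I expect the main obstacle to be Step 2: setting up the differentiated equations rigorously and confirming uniform parabolicity of the equation solved by $v_x$ via \eqref{vga}, so that the Schauder bootstrap can be run — the maximum-principle argument in Step 3 is comparatively routine once $v_x\in C^{2,1}(\ND)$ is in hand.
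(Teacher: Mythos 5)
Your Step 2 is essentially the paper's proof: on $\ND$ the equation reduces to $v_t - A(-v_{xx}/v_x)v_{xx} - B(-v_{xx}/v_x)v_x + \ga v_x + cv = 0$, one differentiates in $x$ and in $t$ to get linear equations for $v_x$ and $v_t$ with coefficients $A(-v_{xx}/v_x), B(-v_{xx}/v_x)$, and applies Schauder estimates. Two small simplifications compared with your write-up: (i) the De Giorgi--Nash--Moser step is unnecessary, since the penalization argument in the proof of \thmref{thm:v} already yields uniform $C^{\al,\al/2}$ bounds on $v_x$ and $v_{xx}$, so by Lemma~\ref{G} the coefficients are $C^{\al,\al/2}$ immediately and Schauder applies in one pass; (ii) uniform parabolicity is exactly $A(-v_{xx}/v_x)\geq A(\la)>0$, which follows from the monotonicity of $A$ together with \eqref{vga}, as you correctly identified.

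Your Step 3 works but is heavier than what the paper does, and would require an extra round of bootstrapping. Differentiating twice in $x$ to get a PDE for $w=v_{xx}$ needs $v_{xxxx}$, i.e.\ $v_x\in C^{3,1}$, which is more regularity than the single Schauder pass delivers; you would have to iterate the bootstrap once more before the strong maximum principle can be invoked classically, and then also worry (as you note) about connectedness and the one-sided-in-time form of the strong maximum principle. The paper instead uses the pointwise first-order condition: if $v_{xx}(x_0,t_0)=0$ at an interior point of $\ND$, then since $v_{xx}\leq 0$ globally this is a maximizer, so $v_{xxx}(x_0,t_0)=0$; substituting into the differentiated equation \eqref{vxND} kills all terms with $v_{xxx}$ and $v_{xx}$ and leaves $v_{tx}(x_0,t_0)=-c\,v_x(x_0,t_0)<0$, contradicting $v_{xt}\geq 0$ and $v_x\geq 1$. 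This is the argument you allude to at the end (the one reused in \thmref{thm:h}), and it is cleaner: it requires only $v_x\in C^{2,1}(\ND)$, which Step 2 already provides, and no maximum-principle machinery. Your derivation of the formula for $\WI$ from \eqref{H^} once $v_{xx}<0$ is established is correct.
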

\begin{proof}
The proof is given in Appendix \ref{sec:proof2}.
\end{proof}

This results implies that $\WI$ is an increasing function w.r.t $z$, which makes a perfect financial meaning that the insurance company would get more compensations when a larger claim arises. Another consequence is that 
\[\WI(z,x,\vt)<z, \quad (x,t)\in\ND.\]
The optimal reserve process $R^*$ is always continuous and no more than the dividend-payout barrier, except for the initial time, so we always have $\WI(z,R_{t-}^*,\vt)<z$. In any case, the insurance company should not cede all the risks; it has to bear some risks by itself. This is due to our assumption that the reinsurance is non-cheap, $\ga>0$.

\begin{lemma}\label{prop:ax}
We have
\begin{align}\label{A0}
\left(-\frac{v_{x}}{v_{xx}}\right)\Bigg|_{x=0}&=\frac{1}{\la},\\\label{A}
-\frac{v_{x}}{v_{xx}}(x,\vt)&>\frac{1}{\la} \quad{\rm if }\quad (x,t)\in\ND,\\\label{hx}
\p_x\left(-\frac{v_{x}}{v_{xx}}\right)(x,\vt)&\geq 2c \quad{\rm if}\quad (x,t)\in\ND,
\end{align}
where $\la$ is the unique positive root of the function $f$ defined by \eqref{f}.
\end{lemma}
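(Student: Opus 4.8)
\textbf{Proof proposal for Lemma~\ref{prop:ax}.}

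The plan is to study the function $g(x,\vt) := -v_x/v_{xx}(x,\vt)$ directly, using the fact (from Lemma~\ref{prop:vR}) that on $\ND$ we have $v_{xx}<0$, so $g$ is well-defined, smooth (in $C^{2,1}(\ND)$), and positive. The quantity $-v_{xx}/v_x$ appearing in the retained-loss formula \eqref{H^} is its reciprocal, and the claimed inequalities \eqref{A0}--\eqref{hx} will follow by deriving an ODE/PDE inequality for $g$ and applying the comparison principle, together with a boundary analysis at $x=0$. First I would treat the boundary value \eqref{A0}: at $x=0$ one has the Dirichlet data $v(0,t)=0$ and, from the structure of the HJB equation \eqref{v_pb00} restricted to the boundary, a relation between $v_x(0,\vt)$ and $v_{xx}(0,\vt)$. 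Near $x=0$ the point $(0,t)$ lies in $\ND$ (by Theorem~\ref{thm:h}, $d(0+)=0$ but $d(\vt)>0$ for $\vt>0$, so the origin is interior to the no-dividend region along each time slice $\vt>0$), hence $-v_{xx}/v_x \le 1/z$ for small $z$ and the supremum in ${\cal L}v$ is attained at $H(z)=z$ for $z$ below the threshold; evaluating the resulting equation at $x=0$ should give precisely the algebraic equation $f(\cdot)=0$ defining $\la$, from which $(-v_x/v_{xx})|_{x=0}=1/\la$. This is essentially identifying the ``linearized-at-the-origin'' behaviour of the HJB equation with the function $f$ from \eqref{f}.

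Next, for the interior monotonicity estimates \eqref{A} and \eqref{hx}, the idea is to differentiate. Since $v_x,\,v_t\in C^{2,1}(\ND)$, the function $u:=v_x$ solves a (uniformly) parabolic equation $u_t = A(\cdot)\,u_{xx} + (\text{lower order}) - cu$ on $\ND$ (this is the equation \eqref{u_pb} referenced in the proof of Theorem~\ref{thm:h}), where the optimal $H$ is the feedback $\WH$. I would compute $\p_x g$ and $\p_{xx} g$ in terms of $v_x,v_{xx},v_{xxx}$ and use the weak third-order sign $v_{xxx}\ge 0$ from \eqref{vxxx} together with $\la v_x + v_{xx}\ge 0$ from \eqref{vga}. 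The inequality $\la v_x+v_{xx}\ge 0$ is exactly $-v_{xx}/v_x\le\la$, i.e. $g\ge 1/\la$, which already gives a non-strict version of \eqref{A}; upgrading to the strict inequality on the open set $\ND$ would use the strong maximum principle applied to $\la v_x+v_{xx}$ (which satisfies a parabolic inequality and cannot vanish interior to $\ND$ without being identically zero, contradicting the behaviour near $d(\vt)$). For \eqref{hx}, I expect that differentiating the PDE for $u=v_x$ and forming the equation satisfied by $g$ yields, after using $v_{xxx}\ge0$ and $v_{xx}<0$, a differential inequality of the form $g_t \le A(\cdot)g_{xx} + (\dots)g_x$ with a source term that forces $g_x\ge 2c$; the constant $2c$ should emerge from the $-cv$ term in ${\cal L}$ once one tracks how the discount rate propagates through two $x$-derivatives (heuristically, $(-v_x/v_{xx})_x \sim$ derivative of a ratio whose "decay rate" is governed by $c$, doubled because the retained-loss threshold involves the inverse ratio). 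A clean way to organize this is to set $w := g_x - 2c$ and show $w\ge0$ on $\ND$ via comparison: verify $w\ge 0$ on the parabolic boundary of $\ND$ (the free boundary $x=d(\vt)$, where $v_{xx}=0$ so $g\to+\infty$ but one checks the one-sided limit of $g_x$, and the fixed boundary $x=0$), check the terminal/initial slice, and verify that $w$ satisfies a parabolic inequality with the right sign.

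The main obstacle I anticipate is the degeneracy at the free boundary $x=d(\vt)$: there $v_{xx}=0$, so $g=-v_x/v_{xx}$ blows up and the equation for $g$ is singular precisely where we want boundary control. Handling this cleanly will require either (i) working with the reciprocal $1/g = -v_{xx}/v_x$ instead, which stays bounded (in $[0,\la]$) and vanishes on $\cD$, deriving a parabolic inequality for it and translating the conclusions back, or (ii) a careful local analysis near $d(\vt)$ using the smoothness of $v_x,v_{xx}$ up to $\ND$'s closure and the fact that $v_{xx}$ vanishes to first order there, so that $v_x/v_{xx}$ and its $x$-derivative have controlled one-sided limits. Option (i) is likely the more robust route: on $\ND$, $1/g\in(0,\la)$, it solves a nice parabolic equation, and $(1/g)_x \le -2c (1/g)^2$ would be the reciprocal form of \eqref{hx} — but one must be careful that $1/g\to 0$ at $d(\vt)$ is the "good" direction for the maximum principle. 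A secondary technical point is justifying all these manipulations given that $v_{xxx}$ is only known in the weak sense; I would handle this by first proving the estimates for the penalized approximations $v^\e$ of Appendix~\ref{sec:proof1} (where everything is smooth) with constants independent of $\e$, and then passing to the limit.
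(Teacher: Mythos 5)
Your reading of \eqref{A0} is on target: the paper obtains it directly from the Robin boundary condition $\la u+u_x=0$ at $x=0$ in \eqref{u_pb}, which is itself derived (in Appendix~A) exactly the way you describe, by evaluating the HJB equation at $x=0$ and recognizing $f$. But for \eqref{A} and \eqref{hx} your plan diverges from the paper's and has a real gap. The paper proves \eqref{hx} by a purely \emph{pointwise} algebraic manipulation, not a comparison argument. In $\ND$ one has the equation $v_{xt}-A(-v_{xx}/v_x)v_{xxx}-B(-v_{xx}/v_x)v_{xx}+\ga v_{xx}+cv_x=0$, and the key step is the identity
\[
v_{xxx}=\left[\p_x\!\left(-\frac{v_x}{v_{xx}}\right)+1\right]\frac{v_{xx}^2}{v_x},
\]
which turns the PDE into an explicit lower bound on $\nu_x:=\p_x(-v_x/v_{xx})$ after dividing by $v_x>0$ and using $v_{xt}\geq 0$ from \eqref{vxt}. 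Since $\nu\geq 1/\la$ by \eqref{vga} and $f$ is decreasing with $f(\la)=0$, the $f(1/\nu)\nu$ term is nonnegative, leaving $\nu_x\geq c\nu^2/A(1/\nu)\geq 2c$ because $A(y)\leq\tfrac12 y^{-2}$. Note the $2$ in $2c$ comes precisely from the factor $\tfrac12$ in that bound on $A$ (the half in the diffusion term of $\LL$), not from ``the discount rate propagating through two $x$-derivatives'' as you conjecture. Once \eqref{hx} is known, \eqref{A} follows by integrating in $x$ from $0$; no strong maximum principle on $\la v_x+v_{xx}$ is required.

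Your proposed route for \eqref{hx} — deriving a parabolic inequality for $g=-v_x/v_{xx}$ or its reciprocal and running a comparison principle on $w:=g_x-2c$ — is not the paper's argument and, as you yourself note, faces a degeneracy at the free boundary where $g$ blows up. That difficulty simply does not arise in the paper's pointwise proof. You would also need to write down and sign the full parabolic equation satisfied by $g$ (or $1/g$), verify a boundary condition for $g_x$ at $x=0$ and at $x=d(\vt)$, and justify the initial-time slice; none of this is sketched. The paper avoids all of it. In short: the ingredients you list (the PDE for $u=v_x$, the sign conditions from Theorem~\ref{thm:v}, the role of $f$) are the right ones, but you miss the identity for $v_{xxx}$ that makes the argument collapse to a one-line pointwise estimate, you propose a much harder PDE-comparison scheme that you do not actually set up, and your heuristic for the constant $2c$ is incorrect.
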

\begin{proof}
The proof is given in Appendix \ref{sec:proof3}.
\end{proof}

From \eqref{A} we see that, if $(x,t)\in\ND$, then
\begin{align}\label{wiprop}\WI(z,x,\vt)=\max\left\{z+v_x/v_{xx}(x,\vt),\; 0\right\}\leq \max\left\{z-1/\la,\; 0\right\}.
\end{align}
So $\WI(z,x,\vt)=0$ when $z\leq \frac{1}{\la}$; the insurance company should bear all the incoming claims below $\frac{1}{\la}$ by itself. In other words, any arising claim below $\frac{1}{\la}$ need not be shared with reinsurance companies.

By \eqref{A} and \eqref{hx}, we have $-\frac{v_{xx}}{v_{x}}$ is a positive strictly decreasing function in $x$ in $\ND$. Because $\PR_{z}\subseteq \ND$, 
\[\PR_{z}=\Big\{(x,t)\in \TQ \;\Big|\;x<K(z,\vt) \Big\}, \quad
\NR_{z}=\Big\{(x,t)\in \TQ \;\Big|\;x\geq K(z,\vt) \Big\}, \]
where $K(z,\vt)$ is the reinsurance boundary, given by
\[K(z,\vt):=\sup\left\{x>0\;\bigg|\;-\frac{v_{xx}}{v_{x}}(x,\vt)> \frac{1}{z}\right\}.\]
Because $\PR_{z}\subseteq \ND$, we see $K(z,\vt)\leq d(\vt)$. Moreover, when $z\leq 1/\la$, by \eqref{wiprop} we have $ \PR_{z}=\emptyset$, so $K(z,\vt)=0$.

For each $z>0$, define the overlapping of the reinsurance-uncovered region and non-dividend region as {\bf non-action region}
\[\NA_{z}=\NR_{z}\bigcap\ND=\Big\{(x,t)\in \TQ \;\Big|\; K(\hatz,\vt)\leq x<d(\vt) \Big\}.\]
In this region, the insurance company should not cede risk of magnitude $z$ or pay out dividends. We now show this region is always non-empty. This is equivalent to showing $K(z,\vt)\neq d(\vt)$ as $K(z,\vt)\leq d(\vt)$.
Note that $d(\vt)=\inf\{x\geq0\;|\;v_x(x,\vt)=1\}$ and $v\in C^{2,1}\big(\overline{\Q_T}\setminus \{(0,0)\}\big)\bigcap C\big(\overline{\Q_T}\big)$, so $(v_{x}+zv_{xx})\big|_{(d(\vt),\vt)}=1$. But evidently $(v_{x}+zv_{xx})\big|_{(K(z,\vt),\vt)}=0$, so $K(z,\vt)\neq d(\vt)$.

Combining the above results, we conclude that, for each magnitude of risk $z>0$, the reinsurance and dividend-payout barriers divide the surplus-time space into three non-overlapping regions: a (possible empty) reinsurance-covered region, a non-action region and a dividend-payout region, in an increasing order of the surplus.
This is illustrated in Figure \ref{fig2}.
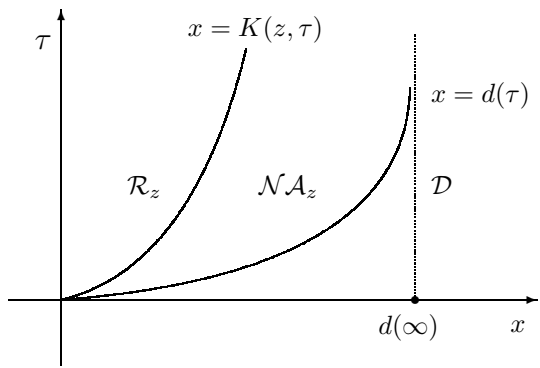
\begin{figure}[H]
\begin{center}
\begin{picture}(200,125)(0,0)
\put(0,10){\vector(1,0){200}} \put(20,-15){\vector(0,1){135}}
\qbezier[80](154, 10)(154, 60)(154, 110)
\put(10,105){$\vt$}
\put(190,-2){\footnotesize $x$}
\put(140,-3){\footnotesize $d(\infty)$}
\qbezier(20,10)(70,22)(90,105)
\put(68,110){\footnotesize $x=K(z,\vt)$}
\put(45,50){\footnotesize $\PR_{z}$}

\put(95,50){\footnotesize $\NA_{z}$}
\put(160,50){\footnotesize $\cD$}
\put(154, 10){\circle*{3}}
\qbezier(20,10)(150,20)(152,90)
\put(160,85){\footnotesize $x=d(\vt)$}
\end{picture} \bigskip
\caption{The surplus-time space is divided into three non-overlapping regions: $\PR_{z}$, $\NA_{z}$ and $\cD$ by the reinsurance barrier $x=K(z,\vt)$ and dividend-payout barrier $x=d(\vt)$. Moreover, $\ND=\PR_{z}\bigcup\NA_{z}$ and the $\vt$-axis can be regarded as $x=K(1/\la,\vt)$ since $\PR_{1/\la}=\emptyset$.
} \label{fig2}\end{center}
\end{figure}

Economically speaking, if the surplus level of the insurance company is relatively low ($x<K(z,\vt)$), then the company must cede risk of magnitude $z$ to other reinsurance companies to avoid bankruptcy. If the surplus level is medium ($K(z,\vt)\leq x\leq d(\vt)$), then the insurance company can cover the claim of magnitude $z$ by itself, but has not enough reserves to pay out as dividends, so no actions will be taken. If the surplus level is very high ($x>d(\vt)$), the company should pay out its extra reserves as dividends to its shareholders. It is never optimal for the insurance company to buy reinsurance contracts and pay out reserves as dividends, simultaneously.

By \lemref{prop:ax}, $K(z,\vt)$ is increasing and the reinsurance-covered region $\PR_{z}$ is getting larger as the risk magnitude $z$ increases. Economically speaking, if a risk magnitude is covered by reinsurance contracts, then any higher magnitude of risk should be covered as well.

Define a {\bf uniform non-action region}
\[\NA_{0}=\bigcap_{z>0}\NA_{z}.\]
In this region the insurance company should not cede any risk or pay out dividends.

If $\hatz=\esssup Z_{1}<\infty$, then $K(z,\vt)\leq K(\hatz,\vt)$. Therefore,
\[\bigcup_{z>0}\PR_{z}=\Big\{(x,t)\in \TQ \;\Big|\;x< K(\hatz,\vt) \Big\}=\PR_{\hatz}.\]
As $ K(\hatz,\vt)<d(\vt)$, we see that
\[\NA_{0}=\Big\{(x,t)\in \TQ \;\Big|\; K(\hatz,\vt)\leq x<d(\vt) \Big\}=\NA_{\hatz}\neq \emptyset.\]
This is illustrated in Figure \ref{fig3}.
\begin{figure}[H]
\begin{center}
\begin{picture}(200,125)(0,0)
\put(0,10){\vector(1,0){200}} \put(20,-15){\vector(0,1){135}}
\qbezier[80](154, 10)(154, 60)(154, 110)
\put(10,105){$\vt$}
\put(190,-2){\footnotesize $x$}
\put(140,-3){\footnotesize $d(\infty)$}
\qbezier(20,10)(70,22)(90,105)
\put(68,110){\footnotesize $x=K(z,\vt)$}
\put(45,50){\footnotesize $\PR_{z}$}
\qbezier(20,10)(100,22)(120,90)
\put(100,95){\footnotesize $x=K(\hatz,\vt)$}
\put(80,50){\footnotesize $\PR_{\hatz}$}
\put(109,50){\footnotesize $\NA_{0}$}

\put(160,50){\footnotesize $\cD$}
\put(154, 10){\circle*{3}}
\qbezier(20,10)(150,20)(152,90)
\put(160,85){\footnotesize $x=d(\vt)$}
\end{picture} \bigskip
\caption{As $z$ is increasing, the region $\PR_{z}$ is extending to $\PR_{\hatz}$ and the curve $x=K(z,\vt)$ is moving right up to $x=K(\hatz,\vt)$. The region $\NA_{0}=\NA_{\hatz}$ is non-empty when $Z_{1}$ is bounded.} \label{fig3}\end{center}
\end{figure}
Economically speaking, when the potential claims are bounded in magnitude $(\leq \hatz)$ and the surplus level is relatively high ($x\geq K(\hatz,\vt)$), then the insurance company can bear the claims by itself. 

Next, by \eqref{hx}, we have
\[\p_{x}\WI(z,x,\vt)\leq -2c \quad{\rm in}\quad \PR_{z}.\]
This means the insurance company should significantly reduce its purchase of reinsurance contracts as the surplus increases. If its surplus level is very high $x\geq \frac{z}{2c}-\frac{1}{2c\la}$, then we claim that there is no need to buy reinsurance which covers the risk of magnitude $z$, that is, $(x,t)\notin\PR_{z}$. In fact, if $z\leq 1/\la$, then there is nothing to prove since $\PR_{z}=\emptyset$. Otherwise, suppose $(x,t)\in\PR_{z}$. As $(x,t) \in\ND$, by \eqref{hx}, \eqref{wiprop} and the mean value theorem, we have
\[ \WI(z,x,\vt)\leq \WI(z,0+,\vt)-2cx\leq z-1/\la-2cx\leq 0,\]
which contradicts $(x,t)\in\PR_{z}$. Therefore, we conclude that
\[\PR_{z}\subseteq \left\{(x,t)\in \TQ \;\bigg|\;x< \frac{z}{2c}-\frac{1}{2c\la}\right\},\]
and consequently,
\[K(z,\vt)\leq \frac{z}{2c}-\frac{1}{2c\la}.\]
This is illustrated in Figure \ref{fig4}.
\begin{figure}[H]
\begin{center}
\begin{picture}(220,160)(0,0)
\put(0,10){\vector(1,0){210}} \put(20,-10){\vector(0,1){160}}
\qbezier[80](154, 10)(154, 60)(154, 110)
\put(10,130){$\vt$}
\put(190,-2){\footnotesize $x$}
\put(140,-3){\footnotesize $d(\infty)$}
\qbezier(20,10)(70,22)(90,105)
\put(68,110){\footnotesize $x=K(z,\vt)$}
\put(45,50){\footnotesize $\PR_{z}$}
\qbezier[100](110, 10)(110, 68)(110, 126)
\put(85,135){\footnotesize $x=\frac{z}{2c}-\frac{1}{2c\la}$}

\put(160,50){\footnotesize $\cD$}
\put(154, 10){\circle*{3}}
\qbezier(20,10)(150,20)(152,90)
\put(160,85){\footnotesize $x=d(\vt)$}
\end{picture} \bigskip
\caption{The line $x=\frac{z}{2c}-\frac{1}{2c\la}$ gives a universal upper bound for the reinsurance barrier $K(z,\vt)$ and the reinsurance-covered region $\PR_{z}$ when $z>1/\la$.} \label{fig4}\end{center}
\end{figure}
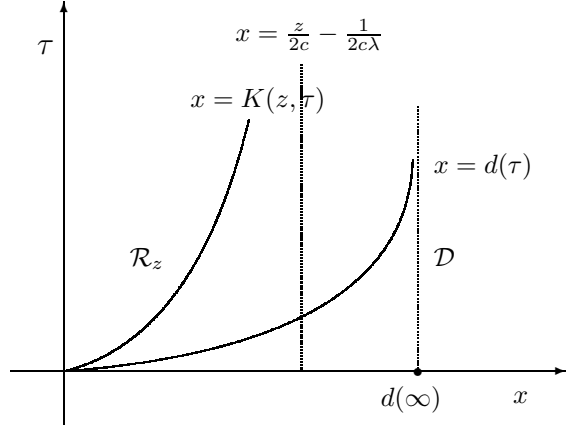
Economically speaking, the insurance company only shares sufficiently high magnitude of risk with other reinsurance companies.

In the rest part we assume the claims follows a discrete probability distribution given by
\[
\P(Z_1=z_j)=p_j>0,\quad j=1, 2,..., N,
\]
with
\[
0<z_1<z_2<\cdots<z_{i_0-1}\leq 1/\la<z_{i_0}<\cdots<z_N\quad{\rm and}\quad \sum_{j=1}^{N} p_j=1.
\]

Define the ${j^{th}}$ {\bf reinsurance boundary} as
\[
K_{j}(\vt):=\inf\left\{x>0\;\bigg|\;-\frac{v_{xx}}{v_x}(x,\vt)\leq \frac{1}{z_{j}}\right\},\quad j=1,2,...,N.
\]
And define the ${j^{th}}$ {\bf reinsurance-covered region} as
\[
\PR_{z}^{j}=\left\{(x,t)\in \TQ \;\Big|\; \WI(z_{j},x,\vt)>0\right\}.
\]
Thanks to \lemref{prop:ax}, the properties of these reinsurance boundaries are given in the following result.
\begin{theorem}\label{thm:r}
The reinsurance boundaries $K_j(\cdot)$, $j=1,2,\cdots, N$, are all continuously differentiable in time. Moreover,
\begin{align}\label{K1}
& K_j(\vt)=0\quad \hbox{for each}\quad j=1,2,...,i_0-1;\bigskip\\\label{K2}
&0<K_{i_0}(\vt)< \frac{z_{i_0}-\frac{1}{\la}}{2c};\bigskip\\\label{K3}
&0<K_j(\vt)-K_{j-1}(\vt)< \frac{z_j-z_{j-1}}{2c} \quad \hbox{for each}\quad j=i_0+1,...,N;\bigskip\\\label{K4}
&\lim\limits_{\vt\rt 0+}K_j(\vt)=0.
\end{align}
\end{theorem}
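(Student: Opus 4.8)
The plan is to derive all four statements from the structural facts already in hand: the monotone-decreasing nature of $-v_{xx}/v_x$ in $x$ on $\ND$ (from \eqref{A} and \eqref{hx}), its boundary value $1/\la$ at $x=0$ (from \eqref{A0}), the local $C^{2,1}$ regularity of $v_x$ on $\ND$ (\lemref{prop:vR}), and the obstacle equation satisfied by $u=v_x$. First I would record that each $K_j(\vt)$ is exactly the value $K(z_j,\vt)$ of the reinsurance boundary studied above, so $K_j(\vt)\le d(\vt)$ and $K_j$ inherits its basic properties. For \eqref{K1}: since $z_j\le 1/\la$ for $j\le i_0-1$ and $-v_{xx}/v_x\le 1/\la < 1/z_j$ is impossible only where the inequality $-v_{xx}/v_x>1/z_j$ would force $-v_{xx}/v_x>1/\la$ at some interior point, which contradicts \eqref{A0}--\eqref{A} (the ratio equals $1/\la$ at $x=0$ and is $\le 1/\la$... wait: actually $-v_{xx}/v_x > 1/\la$ in $\ND$ by \eqref{A}, so $-v_x/v_{xx}<\la^{-1}\cdot$... ); more carefully, \eqref{wiprop} already gives $\WI(z,x,\vt)=0$ for $z\le 1/\la$, hence $\PR_z^j=\emptyset$ and $K_j\equiv 0$ for $j\le i_0-1$. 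For \eqref{K2} and \eqref{K3}: these are the bounds $K(z,\vt)\le \tfrac{z}{2c}-\tfrac{1}{2c\la}$ and the telescoping estimate, which I would obtain from \eqref{hx}. Indeed $\partial_x(-v_x/v_{xx})\ge 2c$ on $\ND$ means that along a horizontal slice the function $-v_x/v_{xx}$ increases with slope at least $2c$; since $\WI(z_j,x,\vt)=z_j+v_x/v_{xx}$ vanishes precisely at $x=K_j(\vt)$ and $\WI(z_{j-1},\cdot)$ vanishes at $K_{j-1}(\vt)$, subtracting gives $z_j-z_{j-1}=(-v_x/v_{xx})(K_j,\vt)-(-v_x/v_{xx})(K_{j-1},\vt)\ge 2c\,(K_j(\vt)-K_{j-1}(\vt))$ provided $K_{j-1}(\vt)<K_j(\vt)$, while strict positivity of the gap follows because $-v_{xx}/v_x$ is \emph{strictly} decreasing (equivalently $v_{xxx}\ge 0$ in the weak sense together with $v_{xx}<0$ on $\ND$ forces strictness on any nondegenerate interval; if $-v_{xx}/v_x$ were locally constant on a slice one reruns the argument of \thmref{thm:h} to reach a contradiction with \eqref{vxt}). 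The case \eqref{K2} is the same computation with $K_{i_0-1}\equiv 0$ and the boundary value $1/\la$ from \eqref{A0}.

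For the time-regularity claim (continuous differentiability of each $K_j$ in $\vt$), I would argue exactly as in the smoothness part of the proof of \thmref{thm:h}: fix $\vt_0>0$ and a point $(K_j(\vt_0),\vt_0)$; since $z_j$ is a fixed atom, in a neighborhood $\mathcal B$ of this point the relevant supremum in $\mathcal L$ is attained at the value $-v_x/v_{xx}$ whenever $-v_{xx}/v_x>1/z_j$, and at the "full retention" branch otherwise, so $u=v_x$ solves a one-phase obstacle problem whose free boundary is $x=K_j(\vt)$. Because $v_x,v_t\in C^{2,1}(\ND)$ by \lemref{prop:vR}, the coefficients of the governing operator near the free boundary are smooth enough; the monotonicity $\partial_x(-v_x/v_{xx})\ge 2c>0$ gives the nondegeneracy needed to invoke the classical free-boundary regularity for parabolic obstacle problems (as cited from \cite{Fr75}), yielding $K_j\in C^1$ near $\vt_0$, hence on all of $(0,T)$. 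For \eqref{K4}, I would use $K_j(\vt)\le d(\vt)$ together with $d(0+)=0$ from \thmref{thm:h}, which immediately forces $\lim_{\vt\to 0+}K_j(\vt)=0$.

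The main obstacle I anticipate is the strict inequalities in \eqref{K2} and \eqref{K3} — showing the gaps $K_j-K_{j-1}$ (and $K_{i_0}$ itself) are \emph{strictly} positive rather than merely nonnegative, and strictly below the stated bound. Nonnegativity is immediate from monotonicity of $K(z,\vt)$ in $z$; the strict upper bound follows cleanly from $\partial_x(-v_x/v_{xx})\ge 2c$ as sketched; but the strict lower bound $K_j>K_{j-1}$ requires ruling out a horizontal segment on which $-v_{xx}/v_x$ is constant. The clean way to do this is to note that on such a segment $v_{xxx}=0$ in the weak sense is not enough by itself, so instead I would feed the putative constancy back into the obstacle PDE for $u=v_x$ (as in the $d(\cdot)$ argument) and differentiate in $t$ to contradict $v_{xt}\ge 0$, or alternatively invoke the strong maximum principle for the equation satisfied by $w:=v_x+z_j v_{xx}$ on $\PR_z^j$, which is linear-parabolic there with $w>0$ in the interior and $w=0$ on the free boundary, precluding interior flatness. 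I expect this nondegeneracy argument, together with carefully tracking where the "$\max\{\cdot,0\}$" in $\WI$ switches branches, to be where the real work lies; everything else is bookkeeping on top of \lemref{prop:vR}, \lemref{prop:ax}, and \thmref{thm:h}.
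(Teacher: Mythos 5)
Your handling of \eqref{K1} and \eqref{K4} is fine, and for \eqref{K2}--\eqref{K3} the telescoping idea via the mean value theorem and \eqref{hx} is exactly what the paper uses; note however that the strictness issue you flag at the end is a non-issue. Lemma~\ref{prop:ax} gives $\partial_x(-v_x/v_{xx})\geq 2c>0$ on $\ND$, so $-v_x/v_{xx}$ is strictly increasing in $x$ there, and the level sets $K_{j-1}(\vt)<K_j(\vt)$ are automatically strictly separated with no need to rule out horizontal flat segments by a contradiction or strong-maximum-principle argument. Likewise, the strict upper bound comes from the mean value theorem applied at an interior point $\xi\in(K_{j-1}(\vt),K_j(\vt))$ where in fact $\partial_x(-v_x/v_{xx})(\xi,\vt)>2c$ (for $x>0$ one has $-v_x/v_{xx}>1/\la$, hence $f(1/\nu)>0$ strictly in \eqref{nux}); your worry that one only gets $\geq$ would be resolved by this observation. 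So that entire second paragraph of anticipated difficulty is unnecessary machinery.

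The genuine gap is in your argument for the $C^1$ time-regularity of $K_j$. You propose to treat $x=K_j(\vt)$ as the free boundary of a ``one-phase obstacle problem'' and invoke the Friedman-type free-boundary regularity used in the proof of \thmref{thm:h}. But $K_j(\vt)$ is \emph{not} the free boundary of the obstacle problem for $u=v_x$ — that free boundary is $d(\vt)$, where the constraint $u\geq1$ becomes active. The curves $K_j(\vt)$ live strictly inside the no-dividend region $\ND$, where $u$ satisfies the unconstrained quasilinear equation $u_t-\mathcal{T}u=0$; what happens across $K_j$ is only that the coefficient functions $A(-u_x/u)$ and $B(-u_x/u)$ have a corner coming from the $\min\{z_j,\cdot\}$ truncation in $h^*$, which is a different phenomenon from an obstacle free boundary. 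The argument that actually closes this step (and which the paper uses) is far more elementary: by \lemref{prop:vR}, $v_x,v_t\in C^{2,1}(\ND)$ and $v_{xx}<0$ there, hence $-v_x/v_{xx}\in C^{1,1}(\ND)$; combined with the nondegeneracy $\partial_x(-v_x/v_{xx})>2c>0$ from \eqref{hx}, the implicit function theorem applied to the level-set equation $(-v_x/v_{xx})(K_j(\vt),\vt)=z_j$ gives $K_j\in C^1((0,T])$ directly. Your route would require justifying obstacle-problem regularity theory at interfaces that are not actually obstacle free boundaries, which does not go through.
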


\begin{proof}
By \eqref{A0}, \eqref{hx} and $0<z_1<z_2<...<z_{i_0-1}\leq 1/\la$, we have \eqref{K1}. Note that $\left(-\frac{v_x}{v_{xx}}\right)(K_{i_0}(t),t)=z_{i_0}>\frac{1}{\la}$. Since $-\frac{v_x}{v_{xx}}$ is continuous in ${\cal ND}$, we obtain
$ K_{i_0}(t)>0.$ Thanks to the mean-value theorem and \eqref{hx},
\[
z_1-\frac{1}{\la}=\left(-\frac{v_x}{v_{xx}}\right)(K_1(t),t)-\left(-\frac{v_x}{v_{xx}}\right)(0,t)> 2c K_1(t).
\]
It gives \eqref{K2}. The proof of \eqref{K3} is similar. The fact that $K_1(t)<...<K_N(t)<d(t)$ and $\lim\limits_{t\rt 0+}d(t)=0$ gives \eqref{K4}.

Now, we prove $K_j(t)\in C^1((0,T])$ for $j\geq i_0$. By \lemref{prop:vR} and $v_{xx}<0$ in ${\cal ND}$, we see that $-\frac{v_x}{v_{xx}}\in C^{1,1}({\cal ND})$. Because $\p_x\left(-\frac{v_x}{v_{xx}}\right)> 2c$ in ${\cal ND}$, the implicit function theorem implies that $K_j(t)\in C^1((0,T])$.
\end{proof}

Figure \ref{fig5} illustrates this result.
\begin{figure}[H]
\begin{center}
\begin{picture}(220,140)(0,0)
\put(0,10){\vector(1,0){210}} \put(20,-10){\vector(0,1){135}}
\qbezier[80](154, 10)(154, 60)(154, 110)
\put(10,110){$\vt$}
\put(190,-2){\footnotesize $x$}
\put(140,-3){\footnotesize $d(\infty)$}
\qbezier(20,10)(40,12)(45,90)\put(34,95){\footnotesize $K_{i_0}(\vt)$}
\qbezier(20,10)(72,20)(84,90)\put(63,95){\footnotesize $K_{i_0+1}(\vt)$}
\qbezier(20,10)(110,20)(120,90)\put(108,95){\footnotesize $K_N(\vt)$}
\put(85, 60){$\cdots$}
\put(23, 60){$\PR_{z}^{i_0}$}
\put(49, 60){$\PR_{z}^{i_0+1}$}
\put(118, 60){\footnotesize $\NA_{0}$}
\put(160, 60){\footnotesize $\cD$}
\put(154, 10){\circle*{3}}
\qbezier(20,10)(150,20)(152,90)
\put(140, 95){\footnotesize $d(\vt)$}
\end{picture} \bigskip
\caption{The reinsurance boundaries $K_j$ and reinsurance-covered regions $\PR_{z}^{j}$ are increasing in $j$. The uniform non-action region $\NA_{0}$ is $\NA_{z_{N}}$.
} \label{fig5}\end{center}
\end{figure}
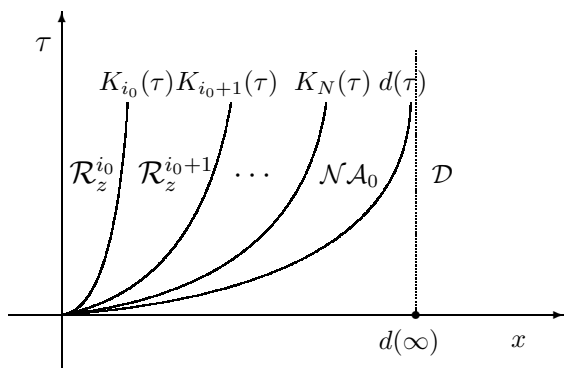

Economically speaking, the insurance company should only cede the claims $z_{j}, z_{j+1},\cdots z_{N}$, if the surplus level is in $\PR_{z}^{j}$. \bigskip
\par
\textbf{Acknowledgment.} The authors would like to thank the Area Editor Prof.\;Jose Blanchet, the anonymous Associate Editor and two referees for their valuable comments and suggestions, which greatly improve the previous two versions of the manuscript.

\newpage
\begin{appendices}

\section{Proof of \thmref{thm:v}.}\label{sec:proof1}
\setcounter{equation}{0}
Thanks to \eqref{distributionF}, $z(1-F(z))$ is dominated by $z^{-2}$ near infinity, hence, by Fubini's theorem and the dominated convergence theorem,
\[\int_0^{\infty} z^2\d F(z)=2\int_0^{\infty} z(1-F(z))\d z <\infty.\]
Therefore, the first and second moments of the claims are finite, that is, 
\begin{align}\label{defmu}
\mu_1:=\int_0^\infty z\d F(z)<\infty, \quad \mu_2:=\int_0^\infty z^2\d F(z)<\infty.
\end{align}

To solve \eqref{v_pb00}, we first solve the optimization problem in the operator ${\cal L}$, namely
\begin{align*}
\sup\limits_{H\in \aH}\Big(\frac{ v_{xx}}{2}\int_0^\infty H(z)^2 \d F(z)+v_x\int_0^\infty H(z) \d F(z) \Big),
\end{align*}
or
\begin{align*}
\sup\limits_{H\in \aH}\int_0^\infty\Big(\frac{1}{2} H(z)^2v_{xx}+H(z)v_{x} \Big)\d F(z),
\end{align*}
which is obviously equivalent to
\be\label{sup}
\int_0^\infty\sup\limits_{0\leq h\leq z}\Big(\frac{1}{2}h^2 v_{xx}+h v_x\Big)\d F(z).
\ee
Define
\begin{align}\label{h*}
h^*(z,y):=\argmax_{0\leq h\leq z}\left(-\frac{1}{2}h^2y+h \right)=
\begin{cases}
\min\{z,y^{-1}\}, & {\rm if}\; 0<y<\infty; \bigskip\\
z, & {\rm otherwise.}
\end{cases}
\end{align}
Notice $v_x\geq 1$ in \eqref{v_pb00}, so \eqref{sup} is equal to
\be\label{sup2}
v_{xx}\int_{0}^{\infty} \frac{1}{2}\left(h^{*}\Big(z,-\frac{v_{xx}}{v_{x}}\Big)\right)^{2}\d F(z)
+v_{x}\int_{0}^{\infty} h^{*}\Big(z,-\frac{v_{xx}}{v_{x}}\Big) \d F(z).
\ee
For $y>0$, we have
\begin{align*}
\int_{0}^{\infty} \frac{1}{2}(h^{*}(z,y))^{2}\d F(z)
&=\int_0^{1/y}\frac{1}{2}z^2\d F(z)+\int_{1/y}^{\infty}\frac{1}{2}y^{-2}\d F(z)\\
&=-\frac{1}{2}z^{2}(1-F(z))\Big|_0^{1/y}+\int_0^{1/y}z(1-F(z))\d z+\frac{1}{2}y^{-2}(1-F(y^{-1}))\\
&=\int_0^{1/y}z(1-F(z))\d z,
\end{align*}
and
\begin{align*}
\int_{0}^{\infty} h^{*}(z,y)\d F(z)
&=\int_0^{1/y}z\d F(z)+\int_{1/y}^{\infty}y^{-1}\d F(z)\\
&=-z(1-F(z))\Big|_0^{1/y}+\int_0^{1/y}(1-F(z))\d z+y^{-1}(1-F(y^{-1}))\\
&=\int_0^{1/y}(1-F(z))\d z. 
\end{align*}
Similarly for $y\leq 0$, we have 
\begin{align*}
\int_{0}^{\infty} \frac{1}{2}(h^{*}(z,y))^{2}\d F(z)=\frac{1}{2}\int_{0}^{\infty} z^{2}\d F(z)=\frac{1}{2}\mu_2,
\end{align*}
and
\begin{align*}
\int_{0}^{\infty} h^{*}(z,y)\d F(z)=\int_{0}^{\infty} z\d F(z)=\mu_1. 
\end{align*}
Hence, \eqref{sup2} is equal to 
\bee
A\left(-\frac{v_{xx}}{v_{x}}\right) v_{xx}+B\left(-\frac{v_{xx}}{v_{x}}\right) v_x,
\eee
where the two functions $A(y)$ and $B(y)$ are defined by
\be\label{defA}
A(y)=\begin{cases}
\int_0^{1/y} z(1-F(z))\d z, & {\rm if}\;0< y<+\infty;\bigskip\\
\frac{1}{2}\mu_{2}=\int_0^{\infty} z(1-F(z))\d z, & {\rm if}\; y\leq 0,
\end{cases}
\ee
and
\be\label{defB}
B(y)=\begin{cases}
\int_0^{1/y}(1-F(z))\d z, & {\rm if}\;0< y<+\infty;\bigskip\\
\mu_{1}=\int_0^{\infty} (1-F(z))\d z, & {\rm if}\; y\leq 0.
\end{cases}
\ee 
Thanks to \eqref{distributionF}, it is not hard to show that both $A(y)$ and $B(y)$ are decreasing { Lipschitz continuous} functions, and satisfy
\begin{gather}\label{A_b}
0<A(y)\leq \frac{1}{2}\min\{y^{-2},\;\mu_2\}, \quad 0<B(y)\leq \min\{y^{-1},\;\mu_1\}, \quad {\rm for\;}y>0;\\
\label{Ap}
y^{3}A'(y)=y^{2}B'(y)=F(y^{-1})-1\leq 0, \quad {\rm for\; a.e.\;}{ y > 0}; \;\;\\
A'(y)=B'(y)=0,\quad {\rm for\;}{ y <0}. \label{AB}
\end{gather}
We now obtain
\be\label{calL2}
{\cal L}v=A\left(-\frac{v_{xx}}{v_{x}}\right) v_{xx}+B\left(-\frac{v_{xx}}{v_{x}}\right)v_x-\ga v_x-cv.
\ee
By this, we see problem \eqref{v_pb00}
is a variational inequality problem for a fully nonlinear elliptic operator subject to a gradient constraint. A usual approach to study this kind of problem is to transform it into a variational inequality problem for its gradient. Then the gradient constraint becomes value constraint and the new variational inequality becomes a well-studied obstacle problem; see \cite{DY09a, DY09b, GY19(2),HY16}. In this paper, we adopt the same approach.

Below we want to find (or more precisely, conjecture) a variational inequality for the gradient of $v$. Notice
\begin{align*}
&\quad\;\p_x\left[A\left(-\frac{v_{xx}}{v_{x}}\right) v_{xx}+B\left(-\frac{v_{xx}}{v_{x}}\right)v_x \right]\bigskip\\
&=A\left(-\frac{v_{xx}}{v_{x}}\right) v_{xxx}+B\left(-\frac{v_{xx}}{v_{x}}\right)v_{xx}+\left[A'\left(-\frac{v_{xx}}{v_{x}}\right)v_{xx}+B'\left(-\frac{v_{xx}}{v_{x}}\right)v_x\right]\p_x \left(-\frac{v_{xx}}{v_{x}}\right)\bigskip\\
&=A\left(-\frac{v_{xx}}{v_{x}}\right) v_{xxx}+B\left(-\frac{v_{xx}}{v_{x}}\right)v_{xx},
\end{align*}
where we used \eqref{Ap} and \eqref{AB} to get the last equation. By this, one can easily deduce that
\[
\p_x({\cal L}v)={\cal T}v_x,
\]
where the operator ${\cal T}$ is defined as
\be\label{calT}
{\cal T}u:=A\left(-\frac{u_x}{u}\right) u_{xx}+B\left(-\frac{u_x}{u}\right)u_x-\ga u_x-cu.
\ee

Next, we deduce a boundary condition for $v_x$. Define a continuous function
\be\label{f}
f(y):=-y A(y)+B(y)-\ga.
\ee
By \eqref{A_b}-\eqref{AB},
\begin{align}\label{fp}
f'(y)=-A(y)<0.
\end{align}
Hence $f$ is strictly decreasing. Also
\[f(0+)=\mu_1-\ga>0,\quad f(+\infty)=-\ga<0,\]
so $f$ has a unique root, which is positive, denote by $\la$ throughout this paper. See Figure \ref{fig0} for an illustration of the function $f$.
\begin{figure}[H]
\begin{center}
\begin{picture}(190,120)(0, 0)
\put(5,35){\vector(1,0){185}} \put(90,-5){\vector(0,1){135}}
\qbezier(30, 115)(60, 85)(90, 55)
\qbezier(90, 55)(120, 30)(180, 25)
\put(90, 55){\circle*{3}}
\put(56, 51){\footnotesize $\mu_{1}-\ga$}
\qbezier[100](10, 20)(105, 20)(185, 20)
\put(72, 12){\footnotesize $-\ga$}
\put(90, 20){\circle*{3}}
\put(122, 25){\footnotesize $\la$}
\put(128, 35){\circle*{3}}
\end{picture}
\caption{The function $f$ is continuous and strictly deceasing, and admits a unique positive root $\la$.} \label{fig0}\end{center}
\end{figure}
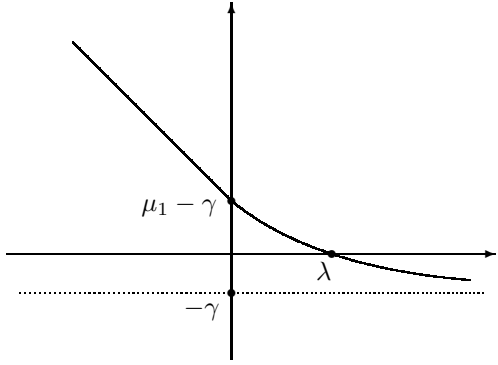
On the other hand, owing to the boundary condition $v(0,t)=0$ and that $v_t-{\cal L}v=0$ near $\{x=0\}$, we should have
\be\label{Lv0}
\left(A\left(-\frac{v_{xx}}{v_{x}}\right) v_{xx}+B\left(-\frac{v_{xx}}{v_{x}}\right)v_x-\ga v_x \right)\Bigg|_{x=0}=0.
\ee
Dividing by $v_x$, it yields $f\left(-\frac{v_{xx}}{v_{x}}\right)\Big|_{x=0}=0$. So $-\frac{v_{xx}}{v_{x}}\big|_{x=0}=\la$ by the strictly monotonicity of $f$. This leads to a boundary condition
\be\label{LC}
\Big(\la v_x+v_{xx}\Big)\Big|_{x=0}=0.
\ee

Combining the above results, we suggest the following variational inequality for $u=v_x$
\be\label{u_pb}
\begin{cases}
\min\{u_t-{\cal T}u,\; u-1\}=0 \quad
{\rm in}\quad \Q_T,\bigskip\\
\big(\la u+u_x\big)(0,t)=0, \quad 0<t\leq T,\bigskip\\
u(x,0)=1, \quad x>0.
\end{cases}\ee
This is an obstacle problem for a quasilinear elliptic operator with mixed boundary conditions. We study it by the penalty approximation method; see \cite{DY09b, GY19(2),Yi08}. We first prove the existence of a solution to problem \eqref{u_pb}, and then construct a solution to problem \eqref{v_pb00} from it. In the last part of this section, we show the uniqueness, which is indeed not necessary as the verification theorem also guarantees the uniqueness.

For each sufficiently small $\ep>0$, let $\beta _\ep(\cdot)$ be a penalty function satisfying
\begin{eqnarray*}
&&\beta _{\ep}(\cdot)\in C^{2}(-\infty,+\infty), \quad\beta _{\ep}(0)=-c, \quad\beta _{\ep}(x)=0\; \text{ for }\; x\geq \ep, \bigskip\\
&& \beta _{\ep
}(\cdot)\leq 0, \quad \beta _{\ep}^{\prime}(\cdot)\geq 0,\quad \beta _{\ep
}^{\prime \prime}(\cdot)\leq 0, \quad\lim\limits_{\ep \rightarrow 0+} \beta _{\ep}(x)=\begin{cases}
0, &{\rm if}\;\; x>0,\vspace{2mm} \\
-\infty, &{\rm if}\;\; x<0.
\end{cases}\end{eqnarray*}%
See Figure \ref{fig_beta} for an illustration of the penalty function $\beta_\ep$.
\begin{figure}[H]
\begin{center}
\begin{picture}(230,120)
\put(45,80){\vector(1,0){150}}
\put(110,-5){\vector(0,1){120}}
\qbezier(104,0)(115,80)(142,80)
\put(180,70){$x$}
\put(100,105){$y$}
\put(138,70){$\ep$}
\put(90,31){$-c$}
\put(110,33){\circle*{3}}
\put(141,80){\circle*{3}}
\end{picture} \caption{The penalty function $\beta_\ep$.} \label{fig_beta}
\end{center}
\end{figure}
Because the left boundary condition and initial condition in \eqref{u_pb} are not consistent at $(0,0)$, in order to show the existence of a solution, we choose $f_\ep\in C^2([0,+\infty))$ to satisfy
\bee
f_{\ep}(t)=
\begin{cases}
\la,& t=0,\\
{\rm decreasing},& 0\leq t<\ep,\\
0,& t\geq\ep,
\end{cases}\eee
and consider the following penalty approximation problem,
\be\label{ue_pb}
\begin{cases}
u^\ep_t-{\cal T} u^\ep+\beta _\ep(u^\ep-1)=0 \quad{\rm in}\quad \QLT:=(0,L)\times(0,T],\bigskip\\
\big(\la u^\ep+u^\ep_x\big)(0,t)=f_\ep(t),\quad 0<t\leq T,\bigskip\\
u^\ep(L,t)=1,\quad 0<t\leq T,\bigskip\\
u^\ep(x,0)=1,\quad 0<x<L.
\end{cases}\ee
where $L$ is any fixed large positive constant and ${\cal T}$ is defined by \eqref{calT}.

The following result is useful for us to handle the above problem.
\begin{lemma}\label{G}
Let $A$ and $B$ be defined by \eqref{defA} and \eqref{defB}, respectively, and define
\[
F(y,z):=A\left(-\frac{y}{z}\right),\qquad G(y,z):=B\left(-\frac{y}{z}\right).
\]
Then $F$ and $G$ are uniformly Lipschitz continuous in $(-\infty,\infty)\times[1,\infty) $.
\end{lemma}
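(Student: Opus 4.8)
The goal is to show that $F(y,z)=A(-y/z)$ and $G(y,z)=B(-y/z)$ are uniformly Lipschitz on $(-\infty,\infty)\times[1,\infty)$. The plan is to reduce everything to the scalar functions $A$ and $B$, for which \eqref{A_b}--\eqref{AB} already give us Lipschitz continuity together with explicit derivative bounds, and then control the composition with the smooth map $(y,z)\mapsto -y/z$. The key point is that this inner map is \emph{not} globally Lipschitz on all of $\mathbb{R}\times[1,\infty)$ (its $y$-derivative $-1/z$ is bounded, but its $z$-derivative $y/z^{2}$ blows up as $|y|\to\infty$), so a naive chain-rule estimate will not close; the argument must exploit the decay of $A'$ and $B'$ at infinity recorded in \eqref{Ap}.

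First I would dispose of the region $y\le 0$. There $-y/z\ge 0$, but more usefully, one checks from \eqref{AB} that $A'(w)=B'(w)=0$ for $w<0$, i.e.\ $A$ and $B$ are constant ($=\tfrac12\mu_2$ and $\mu_1$) on $(-\infty,0]$; hence $F$ and $G$ are constant on $\{y<0\}\times[1,\infty)$, and by continuity across $y=0$ it suffices to treat $y\ge 0$. So from now on restrict to $y\ge 0$, $z\ge 1$, where $w:=-y/z\le 0$ — wait, that is again the constant region. The substantive region is therefore where the \emph{argument} $-y/z$ is positive, which needs $y<0$; let me instead set $w=-y/z$ and split on the sign of $w$. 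When $w\le 0$ ($y\ge 0$) the functions are constant as noted; when $w>0$ ($y<0$) we are in the ``$0<y<\infty$'' branch of \eqref{defA}--\eqref{defB}. Thus the only real work is on $\mathcal{R}:=\{(y,z): y<0,\ z\ge 1\}$, where $w=-y/z=|y|/z>0$.

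On $\mathcal{R}$ I would estimate $F_y$, $F_z$, $G_y$, $G_z$ directly. Writing $a:=A'(w)$, by \eqref{Ap} we have $w^{3}a=F(w^{-1})-1\in[-1,0]$, so $|a|\le w^{-3}$; similarly $|B'(w)|\le w^{-2}$ by \eqref{Ap}; and the $\le 0$ signs show both $A',B'\le 0$. Then $F_y=A'(w)\cdot(-1/z)$, so $|F_y|\le w^{-3}/z = (z/|y|)^{3}/z = z^{2}/|y|^{3}$. This is still not obviously bounded — it blows up as $|y|\to 0$. The resolution is that $A$ itself is also bounded by $\tfrac12 w^{-2}$ via \eqref{A_b}, i.e.\ near $w=0$ (large $|y|/z$... no, $w\to0$ means $|y|\to 0$ relative to $z$), $A$ is small; and crucially $A$ is globally Lipschitz as already asserted after \eqref{defB}. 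So the cleanest route is: do \emph{not} differentiate, but argue directly. Let $L_A,L_B$ be the (global) Lipschitz constants of $A,B$ on $\mathbb{R}$. For the $y$-variable, $|F(y,z)-F(y',z)|=|A(-y/z)-A(-y'/z)|\le L_A|y-y'|/z\le L_A|y-y'|$ since $z\ge 1$, and likewise for $G$; this gives Lipschitz continuity in $y$, uniformly in $z$, with constant $L_A$ (resp.\ $L_B$).

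For the $z$-variable I would use the derivative bound but paired with the \emph{other} bound on $A$. We have, for fixed $y<0$ and $z,z'\ge 1$, $|F(y,z)-F(y,z')|=|A(-y/z)-A(-y/z')|$. Now $-y/z$ and $-y/z'$ are two positive numbers; using \eqref{A_b} in the form $0<A(w)\le \tfrac12 w^{-2}$, both values lie in $[0,\tfrac12\max(z,z')^{2}/y^{2}]$... that bounds the \emph{values}, not the difference. Better: use the mean value theorem on $A$ between $-y/z$ and $-y/z'$. There is $\xi$ between them with $A(-y/z)-A(-y/z')=A'(\xi)\,(-y)(1/z-1/z')=A'(\xi)(-y)\frac{z'-z}{zz'}$. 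Since $\xi$ lies between $|y|/z$ and $|y|/z'$, we have $\xi\ge |y|/\max(z,z')$, hence $|A'(\xi)|\le \xi^{-3}\le \max(z,z')^{3}/|y|^{3}$. Combining, $|F(y,z)-F(y,z')|\le \frac{\max(z,z')^{3}}{|y|^{3}}\cdot|y|\cdot\frac{|z-z'|}{zz'}=\frac{\max(z,z')^{3}}{|y|^{2}zz'}|z-z'|$. This is \emph{still} unbounded as $|y|\to0$; the honest fix is to also use $|A'(\xi)|\le \tfrac12$ bounded? No such bound is stated. Hmm — so the genuine obstacle, as flagged, is the small-$|y|$ (equivalently small-$w$) regime, and I expect the author resolves it by the observation that $A(w)\to 0$ and $A'(w)\to$ a finite limit as $w\to0+$: indeed \eqref{Ap} gives $A'(w)=w^{-3}(F(w^{-1})-1)$, and for $w$ small, $w^{-1}$ is large, so $F(w^{-1})-1=-(1-F(w^{-1}))$ which under the tail assumption \eqref{distributionF} is $O(w^{3}\cdot\text{const})$ — precisely because $1-F(s)=O(s^{-3})$! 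So $|A'(w)|\le C$ uniformly near $w=0$, and combined with $|A'(w)|\le w^{-3}$ for $w$ bounded away from $0$ in a way compatible with the chain-rule factors, one gets a uniform bound.

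So the step I expect to be the crux, and which I would flag as the main obstacle, is precisely establishing a \emph{global} bound on $|A'(w)|$ and $|B'(w)|$ for $w>0$ out of \eqref{Ap} and the tail hypothesis \eqref{distributionF}: for $w$ large we use $|A'(w)|\le w^{-3}\to0$; for $w$ small we use $|A'(w)| = w^{-3}\bigl(1-F(w^{-1})\bigr)\le w^{-3}\cdot \sup_{s}\bigl(s^{3}(1-F(s))\bigr)\cdot w^{3} = \text{const}$ by \eqref{distributionF}; and on any compact subinterval of $(0,\infty)$ the bound $w^{-3}$ is finite. Hence $\|A'\|_{L^\infty(0,\infty)}<\infty$ and $\|B'\|_{L^\infty(0,\infty)}<\infty$ (the latter via $|B'(w)|=w^{-2}(1-F(w^{-1}))\le w^{-1}\cdot s^{3}(1-F(s))|_{s=w^{-1}}$ for small $w$, again bounded by \eqref{distributionF}; and $\le w^{-2}\to0$ for large $w$). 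Once these uniform bounds $M_A:=\|A'\|_\infty$, $M_B:=\|B'\|_\infty$ are in hand, the rest is the routine chain-rule-plus-MVT estimate: on $\mathcal{R}$, $|F_y|=|A'(-y/z)|/z\le M_A$, and $|F_z|=|A'(-y/z)|\,|y|/z^{2}=|A'(-y/z)|\cdot(|y|/z)\cdot(1/z)\le M_A\cdot(|y|/z)/z$ — and here one uses the complementary bound $|A'(w)|\le w^{-3}$ to kill the factor $|y|/z=w$: $|A'(-y/z)|\cdot(|y|/z)\le \min\{M_A\cdot w,\ w^{-3}\cdot w\}=\min\{M_A w, w^{-2}\}$, which is bounded over $w>0$ by $\max(M_A, 1)$ (split at $w=1$). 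Dividing by $z\ge1$ leaves $|F_z|\le \max(M_A,1)$. An identical computation with $|B'(w)|\le w^{-2}$ gives $|G_y|\le M_B$ and $|G_z|\le \max(M_B,1)$ on $\mathcal{R}$. Since $F,G$ are constant off $\mathcal{R}$ within $\mathbb{R}\times[1,\infty)$ and continuous across the interface $\{y=0\}$, a standard gluing argument (integrating the gradient bound along piecewise-axis-parallel paths) upgrades these pointwise a.e.\ gradient bounds to a global Lipschitz bound on all of $(-\infty,\infty)\times[1,\infty)$, with constant $\sqrt{2}\max(M_A,M_B,1)$ or so. That completes the proof; I would write the two uniform-bound lemmas for $A',B'$ as a short preliminary and then present the gradient estimate in one display.
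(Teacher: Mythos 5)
Your proposal arrives, after some false starts, at essentially the paper's argument: compute $\partial_y F$ and $\partial_z F$ by the chain rule, substitute $A'(w)=w^{-3}\big(F(w^{-1})-1\big)$ from \eqref{Ap}, and observe that the tail bound \eqref{distributionF} (together with $z\ge 1$, and for $\partial_z F$ the trivial bound $1-F\le 1$ for small arguments) makes the resulting expressions uniformly bounded. One algebraic slip to fix: for $\|B'\|_\infty$ near $w=0$ you wrote $|B'(w)|\le w^{-1}\cdot s^3(1-F(s))|_{s=w^{-1}}$, which does not give boundedness; the correct identity is $w^{-2}(1-F(w^{-1}))=w\cdot s^3(1-F(s))|_{s=w^{-1}}$, which is $O(w)$ as $w\to 0+$.
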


\begin{proof}
As $A(y)$ is a Lipschitz continuous function, $F$ is continuous in $(-\infty,\infty)\times[1,\infty)$. For any $z\geq 1$,
\[
\p_y F(y,z)=A'\left(-\frac{y}{z}\right)\frac{-1}{z}=
\begin{cases}
\left(-\frac{z}{y}\right)^{3}\left(1-F\left(-\frac{z}{y}\right)\right)\frac{1}{z}, & {\rm if}\quad y<0;\medskip\\
0,& {\rm if}\quad y> 0,
\end{cases}\]
and
\[
\p_z F(y,z)=A'\left(-\frac{y}{z}\right)\frac{y}{z^2}=
\begin{cases}
\left(-\frac{z}{y}\right)^{2}\left(1-F\left(-\frac{z}{y}\right)\right)\frac{1}{z}, & {\rm if}\quad y<0;\medskip\\
0,& {\rm if}\quad y> 0.
\end{cases}
\]
Thanks to \eqref{distributionF} and $z\geq 1$, we see $\p_y F$ and $\p_z F$ are bounded, so $F$ is uniformly Lipschitz continuous in $(-\infty,\infty)\times[1,\infty) $. In the same way, we can show $G$ is also uniformly Lipschitz continuous in $(-\infty,\infty)\times[1,\infty) $.
\end{proof}

We first establish a comparison principle for problem \eqref{ue_pb}.
\begin{lemma}\label{lem:com}
Suppose $u_1,\;u_2\in C^{2,1}\big(\QLT\big)\bigcap C\big(\overline{\QLT}\big)$ satisfy
\bee
\begin{cases}
\p_t u_1-{\cal T} u_1+\beta_\ep(u_1-1)
\leq \p_t u_2-{\cal T} u_2+\beta_\ep(u_2-1) \quad{\rm in}\quad \QLT,\bigskip\\
\big(\la u_1+\p_x u_1\big)(0,t)\geq \big(\la u_2+\p_x u_2\big)(0,t),\quad 0<t\leq T,\bigskip\\
u_1(L,t)\leq u_2(L,t),\quad 0<t\leq T,\bigskip\\
u_1(x,0)\leq u_2(x,0),\quad 0<x<L.
\end{cases}\eee
If $u_1,\;u_2\geq1$, then
\be\label{com}
u_1(x,t)\leq u_2(x,t),\quad (x,t)\in \QLT.
\ee
\end{lemma}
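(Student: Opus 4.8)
The plan is to argue by contradiction using the classical maximum principle for parabolic equations, after linearizing the fully nonlinear operator $\mathcal{T}$ along the segment joining $u_1$ and $u_2$. Suppose the conclusion fails, so that $w:=u_1-u_2$ is strictly positive somewhere in $\overline{\QLT}$. Because $w\leq 0$ on the parabolic boundary pieces $\{t=0\}$ and $\{x=L\}$, and because the domain is bounded, the positive supremum of $e^{-\lambda t}w$ (or simply $w$, after absorbing the zeroth-order term) is attained at some interior point or on the left boundary $\{x=0\}$. First I would write the difference of the two differential inequalities and, using that $\mathcal{T}v = F(v_x,v)v_{xx} + G(v_x,v)v_x - \gamma v_x - cv$ with $F,G$ from \lemref{G}, express $\mathcal{T}u_1 - \mathcal{T}u_2$ as a linear combination of $w_{xx}$, $w_x$, $w$ with bounded, measurable coefficients: the coefficient of $w_{xx}$ is $F(\p_x u_1,u_1)>0$ (strict positivity of $A$ from \eqref{A_b}), and the coefficients of $w_x$ and $w$ come from the mean-value theorem applied to $F$, $G$ (Lipschitz by \lemref{G}, and note $u_1,u_2\geq 1$ so the arguments stay in the strip $(-\infty,\infty)\times[1,\infty)$ where the Lipschitz bounds hold), together with the extra $-\gamma$, $-c$ terms and the penalty difference $\beta_\ep(u_1-1)-\beta_\ep(u_2-1)$. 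Since $\beta_\ep$ is nondecreasing, $\beta_\ep(u_1-1)-\beta_\ep(u_2-1)$ has the same sign as $w$, which is exactly the good sign for a maximum-principle argument: at an interior positive maximum of $w$ we would get $w_t\geq 0$, $w_x=0$, $w_{xx}\leq 0$, hence $\p_t w - \mathcal{T}$-difference $+$ (penalty difference) $\geq \big(\text{nonneg}\big) - F\cdot w_{xx} - (\text{bounded})\cdot 0 + c\,w + (\text{nonneg penalty})$, and the dominant surviving term $c\,w>0$ contradicts the assumed inequality $\leq 0$.

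To make this rigorous in the presence of merely bounded (not continuous) coefficients, I would not rely on a pointwise maximum directly but instead introduce the auxiliary function $w_\delta := w - \delta\,\phi(x)e^{Kt}$ for a suitable barrier $\phi$ and large constant $K$, or equivalently multiply through by an exponential weight $e^{-Kt}$ to kill the zeroth-order term's sign ambiguity, and then apply the weak maximum principle for uniformly parabolic operators in nondivergence form with bounded coefficients (the operator is uniformly parabolic on $\overline{\QLT}$ because $F = A(-u_{1,x}/u_1)$ is bounded below away from $0$: here I would need a lower bound $F\geq c_0>0$, which follows from $A(y)>0$ together with the fact that the relevant $y=-u_{1,x}/u_1$ ranges over a bounded set — this is where I would have to be a little careful, see below). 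The left boundary $\{x=0\}$ is handled by the Hopf lemma: if the positive maximum of the weighted $w$ were attained at $(0,t_0)$, then $\p_x w(0,t_0) < 0$ strictly, so $(\lambda w + \p_x w)(0,t_0) = \lambda w(0,t_0) + \p_x w(0,t_0)$; but the boundary hypothesis gives $(\lambda u_1 + \p_x u_1)(0,t)\geq (\lambda u_2 + \p_x u_2)(0,t)$, i.e. $(\lambda w + \p_x w)(0,t)\geq 0$, and combined with $w(0,t_0)>0$ this is consistent — so I must instead use Hopf together with the differential inequality: the Hopf lemma gives $\p_x w(0,t_0)<0$ while the interior equation and $c>0$ force, near that point, a contradiction of the same type as in the interior case. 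The cleanest route is: rule out interior maxima first by the computation above, conclude the weighted maximum lives on $\partial\QLT$, rule out $\{t=0\}$ and $\{x=L\}$ by hypothesis, and rule out $\{x=0\}$ by the Hopf lemma plus the sign of the oblique boundary condition.

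The main obstacle I anticipate is \emph{uniform parabolicity / the ellipticity lower bound for $F$}. The coefficient of $w_{xx}$ is $A(-\p_x u_1/u_1)$, which by \eqref{A_b} is only bounded \emph{above} by $\tfrac12\min\{(\p_x u_1/u_1)^{-2},\mu_2\}$; its positivity is clear but a uniform lower bound requires knowing that $-\p_x u_1/u_1$ stays bounded above on $\overline{\QLT}$, equivalently that $u_1$ does not have arbitrarily steep negative-slope-to-value ratio. For $C^{2,1}$ subsolutions/supersolutions on the compact set $\overline{\QLT}$ this is automatic since $u_1\geq 1>0$ and $u_{1,x}$ is continuous hence bounded, so $A(-u_{1,x}/u_1)\geq A(\max|u_{1,x}|) =: c_0 > 0$ using that $A$ is decreasing and positive; I would spell this out explicitly. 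A secondary, more technical point is that $\mathcal{T}u_1 - \mathcal{T}u_2$ should be written using $F,G$ evaluated consistently (say always at $u_1$) and the Lipschitz increments absorbed into the first-order and zeroth-order coefficients — here the bound $u_1,u_2\geq 1$ is exactly what places the arguments inside the strip $[1,\infty)$ where \lemref{G} applies, so I would flag that the hypothesis ``$u_1,u_2\geq 1$'' is used in two distinct places: to keep $\beta_\ep$ arguments meaningful and, crucially, to invoke the uniform Lipschitz bounds of \lemref{G}. Once these coefficient bounds are in hand, the argument is the standard weak maximum principle plus Hopf lemma and should go through routinely.
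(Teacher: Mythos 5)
Your strategy — linearize $\mathcal{T}$ along the segment joining $u_1,u_2$, observe that $\beta_\ep$'s monotonicity gives the good sign, and run a weak maximum principle with Hopf at the lateral boundary — is a genuinely different route from the paper's, which makes the exponential substitution $w_i=e^{x/\la}u_i$ and then cites a packaged nonlinear comparison theorem (Lieberman Thm.~14.3). You do correctly identify the two places where $u_1,u_2\geq 1$ is used (to keep the arguments of $F,G$ in the Lipschitz strip of \lemref{G}, and to keep $\beta_\ep$ controlled), and you flag the right worry about a lower ellipticity bound. The trade-off is that your route is more self-contained, while the paper's transformation sidesteps the hardest point entirely.

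That hardest point is exactly where your argument currently has a gap: the left boundary $\{x=0\}$. You observe, correctly, that if the positive max of $w=u_1-u_2$ sits at $(0,t_0)$, Hopf gives $\p_x w(0,t_0)<0$ while the Robin hypothesis gives $\p_x w(0,t_0)\geq -\la w(0,t_0)$; these are \emph{consistent}, not contradictory. You then assert that ``the cleanest route'' finishes by ``Hopf plus the sign of the oblique condition'' without supplying the computation, and as written there is no contradiction to extract. This is precisely the obstruction the paper's substitution $w_i=e^{x/\la}u_i$ removes: since $\p_x w_i(0,t)=\la u_i(0,t)+\p_x u_i(0,t)$, the Robin inequality becomes the pure Neumann inequality $\p_x(w_1-w_2)(0,t)\geq 0$, after which Hopf (strict negativity of the outward normal derivative at a boundary max) immediately rules out $\{x=0\}$. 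Without that change of variables, you need either to bake $e^{x/\la}$ into the barrier $\phi$ you introduce (which you leave unspecified) or to argue via the strong maximum principle that $\p_x w(0,t_0)=0$ forces $w\equiv$ const near the boundary, a step you have not done. So the boundary case is open.

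A secondary issue: your ellipticity lower bound $A(-u_{1,x}/u_1)\geq A(\max|u_{1,x}|)>0$ presumes $u_{1,x}$ is bounded on $\overline{\QLT}$, but the hypothesis is only $u_i\in C^{2,1}(\QLT)\cap C(\overline{\QLT})$, so $u_{1,x}$ need not extend continuously to the parabolic boundary. In the lemma's actual use (smooth penalized solutions $u^\ep$ and explicit $C^\infty$ barriers) this is harmless, but as a proof of the lemma as stated it needs either a localization away from the boundary or a remark that $A(y)>0$ for all $y$ combined with an interior argument suffices.
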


\begin{proof} Let
\[
w_1(x,t)=e^{x/\la}u_1(x,t),\quad w_2(x,t)=e^{x/\la}u_2(x,t).
\]
Then
\[
\begin{cases}
\p_t w_1-e^{x/\la}{\cal T} (e^{-x/\la}w_1)+e^{x/\la}\beta_\ep(e^{-x/\la}w_1-1)
\leq \p_t w_2-e^{x/\la}{\cal T} (e^{-x/\la}w_2)+e^{x/\la}\beta_\ep(e^{-x/\la}w_2-1),\bigskip\\
\p_x w_1(0,t)\geq \p_x w_2(0,t),\quad 0<t\leq T,\bigskip\\
w_1(L,t)\leq w_2(L,t),\quad 0<t\leq T,\bigskip\\
w_1(x,0)\leq w_2(x,0),\quad 0<x<L.
\end{cases}\]
By Lemma \ref{G}, the assumption that $u_1$, $u_2\geq 1$ is sufficient to guarantee that $F(\cdot,\cdot)$ and $G(\cdot,\cdot)$ in ${\cal T} (e^{-x/\la}w_i)$ are Lipschitz continuous on $w_i$, $w_{ix}$, $i=1,2$. By the comparison principle for nonlinear equations (see \cite{Li96} Theorem 14.3), we obtain $w_1\leq w_2$ in $\QLT. $
\end{proof}

\begin{lemma}\label{thm:ue}
There exists a solution $u^\ep \in C^{2,1}\big(\overline{\QLT}\big)$ to Problem (\ref{ue_pb}). Moreover, for sufficiently small $\ep>0$, $u^\ep $ satisfies
\be\label{ue}
1\leq u^\ep \leq \frac{Ke^{\La t}}{x+1/\la}, \quad{\rm in}\quad \QLT,
\ee
where
\bee
K=L+1/\la+1,\quad
\La=\frac{\mu_{2}}{\ga^2}+\ga \la>0.
\eee
\end{lemma}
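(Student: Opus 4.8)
The plan is to treat existence and the two-sided bound separately, and in the right order: first establish the \emph{a priori} bounds \eqref{ue} for any classical solution, then use those bounds to produce one via a fixed-point argument. For the lower bound $u^\ep\geq 1$, I would apply the comparison principle of \lemref{lem:com} with $u_1\equiv 1$ and $u_2=u^\ep$: one checks that the constant function $1$ is a subsolution of the penalized equation (since $\beta_\ep(0)=-c$ and ${\cal T}1=-c$, the interior inequality reads $0-(-c)+0\le \p_t u^\ep-{\cal T}u^\ep+\beta_\ep(u^\ep-1)$, which is $0$), and that it satisfies the boundary and initial inequalities because $f_\ep\geq 0$ and $u^\ep(L,t)=u^\ep(x,0)=1$. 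Once $u^\ep\geq 1$ is known, $F$ and $G$ in the operator ${\cal T}(e^{-x/\la}w)$ are uniformly Lipschitz by \lemref{G}, so \lemref{lem:com} is applicable in all subsequent comparisons.

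For the upper bound, the natural candidate is the explicit supersolution $\bar u(x,t):=Ke^{\Lambda t}/(x+1/\la)$ with $K=L+1/\la+1$ and $\Lambda=\mu_2/\ga^2+\ga\la$. I would verify: (i) the initial inequality, $\bar u(x,0)=K/(x+1/\la)\geq K/(L+1/\la)\geq 1$ for $0<x<L$; (ii) the lateral boundary at $x=L$, $\bar u(L,t)=Ke^{\Lambda t}/(L+1/\la)\geq 1$; (iii) the Robin condition at $x=0$: compute $(\la\bar u+\bar u_x)(0,t)=e^{\Lambda t}K\big(\la/(1/\la)-1/(1/\la)^2\big)=e^{\Lambda t}K(\la^2-\la^2)=0\geq f_\ep(t)$ once $f_\ep\geq 0$ — this is exactly why the profile $1/(x+1/\la)$ was chosen, it kills the $x=0$ boundary term; (iv) the interior inequality $\p_t\bar u-{\cal T}\bar u+\beta_\ep(\bar u-1)\geq 0$. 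For (iv) note $\beta_\ep(\bar u-1)\le 0$, so it suffices to show $\p_t\bar u-{\cal T}\bar u\geq \text{(something absorbing }\beta_\ep)$; actually since $\beta_\ep\le 0$ this is the wrong direction, so instead I would show $\p_t\bar u-{\cal T}\bar u\ge 0$ \emph{and} separately handle the penalty term — but $\beta_\ep(\bar u-1)\le 0$ makes $\p_t\bar u-{\cal T}\bar u+\beta_\ep(\bar u-1)\le\p_t\bar u-{\cal T}\bar u$, so I must show $\p_t\bar u-{\cal T}\bar u\ge -\beta_\ep(\bar u-1)\ge 0$, i.e. it is enough that $\bar u\ge 1$ everywhere (so $\beta_\ep(\bar u-1)=0$) \emph{together with} $\p_t\bar u-{\cal T}\bar u\ge 0$. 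Since $\bar u\ge K/(L+1/\la)>1$ throughout $\overline{\QLT}$, the penalty term vanishes identically along $\bar u$, and I only need $\p_t\bar u\ge{\cal T}\bar u$. Using $\bar u_x=-\bar u/(x+1/\la)$, $\bar u_{xx}=2\bar u/(x+1/\la)^2$, so $-\bar u_{xx}/\bar u_x=2/(x+1/\la)>0$; then with the bounds \eqref{A_b}, $A(\cdot)\le\tfrac12\mu_2$ and $B(\cdot)\le\mu_1$, one estimates ${\cal T}\bar u\le \tfrac12\mu_2\bar u_{xx}+B(\cdot)|\bar u_x|+\ga|\bar u_x|$, and a direct computation comparing this with $\p_t\bar u=\Lambda\bar u$ shows the choice $\Lambda=\mu_2/\ga^2+\ga\la$ (controlling the first- and second-order terms via $x+1/\la\ge 1/\la$) makes $\p_t\bar u-{\cal T}\bar u\ge 0$; this is a routine but slightly fiddly polynomial-in-$1/(x+1/\la)$ inequality. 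By \lemref{lem:com}, $u^\ep\le\bar u$, giving \eqref{ue}.

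With the \emph{a priori} bound \eqref{ue} in hand (hence a uniform $L^\infty$ bound and uniform ellipticity, as $F,G$ are evaluated only where the second argument $w_i\ge u_i\ge 1$), existence follows by a standard scheme: linearize/freeze the nonlinear coefficients $A(-u_x/u)$, $B(-u_x/u)$ at a given $u$ from a suitable bounded convex set in a H\"older space, solve the resulting linear parabolic problem with Robin--Dirichlet conditions (classical solvability, e.g. via \cite{Li96}), and invoke the Leray--Schauder fixed-point theorem; the uniform bounds plus interior Schauder estimates give the compactness and the \emph{a priori} estimate on the fixed-point set that Leray--Schauder requires, and bootstrapping yields $u^\ep\in C^{2,1}(\overline{\QLT})$. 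The main obstacle I anticipate is not existence per se but verifying the interior supersolution inequality (iv) with the \emph{stated} constant $\Lambda$: one must be careful that the Lipschitz/monotonicity properties \eqref{A_b}--\eqref{AB} are used in the correct direction (both $A$ and $B$ are \emph{decreasing}, so bounding them by their values at $y=0$, namely $\tfrac12\mu_2$ and $\mu_1$, is legitimate only because the relevant argument $-\bar u_{xx}/\bar u_x$ is positive), and that the resulting inequality genuinely closes with $\Lambda=\mu_2/\ga^2+\ga\la$ rather than some larger constant — this pins down why $\ga$ appears in the denominator and forces the use of $x+1/\la\ge 1/\la$.
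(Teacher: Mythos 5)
Your plan matches the paper's proof in every essential respect — same supersolution $Ke^{\Lambda t}/(x+1/\la)$, same constant $\Lambda$, same boundary checks, same Leray--Schauder route to existence — and your remark that the Robin profile $1/(x+1/\la)$ is engineered precisely to kill the $x=0$ boundary term is exactly right (modulo a sign typo: the computation gives $(\la\bar u+\bar u_x)(0,t)=0\leq f_\ep(t)$, not $\geq$, which is the direction the comparison lemma wants).

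There is, however, one genuine circularity in the lower bound. You propose applying \lemref{lem:com} with $u_1\equiv 1$ and $u_2=u^\ep$, but \lemref{lem:com}'s hypothesis already assumes $u_1,u_2\geq 1$ — that hypothesis is what makes $F$ and $G$ Lipschitz in the proof — so you cannot use it to establish $u^\ep\geq 1$ in the first place. The paper sidesteps this by freezing the nonlinear coefficients $A(-u^\ep_x/u^\ep)$, $B(-u^\ep_x/u^\ep)$ as known bounded functions and invoking the comparison principle for the resulting \emph{linear} parabolic operator (together with the monotonicity of $cu+\beta_\ep(u-1)$ in $u$), which needs no lower bound on $u^\ep$. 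Since your plan correctly puts the a priori bounds \emph{before} the fixed-point argument — which is indeed where Leray--Schauder needs them — the lower bound must be obtained without assuming itself, so the freezing-coefficients argument is not optional here. With that fix, the rest of your outline goes through.
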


\begin{proof} Using the Leray-Schauder fixed point theorem (see \cite{Ev16} Theorem 4 on page 541) and the embedding theorem (see \cite{Li96} Theorem 6.8), we get the existence of a $C^{1+\al,\frac{1+\al}{2}}\big(\overline{\QLT}\big) (0<\al<1)$ solution $u^\ep$ to problem \eqref{ue_pb}. By the Schauder estimation (see \cite{Li96} Theorem 4.23), we also have $u^\ep \in C^{2+\al,1+\frac{\al}{2}}\big(\overline{\QLT}\big)$.

It is left to establish (\ref{ue}). Let $\phi\equiv 1$, then
\[
\begin{cases}
\phi_t-A\left(-\frac{u^\ep_x}{u^\ep}\right) \phi_{xx}-B\left(-\frac{u^\ep_x}{u^\ep}\right)\phi_x+\ga \phi_x+c\phi+\beta_\ep(\phi-1)=0,\bigskip\\
\big(\la\phi+\phi_{x}\big)(0,t)=\la\geq f_\ep(t),\quad 0<t\leq T,\bigskip\\
\phi(L,t)=1,\quad 0<t\leq T,\bigskip\\
\phi(x,0)=1,\quad 0<x<L,
\end{cases}\]
Regarding $A\left(-\frac{u^\ep_x}{u^\ep}\right)$ and $B\left(-\frac{u^\ep_x}{u^\ep}\right)$ as known coefficients, we can establish a comparison principle for the above PDE (which is similar to Lemma \ref{lem:com}) which will lead to $u^\ep \geq \phi=1$. 

Let $\Phi=Ke^{\La t}/(x+1/\la)$. Then, in $ \QLT$, $\Phi\geq 1+\ep$ for sufficiently small $\ep>0$, so we have $\beta_\ep(\Phi-1)=0$. Notice that
\bee
\Phi_t=\La \Phi,\quad
\Phi_x=-\frac{\Phi}{x+1/\la}<0,\quad \Phi_{xx}=\frac{2\Phi}{(x+1/\la)^{2}}>0,
\eee
so, by \eqref{A_b},
\begin{multline*}
\Phi_t-A\left(-\frac{\Phi}{\Phi_x}\right) \Phi_{xx}-B\left(-\frac{\Phi}{\Phi_x}\right)\Phi_x+\ga \Phi_x+c\Phi+\beta_\ep(\Phi-1)\\
\geq \Phi_t-\frac{1}{2}\mu_{2} \Phi_{xx}+\ga \Phi_x+c\Phi=\left(\La-\frac{\mu_2}{(x+1/\la)^2}-\frac{\ga}{x+1/\la}+c\right)\Phi\;\geq\; 0.
\end{multline*}
Together with boundary conditions
\bee
\begin{cases}
\big(\la\Phi+\Phi_x\big)(0,t)=0\leq f_\ep(t), & 0<t\leq T, \bigskip\\
\Phi(L,t)\geq1, & 0<t\leq T, \bigskip\\
\Phi(x,0)\geq1, & 0<x\leq L,
\end{cases}\eee
applying Lemma \ref{lem:com}, we obtain $u^\ep \leq \Phi$.
\end{proof}

Before passing to the limit, we show some properties of $u^\ep$.
\begin{lemma}\label{prop:utux}
We have
\be
&&u^\ep _t\geq 0, \label{uet}\bigskip\\
&&u^\ep _x\leq 0, \label{uex}
\ee
\end{lemma}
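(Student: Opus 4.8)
The plan is to establish the two inequalities by different routes. For \eqref{uet} I would compare $u^\ep$ with its forward time-translate $u^\ep(\cdot,\cdot+h)$ via the comparison principle \lemref{lem:com}: the translate solves the same (autonomous) PDE, equals $1$ at $x=L$, is $\ge 1=u^\ep(\cdot,0)$ at $t=0$ by \eqref{ue}, and at $x=0$ satisfies $(\la u^\ep+u^\ep_x)(0,\cdot)=f_\ep(\cdot+h)\le f_\ep(\cdot)$ since $f_\ep$ is non-increasing; since both functions are $\ge 1$, \lemref{lem:com} (applied on $(0,T-h]$, which its proof covers verbatim) gives $u^\ep(x,t)\le u^\ep(x,t+h)$, and letting $h\downarrow 0$ yields $u^\ep_t\ge 0$.

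For \eqref{uex} the idea is to show the open set $\{u^\ep_x>0\}$ is empty. First, $w:=u^\ep_x\le 0$ on the parabolic boundary of $\QLT$: at $x=0$, $w=f_\ep(t)-\la u^\ep(0,t)\le\la-\la=0$ (using $f_\ep\le\la$ and $u^\ep\ge 1$); at $x=L$, $u^\ep(\cdot,t)$ attains its minimum value $1$ at the right endpoint, so $w(L,t)\le 0$; at $t=0$, $w\equiv 0$ since $u^\ep(\cdot,0)\equiv 1$. Suppose $M:=\max_{\overline{\QLT}}w>0$; by the boundary bound it is attained at some $(x_0,t_0)$ with $x_0\in(0,L)$ and $t_0\in(0,T]$. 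Near $(x_0,t_0)$ we have $u^\ep_x>0$, hence $-u^\ep_x/u^\ep<0$, so by \eqref{defA}--\eqref{defB} the PDE there reduces to the linear, constant-coefficient equation $u^\ep_t-\tfrac12\mu_2 u^\ep_{xx}-(\mu_1-\ga)u^\ep_x+cu^\ep+\beta_\ep(u^\ep-1)=0$; differentiating in $x$, the function $w$ satisfies $w_t-\tfrac12\mu_2 w_{xx}-(\mu_1-\ga)w_x+(c+\beta_\ep'(u^\ep-1))w=0$ there, a linear parabolic equation whose zeroth-order coefficient is $\ge c>0$. Evaluating at the maximum $(x_0,t_0)$, where $w_x=0$, $w_{xx}\le 0$ and $w_t\ge 0$ (also when $t_0=T$), the left-hand side is $\ge cM>0$, contradicting the equation. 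Hence $M\le 0$, i.e.\ $u^\ep_x\le 0$.

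I expect the subtle part to be \eqref{uex}. Differentiating the \emph{full} nonlinear operator ${\cal T}$ in $x$ produces an extra term proportional to $-A'(-u^\ep_x/u^\ep)\,((u^\ep_x)^2-u^\ep u^\ep_{xx})^2$, which has the wrong sign for a maximum-principle argument and so cannot simply be discarded; the point that rescues the proof is precisely that on $\{u^\ep_x>0\}$ the argument of $A'$ is negative, so this term vanishes by \eqref{AB} --- equivalently, the operator is already linear there, which is exactly what the argument above exploits. A secondary, purely technical nuisance is that $A$ and $B$ are only Lipschitz (since $F$ may be discontinuous), so the $x$-differentiation above should strictly be carried out on difference quotients followed by a short bootstrap; this is harmless because it is needed only on the region $\{u^\ep_x>0\}$, where the coefficients are in fact constant.
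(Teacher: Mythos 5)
Your argument for \eqref{uet} is exactly the paper's: compare $u^\ep$ with its time-translate $u^\ep(\cdot,\cdot+\Delta)$ via \lemref{lem:com}, using that $f_\ep$ is non-increasing and $u^\ep(x,\Delta)\geq 1=u^\ep(x,0)$. For \eqref{uex} you take a genuinely different route. The paper differentiates the equation in $x$, writes the resulting linear equation for $u^\ep_x$ in \emph{divergence form} (so that the $A'$ contribution is absorbed into $\p_x[A(\cdot)\p_x u^\ep_x]$ rather than expanded), verifies that the coefficients are bounded (with the zeroth-order one bounded below), and invokes the weak maximum principle for divergence-form operators, together with the sign of $u^\ep_x$ on the parabolic boundary. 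You instead localize: since a putative positive maximum of $w=u^\ep_x$ must be interior (or on $\{t=T\}$) and sits inside the open set $\{u^\ep_x>0\}$, where $-u^\ep_x/u^\ep<0$ and hence by \eqref{AB} the operator reduces to a \emph{constant-coefficient} linear one, you can legitimately bootstrap regularity there, differentiate classically, and apply the pointwise (strong) maximum principle, using $\beta'_\ep\geq 0$ to get a strictly positive zeroth-order coefficient. This is cleaner and avoids the weak maximum-principle machinery and the Lipschitz-coefficient delicacy; it exploits a structural feature of $A,B$ the paper does not use for this step. One small inaccuracy in your commentary: the term discarded in the non-divergence differentiation is $-A'(-u^\ep_x/u^\ep)\,\frac{(u^\ep_x)^2-u^\ep u^\ep_{xx}}{(u^\ep)^2}\,u^\ep_{xx}$, linear in $u^\ep_{xx}$, not the square you wrote; moreover, since it carries a factor $u^\ep_{xx}=w_x$, it vanishes at a critical point of $w$ regardless of the sign of $A'$, so your ``rescue'' via $A'=0$ is needed for the regularity bootstrap rather than for sign control at the maximum. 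Also note the paper never faces this term at all, because its divergence-form grouping sidesteps expanding $A'$. None of this affects the correctness of your proof.
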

\begin{proof} We first prove \eqref{uet}. For any $0<\Delta <T$, let $\widetilde{u}^\ep (x,t)=u^\ep (x,t+\Delta)$, then both $\widetilde{u}^\ep $ and $u^\ep $ satisfy the equation in \eqref{ue_pb} in the domain $(0,L)\times(0,T-\Delta ]$. Moreover,
\bee
\begin{cases}
\Big(\la\widetilde{u}^\ep+\widetilde{u}^\ep _x\Big)(0,t)=f_\ep(t+\Dt)\leq f_\ep(t)=\Big(\la u^\ep+u^\ep _x\Big)(0,t) \quad{\rm in}\quad & (0,L)\times(0,T-\Delta ], \bigskip\\
\widetilde{u}^\ep (L,t)=u^\ep (L,t)=1, & 0<t\leq T-\Delta, \bigskip\\
\widetilde{u}^\ep (x,0)=u^\ep (x,\Delta)\geq1=u^\ep (x,0), & 0<x< L.
\end{cases}\eee
Applying Lemma \ref{lem:com} we have $\widetilde{u}^\ep \geq u^\ep $ in $(0,L)\times(0,T-\Delta ]$, which implies \eqref{uet}.

To prove \eqref{uex}, we differentiate the equation in \eqref{ue_pb} w.r.t. $x$ and obtain
\begin{multline}\label{uex_eq}
\p_t u^\ep _x-\p_x\left[A\left(-\frac{u^\ep _{x}}{u^\ep}\right)\p_xu^\ep _{x}\right]-B\left(-\frac{u^\ep _{x}}{u^\ep}\right) \p_xu^\ep _{x}\bigskip\\
-B'\left(-\frac{u^\ep _{x}}{u^\ep}\right)\p_x\left(-\frac{u^\ep _{x}}{u^\ep}\right) u^\ep _{x}
+cu^\ep _x+\beta'_{\ep}(u^\ep-1)u^\ep _x=0.
\end{multline}
Note that
\[
\p_x\left(-\frac{u^\ep _x}{u^\ep}\right)=-\frac{u^\ep _{xx}}{u^\ep}+\left(-\frac{u^\ep _x}{u^\ep}\right)^{2},
\]
so \eqref{uex_eq} can be written as
\begin{multline}\label{uex_eq1}
\p_t u^\ep _x-\p_x\left[\left\{A\left(-\frac{u^\ep _{x}}{u^\ep}\right)\right\}\p_xu^\ep _{x}\right]-\left\{B\left(-\frac{u^\ep _{x}}{u^\ep}\right)+B'\left(-\frac{u^\ep _{x}}{u^\ep}\right)\frac{-u^\ep _{x}}{u^\ep}\right\}\p_xu^\ep _{x}\bigskip\\
+\left\{-B'\left(-\frac{u^\ep _{x}}{u^\ep}\right)\left(-\frac{u^\ep _x}{u^\ep}\right)^{2}+c+\beta'_{\ep}(u^\ep-1)\right\} u^\ep _x=0.
\end{multline}
It is a linear equation for $u^\ep _x$ in divergence form if we regard the terms in $\{\cdots\}$ as known coefficients. By \eqref{A_b}-\eqref{AB},
\begin{gather*}
\left|A\left(y\right)\right|\leq \mu_2,\quad
\left|B\left(y\right)\right|\leq \mu_1,\quad
\left|B'\left(y\right)y\right|\leq |y^{-1} (1-F(y^{-1}))|\leq \mu_1,\\
\left|B'\left(y\right)y^{2} \right|\leq | 1-F(y^{-1})|\leq 1,\quad
\beta'_{\ep}(u^\ep-1)\geq 0,
\end{gather*}
so all the coefficients of \eqref{uex_eq1} are bounded, except the last one which is bounded from below. 
Moreover, since
\bee
u^\ep _x(0,t)=(f_\ep(t)-\la u^\ep (0,t)),
\eee
together with $f_\ep\leq \la\leq \la u^\ep$, it implies that
\bee
u^\ep _x(0,t)\leq 0.
\eee
From $u^\ep \geq1$ and $u^\ep (L,t)=1$ we have
\bee
u^\ep _x(L,t)\leq0.
\eee
Moreover,
\bee
u^\ep _x(x,0)=0,
\eee
by the maximum principle for weak solution (see \cite{Li96} Corollary 6.16), we deduce that $u^\ep _x\leq0$.
\end{proof}

\begin{lemma}\label{prop:ue_mix}
We have
\be\label{ue_mix}
\la u^\ep+u^\ep _x\geq 0.
\ee
\end{lemma}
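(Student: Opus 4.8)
The plan is to establish \eqref{ue_mix} by the same change-of-variables/comparison-principle technique used in \lemref{lem:com} and \lemref{prop:utux}, applied now to the quantity $w^\ep := \la u^\ep + u^\ep_x$. First I would observe that because of the left boundary condition $(\la u^\ep + u^\ep_x)(0,t) = f_\ep(t) \geq 0$ and the initial/lateral data, $w^\ep$ is non-negative on the parabolic boundary of $\QLT$; so it suffices to show $w^\ep$ cannot attain a negative interior minimum. To this end I would differentiate the PDE in \eqref{ue_pb} with respect to $x$ — which was already done in \eqref{uex_eq1} — and form the linear combination $\la \cdot(\text{equation for }u^\ep) + (\text{equation \eqref{uex_eq1} for }u^\ep_x)$, aiming to derive a parabolic differential inequality satisfied by $w^\ep$ in which the zeroth-order coefficient is bounded below and all other coefficients are bounded (using \eqref{A_b}--\eqref{AB} and $\beta'_\ep \geq 0$, exactly as in the proof of \eqref{uex}). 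Since the principal part is in divergence form with the already-established bounds on $A(\cdot)$, the maximum principle for weak solutions (\cite{Li96}, Corollary 6.16) then yields $w^\ep \geq 0$.

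The delicate point, and the one I expect to be the main obstacle, is that when we differentiate the nonlinear operator ${\cal T}$ the argument $-u^\ep_x/u^\ep$ of $A$ and $B$ appears inside $A'$ and $B'$, producing terms involving $u^\ep_{xx}$ and $(u^\ep_x/u^\ep)^2$; one must check that after taking the combination $\la u^\ep + u^\ep_x$ these terms can genuinely be reorganized into a linear equation for $w^\ep$ with admissible coefficients, rather than merely for $u^\ep_x$. Here the key structural input is the identity $\p_x[A(-u^\ep_x/u^\ep) u^\ep_{xx} + B(-u^\ep_x/u^\ep)u^\ep_x] = A(-u^\ep_x/u^\ep)u^\ep_{xxx} + B(-u^\ep_x/u^\ep)u^\ep_{xx}$, which was derived in the main text using \eqref{Ap} and \eqref{AB} — the cancellation of the $A'$, $B'$ terms is exactly what makes $\p_x({\cal L}v) = {\cal T}v_x$, so $u^\ep_x$ itself already solves a quasilinear equation of the same type as ${\cal T}$, and one only needs the bounds $|A(y)| \leq \mu_2$, $|B(y)| \leq \mu_1$, $|B'(y)y| \leq \mu_1$, $|B'(y)y^2| \leq 1$ from \eqref{A_b}--\eqref{AB} (valid because $u^\ep \geq 1$, so $|u^\ep_x/u^\ep|$ controls the relevant quantities).

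Concretely, I would proceed as follows. Set $w^\ep = \la u^\ep + u^\ep_x$. On the parabolic boundary: at $x=0$, $w^\ep(0,t) = f_\ep(t) \geq 0$; at $x=L$, $w^\ep(L,t) = \la + u^\ep_x(L,t)$, and although $u^\ep_x(L,t) \leq 0$ we only need that the full comparison can be run with a suitable test function rather than pointwise positivity there (alternatively, one notes $u^\ep \geq 1$ forces $u^\ep_x(L,t) \geq -(\text{something controlled})$; if pointwise positivity at $x=L$ fails one instead compares with the subsolution $\psi \equiv 0$ after checking the sign of the operator on $\psi$, which is immediate since ${\cal T}$ applied to a constination gives a term of definite sign). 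At $t=0$, $w^\ep(x,0) = \la \geq 0$. In the interior, write the equation \eqref{uex_eq1} for $u^\ep_x$ and the equation \eqref{ue_pb} for $u^\ep$, multiply the latter by $\la$, and add; grouping the second-order terms and treating $A(-u^\ep_x/u^\ep)$ as the (bounded, positive) diffusion coefficient and the various $B$, $B'$ expressions as bounded drift/zeroth-order coefficients (with the $\beta'_\ep \geq 0$ and $c > 0$ contributions only helping), one obtains that $w^\ep$ satisfies a linear parabolic inequality $\p_t w^\ep - \p_x(a\,\p_x w^\ep) - b\,\p_x w^\ep + e\,w^\ep \geq 0$ (or $=0$) with $a$ bounded and bounded below away from $0$ on the relevant range and $b,e$ bounded. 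The maximum principle then gives $w^\ep \geq \min_{\partial_p \QLT} w^\ep \geq 0$, which is \eqref{ue_mix}.

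One technical subtlety worth flagging: the coefficient $a = A(-u^\ep_x/u^\ep)$ is only bounded \emph{below} by $0$ in general (cf.\ \eqref{A_b}), not uniformly away from $0$, so the equation may degenerate where $-u^\ep_x/u^\ep \to \infty$; however \eqref{uex} gives $u^\ep_x \leq 0$ and the upper bound \eqref{ue} gives $u^\ep \leq Ke^{\La t}/(x+1/\la)$, which together with interior gradient estimates keep $-u^\ep_x/u^\ep$ locally bounded on compact subsets of $\QLT$, so the maximum-principle argument can be localized and the conclusion extended to all of $\QLT$ by the boundary data. This localization is the part I would be most careful about writing out in full.
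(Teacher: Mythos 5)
Your strategy — apply a maximum principle to $w^\ep := \la u^\ep + u^\ep_x$ after combining $\la\cdot(\text{PDE for }u^\ep)$ with the differentiated PDE for $u^\ep_x$ — is genuinely different from the paper's, which instead works with $\nu^\ep = -u^\ep/u^\ep_x$ (on the set $\{u^\ep_x<0\}$) and runs an ODE-in-$x$ monotonicity argument. The comparison is instructive, but there is one real gap in your version and one structural misreading worth flagging.

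\textbf{The $x=L$ boundary is a genuine gap.} The lateral condition in \eqref{ue_pb} is the Dirichlet datum $u^\ep(L,t)=1$, which gives you no a priori sign for $w^\ep(L,t)=\la+u^\ep_x(L,t)$: since $u^\ep\geq1=u^\ep(L,t)$, all you know is $u^\ep_x(L,t)\leq0$, and there is nothing yet preventing $u^\ep_x(L,t)<-\la$. (Indeed, $\la u^\ep+u^\ep_x\geq0$ at $x=L$ is exactly a special case of what you are trying to prove.) Your proposed remedy — ``compare with the subsolution $\psi\equiv0$'' — does not help, because the comparison principle requires knowing the sign of $w^\ep$ on the \emph{entire} parabolic boundary, including $x=L$; the issue is the missing boundary datum, not the behavior of the operator on constants. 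The paper's proof avoids this entirely: dividing the PDE by $u^\ep_x$ (when $u^\ep_x<0$) and using $f(\la)=0$, \eqref{vxt} and the sign of $\beta_\ep$, one obtains
$\nu^\ep_x \geq A(1/\nu^\ep)^{-1}\bigl[f(1/\nu^\ep)\nu^\ep + \tfrac{cu^\ep+\beta_\ep(u^\ep-1)}{u^\ep}(\nu^\ep)^2\bigr]$,
so that $\nu^\ep_x\geq0$ whenever $\nu^\ep\geq1/\la$. Since $\nu^\ep(0,t)=1/\la$ and $\nu^\ep_x(0,t)>0$ (when $u^\ep(0,t)>1$), $\nu^\ep$ can never drop below $1/\la$; no information at $x=L$ is needed. (The degenerate cases $u^\ep(0,t)=1$ and $u^\ep_x=0$ are dispatched separately.)

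\textbf{The cancellation you cite is not the mechanism that is in play.} The identity $\p_x(\mathcal{L}v)=\mathcal{T}v_x$ uses $yA'(y)=B'(y)$ evaluated at $y=-v_{xx}/v_x$, i.e.\ with the \emph{derivative ratio} inside $A,B$. In \eqref{ue_pb} the argument is $-u^\ep_x/u^\ep$, so differentiating does \emph{not} give a quasilinear equation of ``the same type as $\mathcal{T}$'': the $A'$, $B'$ terms survive, as you can see explicitly in \eqref{uex_eq1}. Your plan is nonetheless salvageable, because after forming $\la\cdot(\text{eq.\ for }u^\ep)+(\text{eq.\ for }u^\ep_x)$ and using $yA'(y)=B'(y)$ once more, the remainder collapses to
$R = A'(-u^\ep_x/u^\ep)\,u^\ep\,[\p_x(-u^\ep_x/u^\ep)]^2 + \la\beta_\ep(u^\ep-1) + \beta'_\ep(u^\ep-1)u^\ep_x \leq 0$
(all three terms nonpositive, by $A'\leq0$, $\beta_\ep\leq0$, $\beta'_\ep\geq0$, $u^\ep_x\leq0$), giving the linear inequality $w^\ep_t - A w^\ep_{xx} + (\ga-B)w^\ep_x + cw^\ep \geq 0$. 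So the PDE side of your sketch can be made rigorous — but it does not resolve the $x=L$ boundary problem. Also, your concern about degeneracy of $A$ and the subsequent localization is unnecessary for this lemma: $u^\ep\in C^{2,1}(\overline{\QLT})$ and $u^\ep\geq1$ already make $-u^\ep_x/u^\ep$ bounded on the compact set $\overline{\QLT}$ for each fixed $\ep$, hence $A(-u^\ep_x/u^\ep)$ is uniformly bounded below by a positive constant depending on $\ep$, which is all one needs here.
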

\begin{proof}
If $u^\ep(0,t)=1$, since $u^\ep\geq 1$ and $u^\ep_{x}\leq 0$, we have $u^\ep(x,t)\equiv1$ and $u^\ep_{x}(x,t)\equiv0$ for all $x\geq 0$, so $(\la u^\ep+u^\ep _x)(x,t)\equiv \la> 0$, and \eqref{ue_mix} follows.

Otherwise $u^\ep(0,t)>1$ which leads to
\be\label{ue0}
(c u^\ep+\beta_\ep(u^\ep-1))(0,t)>c+\beta_\ep(0)=0.
\ee
Consider the equation in \eqref{ue_pb}, i.e.
\[
u^\ep _t-A\left(-\frac{u^\ep _x}{u^\ep}\right)u^\ep _{xx}-B\left(-\frac{u^\ep _x}{u^\ep}\right)u^\ep _x+\ga u^\ep _x+c u^\ep+\beta_\ep(u^\ep-1)=0.
\]
If $u^\ep _x=0$, then it reaches its global minimum value 0, so $u^\ep _{xx}=0$. Together with $u^\ep _t\geq0$, the above equation gives $c u^\ep+\beta_\ep(u^\ep-1)\leq 0$, which by the definition of $\beta_\ep$ implies $u^\ep=1$. Therefore, $\la u^\ep+u^\ep _x=\la>0$, and \eqref{ue_mix} follows.
Otherwise, we have $u^\ep _x<0$, dividing both sides by $u^\ep_{x}$ and using the identity
\[
-\frac{u^\ep_{xx}}{u^\ep_{x}}=\left[\p_x\left(-\frac{u^\ep}{u^\ep_{x}}\right)+1\right]\left(\frac{-u^\ep _x}{u^\ep}\right),
\]
it follows
\bee
\frac{u^\ep_{t}}{u^\ep_{x}}+A\left(-\frac{u^\ep_{x}}{u^\ep}\right) \left[\p_x\left(-\frac{u^\ep}{u^\ep_{x}}\right)+1\right]\left(\frac{-u^\ep _x}{u^\ep}\right)-B\left(-\frac{u^\ep_{x}}{u^\ep}\right)+\ga+\frac{cu^\ep+\beta_\ep(u^\ep-1)}{u^\ep_{x}}=0.
\eee
Denote $\nu^\ep=-u^\ep/u^\ep _x$. Then
\bee
A \left(\frac{1}{\nu^\ep}\right) \frac{\nu^\ep_{x}+1}{\nu^\ep}-B\left(\frac{1}{\nu^\ep}\right)+\ga-\frac{cu^\ep+\beta_\ep(u^\ep-1)}{u^\ep}\nu^\ep=-\frac{u^\ep_{t}}{u^\ep_{x}}\geq 0
\eee
by \lemref{prop:utux}. We get
\[
\nu^\ep_x\geq \frac{1}{A \left(\frac{1}{\nu^\ep}\right)}\left[f \left(\frac{1}{\nu^\ep}\right)\nu^\ep+\frac{c u^\ep+\beta_\ep(u^\ep-1)}{u^\ep}(\nu^\ep)^2\right],
\]
where $f(\cdot)$ is defined in \eqref{f}. Notice that
\[
c u^\ep+\beta_\ep(u^\ep-1)\geq c+\beta_\ep(0)=0,
\]
and $f(\frac{1}{z})\geq 0$ when $z\geq 1/\la$, so
\be\label{nu3}
\nu^\ep_x\geq 0\quad{\rm if}\quad \nu^\ep\geq 1/\la.
\ee
Moreover, by $f(\la)=0$, \eqref{ue0} and
\be\label{nu1}
\nu^\ep(0,t)=1/\la,
\ee
it yields
\be\label{nu2}
\nu^\ep_x(0,t)>0.
\ee
Combining with \eqref{nu1}, \eqref{nu2} and \eqref{nu3} we get $\nu^\ep\geq 1/\la$, which implies \eqref{ue_mix}.

\end{proof}

By \lemref{thm:ue}, \lemref{prop:utux} and \lemref{prop:ue_mix}, we see that $|u^\ep _x|\leq \la u^\ep\leq K \la^{2} e^{\La T}$ in $\QLT$. This provides an upper bound for $|u^\ep _x|$, independent of $\ep$, so $u^\ep$ are uniformly Lipschitz continuous in $x$. Moreover, by \lemref{prop:utux} and \lemref{prop:ue_mix} and the monotonicity of $A$,
\be\label{Aga}
A\left(-\frac{u^\ep _x}{u^\ep}\right) \geq A(\la)>0.
\ee
This confirms the uniform parabolic condition in \eqref{ue_pb} for the equation of $u^\ep $ (where we regard $A\left(-u^\ep _x/u^\ep\right)$ and $B\left(-u^\ep _x/u^\ep\right)$ as known coefficients in the operator ${\cal T}$).

\begin{lemma}\label{prop:uexx}
We have
\be\label{uexx}
u^\ep_{xx}\geq 0.
\ee
\end{lemma}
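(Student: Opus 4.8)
The plan is to establish \eqref{uexx} by differentiating the equation in \eqref{ue_pb} twice in $x$, obtaining a linear parabolic equation satisfied by $w:=u^\ep_{xx}$, and then applying the maximum principle after checking the sign of $w$ on the parabolic boundary of $\QLT$. The key algebraic fact that makes this work is the cancellation already exploited in the derivation of the operator ${\cal T}$: when one differentiates $A(-u^\ep_x/u^\ep)u^\ep_{xx}+B(-u^\ep_x/u^\ep)u^\ep_x$ once, the terms involving $A'$ and $B'$ vanish by virtue of \eqref{Ap}--\eqref{AB}, leaving a divergence-form linear operator in $u^\ep_x$; this is precisely \eqref{uex_eq1}. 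Differentiating once more, I expect to obtain an equation of the form
\[
\p_t w-\p_x\!\left[A\!\left(-\tfrac{u^\ep_x}{u^\ep}\right)\p_x w\right]+b_1(x,t)\,\p_x w+b_2(x,t)\,w+\beta'_\ep(u^\ep-1)\,w+\beta''_\ep(u^\ep-1)(u^\ep_x)^2=0,
\]
where $b_1,b_2$ are bounded coefficients (built from $A$, $B$, their first derivatives evaluated at $-u^\ep_x/u^\ep$, and $u^\ep_x/u^\ep$, all of which are bounded thanks to \eqref{distributionF}, \lemref{prop:utux}, \lemref{prop:ue_mix}, and \eqref{Aga}). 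The crucial observation is that $\beta''_\ep\leq 0$ and $(u^\ep_x)^2\geq 0$, so the inhomogeneous term $\beta''_\ep(u^\ep-1)(u^\ep_x)^2$ is $\leq 0$; hence $w$ is a supersolution of a linear parabolic equation whose zeroth-order coefficient $b_2+\beta'_\ep(u^\ep-1)$ is bounded below (since $\beta'_\ep\geq 0$). This is exactly the setup in which the weak maximum principle (e.g. \cite{Li96}, Corollary 6.16, as already used for \eqref{uex}) yields $w\geq 0$ provided $w\geq 0$ on the parabolic boundary.

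Next I would check the boundary data. On $\{t=0\}$ we have $u^\ep\equiv 1$, so $w(x,0)=0$. On $\{x=L\}$, since $u^\ep\geq 1$, $u^\ep(L,t)=1$ and $u^\ep_x(L,t)\leq 0$ (from \eqref{uex}), the function $u^\ep$ attains an interior-in-$x$ minimum along $x=L$ from the left; a one-sided second-derivative / Hopf-type argument at the corner, or more simply evaluating the equation \eqref{ue_pb} at $(L,t)$ together with $u^\ep_t(L,t)=0$, $u^\ep_x(L,t)\le 0$, $c u^\ep(L,t)+\beta_\ep(0)=0$, should force $u^\ep_{xx}(L,t)\geq 0$. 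On the lateral boundary $\{x=0\}$ the situation is subtler because we only have the Robin condition $\la u^\ep+u^\ep_x=f_\ep(t)$; here I would instead use the substitution $\nu^\ep=-u^\ep/u^\ep_x$ already introduced in the proof of \lemref{prop:ue_mix}, or differentiate the Robin condition in $t$, to extract a one-sided sign for $u^\ep_{xx}(0,t)$, using $f_\ep$ decreasing and $u^\ep_t\geq 0$. Alternatively, one can avoid the $x=0$ boundary altogether: rewrite the equation for $w$ in non-divergence form, note $\la u^\ep+u^\ep_x\geq 0$ means $u^\ep_x\geq -\la u^\ep$, and argue that a negative interior minimum of $w$ cannot occur by the strong maximum principle, then handle $\{x=0\}$ by a barrier comparison exploiting \eqref{nu1}--\eqref{nu3}.

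The main obstacle I anticipate is precisely the treatment of the Robin boundary $\{x=0\}$: the maximum principle for $w=u^\ep_{xx}$ needs boundary control there, but the natural condition controls $\la u^\ep+u^\ep_x$, not $u^\ep_{xx}$ directly. I expect the resolution to mirror the $\nu^\ep$-trick from \lemref{prop:ue_mix}: the estimates \eqref{nu1}--\eqref{nu3} already show $\nu^\ep\geq 1/\la$ with $\nu^\ep(0,t)=1/\la$, so $\nu^\ep_x(0,t)\geq 0$, and unpacking $\nu^\ep_x=-1+u^\ep u^\ep_{xx}/(u^\ep_x)^2$ gives a lower bound on $u^\ep_{xx}$ at $x=0$ of the form $u^\ep_{xx}(0,t)\geq (u^\ep_x)^2/u^\ep\geq 0$. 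With the boundary signs in hand, the maximum principle closes the argument. A secondary technical point is justifying that $u^\ep\in C^{3,\cdot}$ in $x$ (or at least $u^\ep_{xx}\in C^{2,1}$ in the weak sense) so that the twice-differentiated equation is legitimate; this follows from interior Schauder estimates applied to \eqref{uex_eq1}, bootstrapping from the $C^{2+\al,1+\al/2}$ regularity already established in \lemref{thm:ue}, using the uniform ellipticity \eqref{Aga} and the Lipschitz continuity of $A,B$ (Lemma \ref{G}).
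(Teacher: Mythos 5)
Your plan is to differentiate the penalized equation twice in $x$ and invoke a maximum principle for $w=u^\ep_{xx}$, but the paper's proof is far simpler and avoids both steps entirely: it reads the sign of $u^\ep_{xx}$ off the equation pointwise. From \eqref{ue_pb},
\[
A\!\left(-\tfrac{u^\ep_x}{u^\ep}\right)u^\ep_{xx}
= u^\ep_t - B\!\left(-\tfrac{u^\ep_x}{u^\ep}\right)u^\ep_x + \ga u^\ep_x + cu^\ep + \beta_\ep(u^\ep-1),
\]
and each term on the right is nonnegative or can be bounded below by a nonnegative quantity: $u^\ep_t\ge0$ (\lemref{prop:utux}), $cu^\ep+\beta_\ep(u^\ep-1)\ge c+\beta_\ep(0)=0$ (since $u^\ep\ge1$ and $\beta_\ep$ is increasing), while for the middle term $-u^\ep_x/u^\ep\le\la$ by \lemref{prop:ue_mix} and $B$ decreasing give $[-B(-u^\ep_x/u^\ep)+\ga]u^\ep_x\ge[-B(\la)+\ga]u^\ep_x=-\la A(\la)u^\ep_x\ge0$, using $f(\la)=0$ and $u^\ep_x\le0$. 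Combined with $A(-u^\ep_x/u^\ep)\ge A(\la)>0$ from \eqref{Aga}, this gives $u^\ep_{xx}\ge0$ at every point, with no PDE for $w$ needed. Notice that your own treatment of the boundary $\{x=L\}$ is precisely this calculation at a single point; the inequalities $u^\ep_t\ge0$ and $cu^\ep+\beta_\ep(u^\ep-1)\ge0$ extend it to the whole domain, so you were one step from the short proof.

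Beyond being unnecessarily heavy, your route has a genuine gap. Differentiating \eqref{uex_eq1} once more in $x$ produces derivatives of the coefficients in braces, which already contain $B'$. Hence $B''$ (and $A''$) enter, and by \eqref{Ap} these involve $F'$; since the paper explicitly allows $F$ to be discontinuous, $B''$ is only a measure, not a bounded function, so the asserted form with "bounded coefficients $b_1,b_2$" does not hold and the supersolution structure is not established. The regularity bootstrap you invoke to legitimize the second differentiation runs into the same wall: Schauder estimates for \eqref{uex_eq1} require H\"older continuity of the coefficients, which again involves $B'$ and is not available in general. The Robin-boundary argument via $\nu^\ep$ is also incomplete — it borrows \eqref{nu1}--\eqref{nu3}, but those were proved under a case split ($u^\ep(0,t)=1$ vs.\ $u^\ep(0,t)>1$, $u^\ep_x=0$ vs.\ $u^\ep_x<0$) which you would need to re-run. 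None of these obstacles arise in the direct pointwise argument.
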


\begin{proof} By the equation in \eqref{ue_pb}, \lemref{thm:ue}, \lemref{prop:utux}, $f(\la)=0$, and \eqref{Aga}, we have
\begin{align*}
A\left(-\frac{u^\ep _x}{u^\ep}\right)u^\ep _{xx}
&=u^\ep _t-B\left(-\frac{u^\ep _x}{u^\ep}\right)u^\ep _x+\ga u^\ep _x+c u^\ep+\beta_\ep(u^\ep-1)\\
&\geq \left[-B\left(-\frac{u^\ep _x}{u^\ep}\right)+\ga\right]u^\ep _x
\geq \left[-B(\la)+\ga\right]u^\ep _x
=-\la A(\la) u^\ep _x
\geq 0.
\end{align*}
This gives \eqref{uexx}.
\end{proof}

Now we give some uniform norm estimates for $u^\ep$. First, we rewrite the equation of $u^\ep$ in the divergence form. By \eqref{ue_pb}, \eqref{Ap} and \eqref{AB} we have
\[
u^\ep _t-\p_x\left( A\left(-\frac{u^\ep _x}{u^\ep}\right)u^\ep _x-B\left(-\frac{u^\ep _x}{u^\ep}\right)u^\ep \right)+\ga u^\ep _x+c u^\ep+\beta_\ep(u^\ep-1)=0.
\]
Since $A(\cdot),\;B(\cdot)$ are bounded, applying $C^{\al,\frac{\al}{2}}$ estimate (see \cite{Li96} Theorem 6.33 for the interior estimate and Theorem 6.33 for the boundary estimate) we have
\[
|u^\ep|_{\al,\QLT}\leq C(|u^\ep|_{0,\QLT}+|\beta_\ep(\cdot)|_{L_p(\QLT)}+1)\leq C.
\]
Rewrite \eqref{uex_eq1} as
\begin{multline*}
\p_t u^\ep _x-\p_x\left[A\left(-\frac{u^\ep _{x}}{u^\ep}\right)\p_xu^\ep _{x}\right]-\left\{B\left(-\frac{u^\ep _{x}}{u^\ep}\right)+B'\left(-\frac{u^\ep _{x}}{u^\ep}\right)\frac{-u^\ep _{x}}{u^\ep}\right\}\p_xu^\ep _{x}\bigskip\\
+\left\{-B'\left(-\frac{u^\ep _{x}}{u^\ep}\right)\left(-\frac{u^\ep _x}{u^\ep}\right)^{2}+c \right\} u^\ep _x=-\p_x\Bigg[\beta_{\ep}(u^\ep-1)\Bigg].
\end{multline*}
Applying $C^{\al,\frac{\al}{2}}$ interior (with partial boundary) estimate, we obtain
\[
|u^\ep_x|_{\al,\Q^{r/2}}\leq C(|u_x^\ep|_{0,\Q^{r/4}}+|\beta_\ep(\cdot)|_{L_p(\Q^{r/4})}+1)\leq C.
\]
where
\[
\Q^r:=\QLT\setminus \{(x,t)\;|\;x^2+t^2\leq r\}.
\]
According to Lemma \ref{G}, $A\left(-u^\ep _x/u^\ep\right)$ and $B\left(-u^\ep _x/u^\ep\right)$ are uniform $C^{\al,\frac{\al}{2}}$ in $\Q^{r/2}$. So we can apply $W^{2,1}_p$ interior estimate (see \cite{Li96} Theorem 7.13 for the interior estimate and Theorem 7.17 for the boundary estimate) to \eqref{ue_pb} in $\QLT$ to obtain
\[
|u^\ep|_{W^{2,1}_p(\Q^{r})}\leq C(|u^\ep|_{L_p(\Q^{r/2})}+|\beta_\ep(u^\ep-1)|_{L_p(\Q^{r/2})}+1)\leq C.
\]
We emphasize that $C$s in the above estimates are independent of $\ep$, so there exits $u\in W^{2,1}_{p,\;{\rm loc}}(\QLT)\bigcap C(\overline{\QLT})$ and a subsequence of $u^\ep$ (still denoted by $u^\ep$) such that
\[
u^\ep\longrightarrow u \quad\hbox{weakly in}\quad W^{2,1}_p(\Q^{r})\quad\hbox{and uniformly in}\quad C(\overline{\QLT}).
\]
Then, $u$ is a solution to problem \eqref{u_pb} in $W^{2,1}_{p,\rm loc}\big(\Q_T\big)\bigcap C\big(\overline{\Q_T}\big)$.

Now, set
\[
v(x,t)=\int_0^xu(y,t)\d y,
\]
we come to prove $v$ is a solution to problem \eqref{v_pb00} in $\QLT$. The initial and boundary conditions are clearly satisfied. Owing to $v(0,t)=0$ and $v_t(0,t)=0$, together with the boundary condition in \eqref{u_pb} that $(\la u+u_x)(0,t)=0$, we have $(v_t-{\cal L} v)(0,t)=0$. Therefore,
\be\label{uv}
(v_t-{\cal L} v)(x,t)=(v_t-{\cal L} v)(0,t)+\int_0^x\p_x(v_t-{\cal L} v)(y,t)\d t=\int_0^x(u_t-{\cal T} u)(y,t)\d t \geq 0. \quad
\ee
On the other hand, if $v_x(x,t)=u(x,t)> 1$, then, by \eqref{uex}, $u(y,t)> 1$ for all $y\in[0,x]$, which implies $(u_t-{\cal T} u)(y,t)=0$ for $y\in[0,x]$, thus the inequality in \eqref{uv} becomes equality. Hence $v$ satisfies the variational inequality in problem \eqref{v_pb00} in $\QLT$.

Moreover, the estimates \eqref{vx}-\eqref{vga} follow from \eqref{ue}, \eqref{uex}, \eqref{uexx}, \eqref{uet} and \eqref{ue_mix}.

We come to ascertain the order of smoothness of $v$. Now, we already proved $v_x=u\in W^{2,1}_{p}(Q^r)\bigcap C\big(\overline{\QLT}\big)$. The Sobolev embedding theorem implies that $v_{xx}=u_x\in C\big(\overline{\Q^r}\big)$. Moreover, using the method in \cite{Fr75}, we can prove $v_{xt}=u_t$ is continuous passing through the free boundary.

Next, we prove the uniqueness. Suppose $v_1$, $v_2$ are two solutions to \eqref{v_pb00}. Set ${\cal N}=\{\p_xv_1>\p_xv_2\}$, then
\[
\begin{cases}
\p_tv_1-{\cal L}v_1=0,\quad \p_tv_2-{\cal L}v_2\geq0,\quad (x,t)\in {\cal N},\\
v_1=v_2=0,\quad (x,t)\in\p{\cal N}\cap \{x=0\},\\
\p_xv_1=\p_xv_2,\quad (x,t)\in\p{\cal N}\setminus (\{x=0\}\cup\{t=0\}\cup\{t=T\}),\\
v_1=v_2=x,\quad (x,t)\in\p{\cal N}\cap \{t=0\}.
\end{cases}\]
Apply the comparison principle for fully nonlinear equation (see \cite{Li96} Theorem 14.3) we have $v_2\geq v_1$ in ${\cal N}$, which implies
\[
\{\p_xv_1> \p_xv_2\}\subset\{v_2\geq v_1\},
\]
i.e.
\[
{\cal C}:=\{v_2< v_1\}\subset\{\p_xv_1\leq \p_xv_2\}.
\]
If ${\cal C}$ is nonempty, using the fact that $v_2=v_1$ on the left boundary of $\cal C$ and $\p_xv_1\leq \p_xv_2$ in ${\cal C}$, we get $v_1\leq v_2$ in ${\cal C}$, which is impossible. This completes the proof of the uniqueness.

Let $x_2$ be defined in \eqref{x2} and choose $L> x_2$. Using a similar argument in Section \ref{sec:D} leads to $v_x(x,t)=1$ for $x\in [x_2,L]$, so we can extend our solution to the unbounded domain $\QT$ by setting $v(x,t)=v(L,t)+(x-L)$ for $x>L$. Then after extension, $v\in C^{2,1}(\overline{\QT}\setminus\{(0,0)\})\bigcap C(\overline{\QT})$ is a unique solution to \eqref{v_pb00} in $\QT$. Moreover, the properties \eqref{vx}-\eqref{vga} remain true in $\QT$.

Furthermore, Lemma \ref{G} and the equation in $\{v_{x}>1\}$ implies $v_{xt}=u_t\in C\big(\{v_{x}>1\}\setminus\{(0,0)\}\big)$, so $v_{xt}\in C\big(\overline{\QT}\setminus\{(0,0)\}\big)$. Hence we have
\[
v,\;v_x\in C\big(\overline{\QT}\big),\quad v_{xx},\;v_t,\;v_{xt}\in C\big(\overline{\QT}\setminus\{(0,0)\}\big).
\]
This completes the proof.

\section{Proof of \thmref{thm:averi}.}\label{sec:verify}
\setcounter{equation}{0}

Denote $\WV(x,t)=v(x,T-t)$.
We first prove $\WV(x,t)\geq V(x,t)$. For any admissible retained loss policy $\cH^t=\{H_s(z)\}_{s\geq t}$ and a dividend-payout policy $\cL^t=\{L_s\}_{s\geq t}$, assume $R_s$ is the solution to \eqref{Rt} with the control pair $(\cH^t, \cL^t)$, and $\deathtime$ is the ruin time of $R_s$ defined by \eqref{tau}. Then by It\^o's formula,
\begin{align*}
\WV(x,t)&=\E\Bigg[e^{-c(T\wedge\deathtime-t)}\WV(R_{T\wedge\deathtime},T\wedge\deathtime)\Bigg]\\
&\quad\;+\E\Bigg[\int_{t}^{T\wedge \deathtime}e^{-c(s-t)}\Bigg(-\WV_t-\frac{\WV_{xx}}{2}\int_0^\infty H_s(z)^2 \d F(z)\\
&\quad\;\quad\quad\quad\quad\quad\quad\quad\quad\quad\quad\quad\quad
-\WV_x\int_0^\infty H_s(z) \d F(z)+\ga \WV_x+c\WV\Bigg)(R_{s-},s)\d s\Bigg]\\
&\quad\;+\E \Bigg[ \int_t^{T\wedge \deathtime}e^{-c(s-t)}\WV_x(R_{s-},s)\d L^c_s \Bigg]-\E \sum\limits_{t\leq s\leq T\wedge\deathtime}e^{-c(s-t)}(\WV(R_s,s)-\WV(R_{s-},s)),
\end{align*}
where $L^c_s$ is the continuous part of $L_s$. The first two expectations are non-negative since $\WV\geq0$ and $-\WV_t-\LL \WV\geq 0$ by \eqref{v_pb00}. Meanwhile, since $\WV_x\geq1$ and $R_{s}\leq R_{s-}$, we have
\begin{align*}
\E\Bigg[\int_t^{T\wedge \deathtime}e^{-c(s-t)}\WV_x(R_{s-},s)\d L^c_s \Bigg]\geq \E\Bigg[\int_t^{T\wedge \deathtime}e^{-c(s-t)}\d L^c_s \Bigg],
\end{align*}
and
\begin{align*}
\WV(R_s,s)-\WV(R_{s-},s)\leq R_s-R_{s-}=L_{s-}-L_s.
\end{align*}
Thus
\[
\WV(x,t)\geq
\E\Bigg[\int_t^{T\wedge \deathtime}e^{-c(s-t)}\d L^c_s+\sum\limits_{t\leq s\leq T\wedge\deathtime}e^{-c(s-t)}(L_s-L_{s-})\Bigg]=\E\Bigg[\int_{t-}^{T\wedge \deathtime}e^{-c(s-t)}\d L_s\Bigg].
\]
Since the policies are arbitrary chosen, it implies $\WV(x,t)\geq V(x,t)$.

We now prove the reverse inequality $\WV(x,t)\leq V(x,t)$. Let $d^*$ be defined as in the statement. Then
\[d^{*}(s)=\inf\{x\geq0\;|\;\WV_x(x,T-s)=1\},\quad s\in[0,T].\]
Because $\WV$ is concave in $x$ by hypothesis, it yields
\begin{align*}
\WV_x(x,s)>1,\quad {\rm if }\quad x< d^{*}(T-s).
\end{align*}
By the equation in \eqref{v_pb00}, it follows
\begin{align}\label{wvx}
-\WV_t-{\cal L}\WV=0,\quad {\rm if }\quad x\leq d^{*}(T-s),
\end{align}
where the $C^{2,1}$ continuity of $\WV$ ensures the above equation holds at $(d^{*}(T-s),s)$. Let $I^*$ and $L^*$ be defined as in the statement. Then
\[H_{s}^{*}(z, R^{*}_{s-})=z-I_{s}^{*}(z, R^{*}_{s-})=h^{*}\left(z,-\frac{\WV_{xx}(R^{*}_{s-},s)}{\WV_{x}(R^{*}_{s-},s)}\right),\]
where $h^{*}$ is defined by \eqref{h*}. With the above feedback controls $H^*$ and $L^*$, using the property that $R^*_s\leq d^{*}(T-s)$ for $s>t$, one can show \eqref{Rt} admits a strong solution $R^*_s$. Moreover, $R^*_s$ is continuous for $s>t$ and it is inside the closure of the waiting region all the time. 
Denote $\deathtime^*$ be the corresponding ruin time.

Now we show that the controls defined above are indeed the optimal controls. By It\^o's formula,
\begin{align}\label{itofor}
\WV(x,t) &=\E\Bigg[e^{-c(T\wedge\deathtime^*-t)}\WV(R^*_{T\wedge\deathtime^*},T\wedge\deathtime^*)\Bigg]\nonumber\\
&\quad\;+\E\Bigg[\int_t^{T\wedge \deathtime^*}e^{-c(s-t)}\Bigg(-\WV_t-\frac{\WV_{xx}}{2}\int_0^\infty H^*_s(z, R^{*}_{s-})^2 \d F(z)\nonumber\\
&\quad \quad\quad\quad\quad\quad\quad\quad\quad\quad\quad\quad\quad
-\WV_x\int_0^\infty H^*_s(z, R^{*}_{s-}) \d F(z)+\ga \WV_x+c\WV\Bigg)(R^*_{s-},s)\d s\Bigg]\nonumber\\
&\quad\;+\E \Bigg[ \int_t^{T\wedge \deathtime^*}e^{-c(s-t)}\WV_x(R^*_{s-},s)\d L^{*c}_s \Bigg]-\E \Bigg[ \sum\limits_{t\leq s\leq T\wedge\deathtime^*}e^{-c(s-t)}(\WV(R^*_s,s)-\WV(R^*_{s-},s))\Bigg].
\end{align}

If $\deathtime^{*}<T$, then $R^*_{T\wedge\deathtime^*}=R^*_{\deathtime^*}=0$, so $\WV(R^*_{T\wedge\deathtime^*},T\wedge\deathtime^*)
=\WV(0, \deathtime^*)=0$ by the boundary condition in \eqref{v_pb00}. Otherwise $\WV(R^*_{T\wedge\deathtime^*},T\wedge\deathtime^*)
=\WV(R^*_{T},T)=0$, again by the boundary condition in \eqref{v_pb00}. Therefore, the first expectation in \eqref{itofor} is zero.

By our choice of $H^{*}$, we see that at $(R^*_{s-},s)$
\begin{align*}
&\quad\;-\WV_t-\frac{\WV_{xx}}{2}\int_0^\infty H^*_s(z, R^{*}_{s-})^2 \d F(z) -\WV_x\int_0^\infty H^*_s(z, R^{*}_{s-}) \d F(z)+\ga \WV_x+c\WV\\
&=-\WV_t-\sup\limits_{H\in \aH}\left(\frac{ \WV_{xx}}{2}\int_0^\infty H(z)^2 \d F(z)+\WV_x\int_0^\infty H(z) \d F(z) \right)+\ga \WV_x+c\WV=-\WV_t-{\cal L}\WV.
\end{align*}
Because $R^*_{s-}\leq d^{*}(T-s)$ for $s>t$ and noticing \eqref{wvx}, we conclude that the second expectation in \eqref{itofor} is also zero.

Now we are left with
\begin{align*}
\WV(x,t)=\E \Bigg[ \int_t^{T\wedge \deathtime^*}e^{-c(s-t)}\WV_x(R^*_{s-},s)\d L^{*c}_s \Bigg]-\E \Bigg[\sum\limits_{t\leq s\leq T\wedge\deathtime^*}e^{-c(s-t)}(\WV(R^*_s,s)-\WV(R^*_{s-},s))\Bigg].
\end{align*}
Notice $\d L^{*c}_s=0$ unless $R^*_{s-}=d^{*}(T-s)$, so
\[\WV_x(R^*_{s-},s)\d L^{*c}_s=\WV_x(d^{*}(T-s),s)\d L^{*c}_s=\d L^{*c}_s. \]
Recall that $R^*_s$ is continuous for $s>t$, so
\begin{align*}
\E \Bigg[\sum\limits_{t\leq s\leq T\wedge\deathtime^*}e^{-c(s-t)}(\WV(R^*_s,s)-\WV(R^*_{s-},s))\Bigg]
=\E\Big[\WV(R^*_t,t)-\WV(R^*_{t-},t)\Big].
\end{align*}
Putting the above three equations together, we have
\begin{align*}
\WV(x,t) &=\E \Bigg[ \int_t^{T\wedge \deathtime^*}e^{-c(s-t)} \d L^{*c}_s \Bigg]- \E\Big[\WV(R^*_t,t)-\WV(R^*_{t-},t)\Big]\\
&=\E \Bigg[ \int_{t-}^{T\wedge \deathtime^*}e^{-c(s-t)} \d L_s^* \Bigg]-\E\Big[(L_t^*- L^*_{t-})+\WV(R^*_t,t)-\WV(R^*_{t-},t)\Big].
\end{align*}
If $R^{*}_{t-}\leq d^{*}(T-t)$, then $L^*_{t}-L^*_{t-}=0$ and $R^*_t=R^*_{t-}$, so
\[(L_t^*- L^*_{t-})+\WV(R^*_t,t)-\WV(R^*_{t-},t)=0\]
If $R^{*}_{t-}> d^{*}(T-t)$, then $R^*_t=d^{*}(T-t)$. Because $\WV_x(y,t)=1$ for $y\geq d^{*}(T-t)$, we also obtain
\[(L_t^*- L^*_{t-})+\WV(R^*_t,t)-\WV(R^*_{t-},t)=(L_t^*- L^*_{t-})+R^{*}_{t}-R^{*}_{t-}=0.\]
Now we conclude that
\begin{align*}
\WV(x,t) &=\E \Bigg[ \int_{t-}^{T\wedge \deathtime^*}e^{-c(s-t)} \d L^{*}_s \Bigg].
\end{align*}
The right hand side is by definition no more than $V(x,t)$. This completes the proof of \thmref{thm:averi}.

\section{Proof of \lemref{prop:vR}.}\label{sec:proof2}
\setcounter{equation}{0}

By \eqref{v_pb00},
\be\label{vND}
v_t-A\left(-\frac{v_{xx}}{v_{x}}\right) v_{xx}-B\left(-\frac{v_{xx}}{v_{x}}\right)v_x+\ga v_x+cv=0\quad \hbox{in}\quad \ND.
\ee
Differentiating \eqref{vND} w.r.t. $x$ and $t$, respectively, using \eqref{Ap} we obtain
\be\label{vxND}
v_{tx}-\left[A\left(-\frac{v_{xx}}{v_{x}}\right) v_{xxx}+B\left(-\frac{v_{xx}}{v_{x}}\right)v_{xx} \right]+\ga v_{xx}+cv_x=0\quad \hbox{in}\quad \ND,
\ee
and
\be\label{vtND}
v_{tt}-\left[A\left(-\frac{v_{xx}}{v_{x}}\right) v_{xxt}+B\left(-\frac{v_{xx}}{v_{x}}\right)v_{xt} \right]+\ga v_{xt}+cv_t=0\quad \hbox{in}\quad \ND.
\ee
Since
\[
0<A(\la)\leq A\left(-\frac{v_{xx}}{v_{x}}\right)\leq \mu_2,\quad B\left(-\frac{v_{xx}}{v_{x}}\right)\leq \mu_2
\]
and
\[
A\left(-\frac{v_{xx}}{v_{x}}\right),\;B\left(-\frac{v_{xx}}{v_{x}}\right)\in C^{\al,\al/2}(\Q_T)
\]
(owing to that $v_x$, $v_{xx}\in C^{\al,\al/2}(\Q_T)$ and Lemma \ref{G}), we can apply the Schauder estimate (see \cite{Li96} Theorem 4.23) to \eqref{vxND} and \eqref{vtND}, respectively, to obtain
\[
v_x,\; v_t\in C^{2+\al,1+\al/2}(\ND).
\]
Suppose $v_{xx}(x_{0},t_{0})=0$ for some $(x_{0},t_{0})\in\ND$. Because $v_{xx}\leq 0$, $(x_{0},t_{0})$ is a maximizer point for $v_{xx}$. Hence the first order condition gives $v_{xxx}(x_{0},t_{0})=0$. By \eqref{vxND},
\[v_{tx}(x_{0},t_{0})+cv_{x}(x_{0},t_{0})=0,\]
which contradicts \eqref{vx} and \eqref{vxt}. Therefore $v_{xx}<0$ in $\ND$.


\section{Proof of \lemref{prop:ax}.}\label{sec:proof3}
\setcounter{equation}{0}

Equation \eqref{A0} is derived from the boundary condition in \eqref{u_pb}. And \eqref{A} is an immediate consequence of \eqref{A0} and \eqref{hx}. So it is only left to prove \eqref{hx}. In $\ND$, the equation in \eqref{u_pb} holds, i.e.,
\bee
v_{xt}-A\left(-\frac{v_{xx}}{v_x}\right) v_{xxx}-B\left(-\frac{v_{xx}}{v_x}\right)v_{xx}+\ga v_{xx}+cv_x=0.
\eee
Because
\[
v_{xxx}=\left[\p_x\left(-\frac{v_{x}}{v_{xx}}\right)+1\right]\frac{v_{xx}^2}{v_x},
\]
it follows
\bee
v_{xt}-A\left(-\frac{v_{xx}}{v_x}\right) \left[\p_x\left(-\frac{v_{x}}{v_{xx}}\right)+1\right]\frac{v_{xx}^2}{v_x}-B\left(-\frac{v_{xx}}{v_x}\right)v_{xx}+\ga v_{xx}+cv_x=0.
\eee
Dividing $v_{x}$ yields
\bee
-A\left(-\frac{v_{xx}}{v_x}\right) \left[\p_x\left(-\frac{v_{x}}{v_{xx}}\right)+1\right]\frac{v_{xx}^2}{v_x^{2}}-B\left(-\frac{v_{xx}}{v_x}\right)\frac{v_{xx}}{v_x}+\ga \frac{v_{xx}}{v_x}+c=-\frac{v_{xt}}{v_x}\leq 0,
\eee
by \eqref{vx} and \eqref{vxt}.
Denote $\nu=-v_x/v_{xx}$. Then the above inequality reads
\bee
\frac{A\left(\frac{1}{\nu}\right)(\nu_x+1)}{2\nu^2}-\frac{B\left(\frac{1}{\nu}\right)}{\nu}+\frac{\ga}{\nu}-c\geq0,
\eee
so
\be\label{nux}
\nu_x\geq\frac{1}{A(1/\nu)}\left[\left(-\frac{1}{\nu}A\left(\frac{1}{\nu}\right)+B\left(\frac{1}{\nu}\right)-\ga\right)\nu+c\nu^2\right]
=\frac{1}{A( 1/\nu)}\left[f\left(\frac{1}{\nu}\right)\nu+c\nu^2\right].
\ee
By \eqref{vga}, $\nu\geq 1/\la>0$. By \eqref{fp}, $f$ is decreasing, so $f(1/\nu)\geq f(\la)=0$. Together with \eqref{A_b} we obtain $\nu_x\geq 2c$.
This completes the proof.

\end{appendices}

\newpage

\bibliographystyle{plainnat}

\end{document}